\providecommand{\U}[1]{\protect\rule{.1in}{.1in}}
\providecommand{\U}[1]{\protect\rule{.1in}{.1in}}
\providecommand{\U}[1]{\protect\rule{.1in}{.1in}}
\newtheorem{theorem}{Theorem}[section]
\newtheorem{proposition}[theorem]{Proposition}
\newtheorem{corollary}[theorem]{Corollary}
\newtheorem{remark}[theorem]{Remark}
\newtheorem{lemma}[theorem]{Lemma}
\newtheorem{definition}[theorem]{Definition}
\providecommand{\keywords}[1]{\textbf{\textit{Index terms---}} #1}
\begin{document}

\title{Separability for Weak Irreducible matrices}
\author{Daniel Cariello}
\thanks{PACS numbers: 03.65.Ud, 03.67.Mn}
\keywords{Irreducible Matrix, Separable Matrix, Hermitian Schmidt Decomposition, PPT, Schur Product, Tensor Rank}

\subjclass[2010]{15B48, 15B57}

\date{}
\maketitle

\begin{abstract}
This paper is devoted to the study of the separability problem in the field of Quantum information theory.  We deal mainly with the bipartite finite dimensional case and with two types of matrices, one of them being the PPT matrices (see definitions \ref{definitionSPC} and \ref{definitionPPT}). 
We proved that many results holds for both types. If these matrices have specific Hermitian Schmidt decompositions then  the matrices are separable in a very strong sense (see theorem \ref{theomiracle1} and corollary \ref{corollarytheomiracle1}). We proved that both types have what we call \textbf{split decompositions} (see theorem \ref{splitdecompositionSPC} and \ref{splitdecompositionPPT}). 

We defined the notion of weak irreducible matrix (see definition \ref{definitionweakirreducible}), based on the concept of irreducible state defined recently in \cite{chen1} and \cite{chen2}.
These split decomposition theorems together with the notion of weak irreducible matrix, imply that these matrices are weak irreducible or a sum of weak irreducible matrices of the same type (see theorem \ref{theoremsumofweakirreducible}). The separability problem for these types of matrices can be reduced to the set of weak irreducible matrices of the same type (see proposition \ref{reductionfinal}).
We also provided a complete description of weak irreducible matrices of both types (see theorem \ref{conditionsirreducible2}).

Using the fact that every positive semidefinite Hermitian matrix with tensor rank 2 is separable (see theorem \ref{theoreminimalseparabilitytensorrank2}), we found sharp inequalites providing separability for both types (see theorems \ref{inequality1} and \ref{inequality2}).

\end{abstract}

\tableofcontents

\section*{Introduction}
The Separability Problem is a well established problem in the field of Quantum Information Theory by its importance and difficulty. Its aim is to find a criterion to distinguish the separable states from the entangled states (see definition \ref{definitionseparability}). In this paper we deal only with the bipartite finite dimensional case, therefore the states are elements in $M_{k}\otimes M_{m}$, which can be interpreted as matrices in $M_{ km}$ via the Kronecker product (see notation at the end of this introduction).

This problem was completely solved by Horodecki in the space $M_{2}\otimes M_{m}$ for $m=2$ or $3$, by the so-called PPT criterion (see \cite{horodeckifamily}). This criterion states that a matrix $ A=\sum_{i=1}^k A_i\otimes D_i\in M_{2}\otimes M_{m}$ (for $m=2$ or $3$) is separable if and only if $A$ and $A^{t_1}$ are positive semidefinite Hermitian matrices, where $A^{t_1}=\sum_{i=1}^k A_i^t\otimes D_i$ is the partial transposition of $A$. Such matrices are called PPT matrices (see definition \ref{definitionPPT}).

The general case, even the finite dimensional case, still is a great challenge. Algorithms were developed to solve the separability problem, but it is known that this problem is NP-hard (see \cite{gurvits}).

Therefore any restriction of the problem to a smaller set of matrices is, certainly, important. For example, Peres in \cite{peres} was the first to notice the importance of the PPT property which was later proved to be necessary and sufficient for separability in $M_{2}\otimes M_{m}$ for $m=2$ or $3$, in \cite{horodeckifamily}.

Another remarkable reduction was obtained in \cite{leinaas} for the positive definite case  in $M_{k}\otimes M_{m}$. The authors proved that  to find the separable positive definite Hermitian matrices we only need to distinguish the separable matrices among the positive definite matrices of the following type:

$$ Id\otimes Id+\sum_{i=1}^{l}a_i E_i\otimes F_i$$
where $tr(E_i)=tr(F_i)=0$, $\{E_1,...,E_l\},\ \{F_1,...,F_l\}$ are orthonomal sets of Hermitian matrices with respect to the trace inner product and $a_i\in\mathbb{R}$.

They also obtained a remarkable reduction of the separability problem in $M_{2}\otimes M_{2}$ for the general case, not only for the positive definite case. They  showed that, in order to solved it, it suffices to discover which matrices from the following family of matrices are separable.
$$ Id\otimes Id+d_2\gamma_2\otimes\gamma_2+d_3\gamma_3\otimes\gamma_3+d_4\gamma_4\otimes\gamma_4,$$
where $d_2,d_3,d_4\in\mathbb{R}$ and $\gamma_2,\gamma_3, \gamma_4$ are the matrices of the Pauli's basis of $M_{2}$ different from the $Id$.

They proved that a matrix in this family is separable if and only if is PPT, and if and only if $|d_2|+|d_3|+|d_4|\leq 1$.  
This is a second proof of the PPT criterion in  $M_{2}\otimes M_{2}$.

Another interesting kind of reduction is based on the concept of irreducible state developed (and defined) in \cite{chen1}, \cite{chen2}. The authors of these papers noticed that the separability problem can be reduced to the set of irreducible states (see corollary 16 in \cite{chen1}) and the states do not need to be positive definite. But in this paper there is no description of irreducible states. It is not possible to ensure if a state is irreducible or not with the ideas of these papers. Thus, this reduction is quite different from the reduction mentioned in the previous paragraph, since we do not have a description of the set of matrices where the separability problem was reduced. 

Our paper is an effort to generalize the result obtained in \cite{leinaas} for the general case (positive semidefinite case),  using a weaker concept of irreducible matrix (see definition \ref{definitionweakirreducible}).

We defined SPC matrices and our main results concern SPC matrices and PPT matrices (see definitions \ref{definitionSPC} and \ref{definitionPPT}).
We started to prove in section 2 that many results holds for both types. For example, if these matrices have specific Hermitian Schmidt decompositions then these matrices are separable in a very strong sense (see theorem \ref{theomiracle1} and corollary \ref{corollarytheomiracle1}). We proved that both types have what we call \textbf{split decompositions} (see theorem \ref{splitdecompositionSPC} and \ref{splitdecompositionPPT}). 

These theorems together with the notion of weak irreducible matrix, in section 3, imply that these matrices are weak irreducible or a sum of weak irreducible matrices of the same type (see theorem \ref{theoremsumofweakirreducible}) and the separability problem for these types of matrices can be reduced to the set of weak irreducible matrices with the same type (see proposition \ref{reductionfinal}). Notice that since a necessary condition for separability is to be PPT, we obtained quite a general reduction.

In a different manner as that of  the papers \cite{chen1} and \cite{chen2} and similarly to the paper \cite{leinaas}, we can provide a complete description of the set where we reduced the separability problem, in our case the set of weak irreducible SPC/PPT matrices (see theorem \ref{conditionsirreducible2}).  We discovered a condition based on the format of the Hermitian Schmdit decomposition of a positive semidefinite matrix that is sufficient for weak irreducibility(see theorem \ref{conditionsirreducible}). However, if the matrix is SPC or PPT, then the condition is also necessary (see theorem \ref{conditionsirreducible2}). It is interesting to notice that the family of positive definite matrices in the reduction obtained in \cite{leinaas} are all weak irreducible, because of corollary \ref{corollarylemmainequality} and also because satisfy the condition in theorem \ref{conditionsirreducible}.

In section 4, we showed that every positive semidefinite Hermitian matrix with tensor rank 2 has minimal separable decomposition, therefore it is separable (we made an application of this minimality to prove a similar result for the multipartite case, see theorem \ref{extension}). We shall use this result in section 5 to obtain two sharp inequalities that provide separability for SPC/PPT matrices. Again, since a necessary condition for separability is to be PPT, we obtained  quite a general inequality. These inequalities are generalizations of the inequality $|d_2|+|d_3|+|d_4|\leq 1$ mentioned before.

Finally, in our \textit{preliminary results}, we explict some ideas that are recurrent is this paper.  A recurrent need is the use of what we call $*-product$. Quite a few times, we shall use some properties of this $*-product$, for example, the $*-product$ of two positive semidefinite Hermitian matrices is also a positive semidefinite Hermitian matrix. We shall also use the minimal Hermitian decomposition and the Hermitan Schmidt decomposition of a Hermitian matrix, because these decompositions simplify our calculations, but they are also important for the description of weak irreducible matrices as mentioned before. For this purpose, we wrote quick subsections about these decompositions. In these subsections, we brought to the attention of the reader  a somewhat old result (theorem \ref{theoremprincipal})  and the definitions of the supports of a Hermitian matrix (see definition \ref{supp}), these two results are employed several times.
We tried to leave this paper self-contained, thus many of these \textit{preliminary results} are, therefore, not original.\\

\textbf{Notation:}  
The tensor product space of the vector spaces V, W over the field $\mathbb{C}$ will be denoted by $V\otimes W$.
We identify the tensor produt space $\mathbb{C}^n\otimes\mathbb{C}^k$ with $\mathbb{C}^{nk}$ and the tensor product space of complex matrices $M_{k}\otimes M_{m}$ with $M_{km}$, via Kronecker product. These identifications allow us to write $(v\otimes w)(r\otimes s)^t= vr^t\otimes ws^t$, where $v\otimes w$ is a column and $(v\otimes w)^t$ its transpose.  Therefore if $v,w\in\mathbb{C}^n\otimes\mathbb{C}^m$ we have $vw^t\in M_n\otimes M_m$.
The trace of a matrix $A$ is denoted by $tr(A)$ and $A^t$ is the transpose of $A$. 
\section{Preliminary Results}

We start this section with a subsection about the $*-product$ in $\mathbb{C}^n\otimes\mathbb{C}^m$ and in $M_{n}\otimes M_{m}$. These two products are examples of a common construction in tensor product spaces. If we consider $a\otimes b$ and $c\otimes d$, we can define a product $a\otimes b \times c\otimes d= f(b,c)\ a\otimes d$, where $f$ is a bilinear functional. From here, we can define for arbitrary tensors by means of distributivity.

 The first product is like the matrix product but for tensors in $\mathbb{C}^n\otimes\mathbb{C}^m$. The second one is a generalization of the Schur product. 

These products are very useful for research, notice how easily  proposition \ref{choitheorem} and lemma \ref{lemmapositive} were obtained. Actually, everytime we needed an argument about completely positive maps, we replaced it by a multiplication for a suitable positive semidefinite matrix.

Then we discuss some types of Hermitian decompositions: The minimal Hermitian decomposition and the Hermitian Schmidt decomposition. These decompositions simplify our calculations with the $*-product$ in $M_{n}\otimes M_{m}$ but they are also  important for another reason. We actually obtained a description of weak irreducible matrices, a concept that were exploited in section 3, based on the format of the Hermitian Schimidt decompositions of the matrices involved (see theorem \ref{conditionsirreducible2}).

\subsection{The Generalized Schur Product: $*-product$}
\indent\\

For the sake of completeness, in this subsection we proved that $*-product$ is distributive, associative, has an identity element and is well defined. This identity element plays an important role in Choi's theorem (see proposition \ref{choitheorem}).

\begin{definition}$($ The $*-product$ in $\mathbb{C}^n\otimes\mathbb{C}^m$ $)$: Let $v\in\mathbb{C}^n\otimes\mathbb{C}^m$ and $w\in\mathbb{C}^m\otimes\mathbb{C}^l$. Let $\displaystyle v=\sum_{i=1}^k v_i\otimes r_i,\ w=\sum_{j=1}^t w_j\otimes s_j$. Define $\displaystyle v*w=\sum_{i,j}v_i\otimes s_jtr(w_jr_i^t)$.\\
\end{definition}

\begin{proposition} \label{defu}   Let $\{e_i|1\leq i\leq m\}$ be the canonical basis of $\mathbb{C}^m$ and $\{v_i|1\leq i\leq m\}$ any other orthonormal basis of $\mathbb{C}^m$. We then have:
\begin{enumerate}
	\item $\displaystyle u=\sum_{i=1}^m e_i\otimes e_i=\sum_{i=1}^m v_i\otimes \overline{v_i}$ in $\mathbb{C}^m\otimes\mathbb{C}^m$.
	\item $u=\sum_{i=1}^m v_i\otimes \overline{v_i}$ is the identity in $\mathbb{C}^m\otimes\mathbb{C}^m$ with respect to the $*-product$.
\end{enumerate} 
\end{proposition}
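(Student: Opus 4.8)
The plan is to establish the two parts in order: first the basis-independence identity in part (1), then the identity-element property in part (2), with the latter leaning on the former.

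For part (1), I would expand the vectors of the new basis in the canonical one, writing $v_i=\sum_{j=1}^m c_{ij}e_j$, and record the change-of-basis matrix $C=(c_{ij})$. Because both $\{e_i\}$ and $\{v_i\}$ are orthonormal, $C$ is unitary, so $CC^*=C^*C=Id$. Taking complex conjugates entrywise gives $\overline{v_i}=\sum_k\overline{c_{ik}}\,e_k$, and substituting both expansions yields
\[
\sum_{i=1}^m v_i\otimes\overline{v_i}=\sum_{j,k}\Bigl(\sum_{i=1}^m c_{ij}\overline{c_{ik}}\Bigr)e_j\otimes e_k .
\]
The inner sum is exactly the $(k,j)$ entry of $C^*C$, which equals $\delta_{jk}$ by unitarity; hence the double sum collapses to $\sum_j e_j\otimes e_j=u$. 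The one point to get right here is that it is the conjugate $\overline{v_i}$, not $v_i$ itself, that turns orthonormality into the needed cancellation: the trace pairing used to define the $*-product$ is bilinear (it carries no conjugation), so the conjugation must be supplied by $u$.

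For part (2), by part (1) it suffices to check that the canonical form $u=\sum_i e_i\otimes e_i$ is a two-sided identity for the $*-product$. Writing $w=\sum_j w_j\otimes s_j$ with $w_j\in\mathbb{C}^m$, the definition gives $u*w=\sum_{i,j}e_i\otimes s_j\,tr(w_j e_i^t)$. The key computation is $tr(w_j e_i^t)=e_i^t w_j=(w_j)_i$, the $i$-th coordinate of $w_j$, so that summing over $i$ reassembles $\sum_i (w_j)_i e_i=w_j$ and leaves $u*w=\sum_j w_j\otimes s_j=w$. The right identity $v*u=v$ for $v=\sum_i v_i\otimes r_i$ is verified symmetrically, using $tr(e_k r_i^t)=(r_i)_k$.

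Since the routine work is confined to these coordinate computations, I do not expect a genuine obstacle. The only place demanding care is the bookkeeping of conjugates and indices in part (1), where one must confirm that unitarity is invoked in the direction $C^*C=Id$ (equivalently $\sum_i c_{ij}\overline{c_{ik}}=\delta_{jk}$), which follows from the relation $CC^*=Id$ read off directly from orthonormality.
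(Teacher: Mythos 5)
Your proof is correct, but it takes a different route from the paper's. The paper proves both items at once through the isomorphism $T:M_m\rightarrow\mathbb{C}^m\otimes\mathbb{C}^m$, $T(vw^t)=v\otimes w$: item (1) follows because $\sum_i e_ie_i^t=Id=\sum_i v_i\overline{v_i}^t$ (the resolution of the identity for any orthonormal basis), and item (2) follows from the multiplicativity $T(AB)=T(A)*T(B)$, so that $u=T(Id)$ inherits the identity property of $Id$ under matrix multiplication. Your argument replaces this with bare-hands coordinate work: unitarity of the change-of-basis matrix ($C^*C=Id$, correctly deduced from $CC^*=Id$, which is what orthonormality gives directly) for part (1), and direct expansion of the $*$-product definition for part (2). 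The content is ultimately the same — your identity $\sum_i c_{ij}\overline{c_{ik}}=\delta_{jk}$ is exactly the $(j,k)$ entry of $\sum_i v_i\overline{v_i}^t=Id$ — but the paper's packaging is slicker and sets up machinery ($T$ intertwining matrix product and $*$-product) that recurs later, e.g.\ in the proof that $u\overline{u}^t$ is the identity in $M_m\otimes M_m$. On the other hand, your verification is slightly more thorough where it counts: you check that $u$ is a two-sided identity and you allow the second tensor factor to live in an arbitrary $\mathbb{C}^l$, whereas the paper's argument via $T$ only exhibits the left-identity property on $\mathbb{C}^m\otimes\mathbb{C}^m$, leaving the symmetric check implicit.
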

\begin{proof} Consider the isomorphism $T:M_m\rightarrow \mathbb{C}^m\otimes\mathbb{C}^m$
\begin{center}
$T(\sum_{i=1}^k v_iw_i^t)=\sum_{i=1}^k v_i\otimes w_i$.
\end{center} 
Notice that $T^{-1}(\sum_{i=1}^m e_i\otimes e_i)=Id=T^{-1}(\sum_{i=1}^m v_i\otimes \overline{v_i})$, therefore item 1 is proved, since $T$ is an isomorphism.

Notice also that $T(AB)=T(A)*T(B)$. Therefore  $T(B)=T(Id B)=T(Id)*T(B)=u*T(B)$, and item 2 is proved.
\end{proof}

\begin{definition}$($The $*-product$ in $M_{n}\otimes M_{m}$ $)$:\  Let $A\in M_{n}\otimes M_{m}$ and $B\in M_{m}\otimes M_{l}$.\\ Let $\displaystyle A=\sum_{i=1}^k A_i\otimes D_i,\ B=\sum_{j=1}^t B_j\otimes C_j$. Define $\displaystyle A*B=\sum_{i,j}A_i\otimes C_jtr(D_iB_j^t)$.\\
\end{definition}

\begin{proposition} \label{rank1product} Let $v,w\in\mathbb{C}^n\otimes\mathbb{C}^m$ and $r,s\in\mathbb{C}^m\otimes\mathbb{C}^k$. The $*-product$ of the rank 1 matrices $vw^t$ and $rs^t$ is $vw^t*rs^t=(v*r)(w*s)^t$.
\end{proposition}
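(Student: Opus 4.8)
The plan is to exploit the fact that both sides of the claimed identity are multilinear in the four vectors $v,w,r,s$, so that it suffices to verify the equality when each of them is a simple tensor. The assignment $(v,w)\mapsto vw^t$ is bilinear, the assignment $(r,s)\mapsto rs^t$ is bilinear, and the $*-product$ in $M_n\otimes M_m$ is defined by distributivity and is thus bilinear; hence the left-hand side $vw^t*rs^t$ is linear in each of $v,w,r,s$ separately. The same holds for the right-hand side, since $v*r$ is bilinear in $(v,r)$, $w*s$ is bilinear in $(w,s)$, and $(x,y)\mapsto xy^t$ is bilinear. Writing $v,w$ as sums of simple tensors in $\mathbb{C}^n\otimes\mathbb{C}^m$ and $r,s$ as sums of simple tensors in $\mathbb{C}^m\otimes\mathbb{C}^k$ and expanding, the whole identity reduces to the case $v=v_1\otimes a$, $w=w_1\otimes b$, $r=r_1\otimes c$, $s=s_1\otimes d$. (Equivalently, one could skip the reduction and expand general decompositions directly, at the cost of heavier indexing.)

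In that reduced case I would first rewrite the two rank $1$ matrices as single tensors in $M_n\otimes M_m$ and $M_m\otimes M_k$ using the identification recorded in the Notation, $(x\otimes y)(z\otimes w)^t=xz^t\otimes yw^t$, obtaining $vw^t=(v_1w_1^t)\otimes(ab^t)$ and $rs^t=(r_1s_1^t)\otimes(cd^t)$. These are precisely of the form $A_1\otimes D_1$ and $B_1\otimes C_1$ demanded by the definition of the $*-product$ in $M_n\otimes M_m$, so that definition gives directly
$$vw^t*rs^t=(v_1w_1^t)\otimes(cd^t)\,tr\big(ab^t(r_1s_1^t)^t\big).$$
The trace collapses to a product of two scalars: pulling out $b^ts_1$ and using $tr(ab^t)=a^tb$ one finds $tr(ab^ts_1r_1^t)=(b^ts_1)(a^tr_1)$.

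For the right-hand side I would compute $v*r$ and $w*s$ from the definition of the $*-product$ in $\mathbb{C}^n\otimes\mathbb{C}^m$, which yields the single tensors $v*r=(a^tr_1)\,v_1\otimes c$ and $w*s=(b^ts_1)\,w_1\otimes d$. Forming the outer product and applying the same identification gives $(v*r)(w*s)^t=(a^tr_1)(b^ts_1)\,(v_1w_1^t)\otimes(cd^t)$, which is exactly the expression found for the left-hand side, so comparing the two completes the verification. I do not expect a genuine obstacle: the only point requiring care is the bookkeeping of the scalar trace factors, ensuring that the single scalar $tr(D_iB_j^t)$ of the matrix $*-product$ matches the product of the two scalars $tr(w_jr_i^t)$ arising from the separate vector $*-products$. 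The identity $(x\otimes y)(z\otimes w)^t=xz^t\otimes yw^t$ and the elementary fact $tr(ab^t)=a^tb$ are the only inputs, both available from the preliminary material.
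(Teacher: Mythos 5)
Your proof is correct and follows essentially the same route as the paper's: both expand everything into simple tensors, compute the two sides from the definitions of the vector and matrix $*$-products, and match the scalar trace factors term by term. The only cosmetic difference is that the paper expands $v,w,r,s$ directly in the canonical basis of $\mathbb{C}^m$ (so the traces collapse by orthonormality in one computation), whereas you first reduce to arbitrary simple tensors by multilinearity; the underlying bookkeeping is identical.
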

\begin{proof} Let $\{e_i|1\leq i\leq m\}$ be the canonical basis of $\mathbb{C}^m$. Write 
\begin{center}
$\displaystyle v=\sum_{i=1}^mv_i\otimes e_i,\ \ w=\sum_{j=1}^m w_j\otimes e_j,\ \ r=\sum_{k=1}^m e_k\otimes r_k,\ \ s=\sum_{l=1}^me_l\otimes s_l$.
\end{center}
Therefore $$\displaystyle vw^t=\sum_{i,j}v_iw_j^t\otimes e_ie_j^t, \ \ rs^t=\sum_{k,l} e_ke_l^t\otimes r_ks_l^t, \ \ vw^t*rs^t= \sum_{i,j}v_iw_j^t\otimes r_is_j^t.$$
Now $\displaystyle (v*r)(w*s)^t=(\sum_{i=1}^mv_i\otimes r_i)(\sum_{j=1}^mw_j\otimes s_j)^t=\sum_{i,j}v_iw_j^t\otimes r_is_j^t.$
\end{proof}

\begin{proposition}\label{defuut} Let $u$ be the vector defined in proposition $\ref{defu}$. The rank one matrix $u\overline{u}^t$ is the identity
in $M_{m}\otimes M_{m}$ with respect to the $*-product$.
\end{proposition}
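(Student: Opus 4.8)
The plan is to verify the identity property directly on rank one matrices and then extend by distributivity. Under the identification $M_m\otimes M_m=M_{m^2}$, every matrix is a finite sum of rank one matrices $rs^t$ with $r,s\in\mathbb{C}^m\otimes\mathbb{C}^m$, and the $*$-product is distributive; hence it suffices to check that $u\overline{u}^t*rs^t=rs^t$ and $rs^t*u\overline{u}^t=rs^t$ for every such $r,s$.

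First I would record the elementary observation that $\overline{u}=u$: since $u=\sum_{i=1}^m e_i\otimes e_i$ has real ($0$--$1$) coordinates in the canonical basis, complex conjugation fixes it. This lets me treat both factors of the rank one matrix $u\overline{u}^t$ as the same vector $u$. Applying Proposition \ref{rank1product} with the rank one matrices $u\overline{u}^t$ and $rs^t$ then gives
$$u\overline{u}^t*rs^t=(u*r)(\overline{u}*s)^t=(u*r)(u*s)^t.$$
Now Proposition \ref{defu} asserts that $u$ is the identity of the $*$-product on $\mathbb{C}^m\otimes\mathbb{C}^m$, so $u*r=r$ and $u*s=s$, whence $u\overline{u}^t*rs^t=rs^t$. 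The same computation performed on the right, together with the fact that $u$ is a two-sided identity in $\mathbb{C}^m\otimes\mathbb{C}^m$, yields $rs^t*u\overline{u}^t=(r*u)(s*\overline{u})^t=rs^t$. Extending both equalities over arbitrary sums of rank one matrices by distributivity completes the argument.

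The computation is essentially immediate once Propositions \ref{defu} and \ref{rank1product} are available, so I expect no genuine obstacle; the only point requiring a moment's care is the interaction of the conjugation with the $*$-product in the second factor, which is handled cleanly by the observation $\overline{u}=u$. In effect, this proposition is a direct corollary of the two preceding results.
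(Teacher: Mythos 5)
Your proof is correct and follows exactly the paper's route: the paper's own proof consists of the observation $u=\overline{u}$ followed by a direct appeal to propositions \ref{defu} and \ref{rank1product}. You have simply written out the details (the reduction to rank one matrices by distributivity and the two-sided check) that the paper leaves implicit.
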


\begin{proof} Notice that $u=\overline{u}$. This proposition follows directly from propositions \ref{defu} and \ref{rank1product}. 
\end{proof}

It is easy to notice that the $*-product$ in $M_{n}\otimes M_{m}$ is distributive, to prove the associativity of this product we use the following familiar maps.

\begin{definition}\label{defmaps} Let $A\in M_{n}\otimes M_{m}$ and $A=\displaystyle\sum_{i=1}^k A_i\otimes D_i$.
Define 
\begin{center}
$F_A:M_{m}\rightarrow M_{n}\hspace{3cm} G_A:M_{ n}\rightarrow M_{ m}$\vspace{0,25cm}\\ 
$F_A(Y)=\sum_{i=1}^k tr(YD_i)A_i\hspace{1,5 cm}G_A(X)=\sum_{i=1}^k tr(XA_i)D_i$
\end{center}
\end{definition}

\begin{remark}Every decomposition of A in $M_{n}\otimes M_{m}$ provides the same $F_A$ and $G_A$.
\end{remark}

The next proposition connects these maps to the $*-product$ and shows that all decompositions used for $A$ or $B$ to compute $A*B$ provide the same $A*B$ (therefore this product is well defined).

\begin{proposition}\label{AB} $A*B=F_A((\cdot)^t)\otimes Id (B)= Id\otimes G_B((\cdot)^t) (A)$
\end{proposition}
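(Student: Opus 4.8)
The plan is to verify all three expressions by direct computation against the defining formula $A*B=\sum_{i,j}A_i\otimes C_j\,tr(D_iB_j^t)$, using fixed decompositions $A=\sum_{i=1}^k A_i\otimes D_i$ and $B=\sum_{j=1}^t B_j\otimes C_j$, and then to appeal to the preceding remark to settle well-definedness.

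First I would unwind the notation $F_A((\cdot)^t)\otimes Id$. Since $(\cdot)^t:M_m\rightarrow M_m$ and $F_A:M_m\rightarrow M_n$ are linear, the composite $F_A((\cdot)^t):M_m\rightarrow M_n$ is a well-defined linear map, and tensoring with $Id$ on the second factor yields a linear map $M_m\otimes M_l\rightarrow M_n\otimes M_l$ acting on a simple tensor by $B_j\otimes C_j\mapsto F_A(B_j^t)\otimes C_j$. Applying this to $B$ and expanding $F_A(B_j^t)=\sum_i tr(B_j^tD_i)A_i$ from definition \ref{defmaps} produces $\sum_{i,j}tr(B_j^tD_i)\,A_i\otimes C_j$. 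By cyclicity of the trace, $tr(B_j^tD_i)=tr(D_iB_j^t)$, so this sum is exactly $\sum_{i,j}A_i\otimes C_j\,tr(D_iB_j^t)=A*B$.

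For the remaining equality I would argue symmetrically: $Id\otimes G_B((\cdot)^t)$ sends $A_i\otimes D_i\mapsto A_i\otimes G_B(D_i^t)$, and $G_B(D_i^t)=\sum_j tr(D_i^tB_j)C_j$ by definition \ref{defmaps}. Here the matching trace identity is $tr(D_i^tB_j)=tr\big((D_i^tB_j)^t\big)=tr(B_j^tD_i)=tr(D_iB_j^t)$, which again yields $\sum_{i,j}A_i\otimes C_j\,tr(D_iB_j^t)=A*B$.

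Finally, the well-definedness claim follows because the two outer expressions are assembled only from $F_A$ and $G_B$, which by the remark after definition \ref{defmaps} do not depend on the chosen decompositions of $A$ and $B$, while $\phi\otimes Id$ and $Id\otimes\psi$ are honest linear maps on the tensor product space, independent of how their argument is written. There is no real obstacle beyond bookkeeping: the only point demanding care is tracking the transpose through each trace so that both one-sided computations land on the same scalar $tr(D_iB_j^t)$. The transposition inside $F_A((\cdot)^t)$ is precisely what compensates for the bilinear functional $tr(D_iB_j^t)$, rather than $tr(D_iB_j)$, appearing in the definition of the product.
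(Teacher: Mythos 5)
Your proof is correct and follows essentially the same route as the paper: expand $F_A((\cdot)^t)\otimes Id(B)$ and $Id\otimes G_B((\cdot)^t)(A)$ using definition \ref{defmaps} and match both to the defining sum $\sum_{i,j}A_i\otimes C_j\,tr(D_iB_j^t)$ via the trace identities $tr(B_j^tD_i)=tr(D_iB_j^t)=tr(D_i^tB_j)$. Your extra remarks on the transpose bookkeeping and on well-definedness are consistent with the paper's surrounding discussion.
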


\begin{proof} Let $A=\displaystyle\sum_{i} A_i\otimes D_i$ and $\displaystyle B=\sum_{j} B_j\otimes C_j$. \\
Therefore 
$\displaystyle A*B=\sum_{i,j}A_i\otimes C_jtr(D_iB_j^t)=\sum_{j}(\sum_{i}tr(B_j^tD_i)A_i)\otimes C_j=\sum_{j}F_A(B_j^t)\otimes C_j=F_A((\cdot)^t)\otimes Id (B)$. But $A*B$ is also equal to 
$\displaystyle\sum_{i}A_i\otimes (\sum_{j}tr(B_jD_i^t)C_j)=\sum_{i} A_i\otimes G_B(D_i^t)=Id\otimes G_B((\cdot)^t) (A)$. 
\end{proof}

\begin{corollary} \label{corollaryAB} $A=F_A((\cdot)^t)\otimes Id (u\overline{u}^t)$, where $u\overline{u}^t$ is the matrix of proposition \ref{defuut}.
\end{corollary}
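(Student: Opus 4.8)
The plan is to read this off directly from the two immediately preceding results, since the corollary is essentially the special case $B = u\overline{u}^t$ of Proposition \ref{AB}. First I would recall from Proposition \ref{defuut} that $u\overline{u}^t$ is the identity in $M_{m}\otimes M_{m}$ with respect to the $*$-product. In particular, for our fixed $A\in M_{n}\otimes M_{m}$ the middle dimensions match, so the product $A*(u\overline{u}^t)$ is defined and lands in $M_{n}\otimes M_{m}$, and the identity property gives $A*(u\overline{u}^t)=A$.

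Next I would invoke Proposition \ref{AB} with $B=u\overline{u}^t$, which yields $A*(u\overline{u}^t)=F_A((\cdot)^t)\otimes Id\,(u\overline{u}^t)$. Combining this with $A*(u\overline{u}^t)=A$ from the previous step immediately gives the claimed identity $A=F_A((\cdot)^t)\otimes Id\,(u\overline{u}^t)$.

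The only point that deserves a moment's care is confirming that $u\overline{u}^t$ functions as a \emph{right} identity here (Proposition \ref{defuut} phrases it simply as ``the identity''), i.e.\ that we may multiply $A$ on the right by $u\overline{u}^t$ and recover $A$. This is already secured: by Proposition \ref{rank1product} one has $vw^t*u\overline{u}^t=(v*u)(w*\overline{u})^t$, and since $u=\overline{u}$ is the identity of the vector $*$-product from Proposition \ref{defu}, both factors simplify to $v$ and $w$ respectively, so $vw^t*u\overline{u}^t=vw^t$; extending by the distributivity of the $*$-product then covers arbitrary $A$. Hence there is no real obstacle, and the corollary is a direct two-line consequence of Propositions \ref{AB} and \ref{defuut}.
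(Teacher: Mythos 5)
Your proof is correct and matches the paper's intended argument: the corollary is stated without proof precisely because it is the specialization $B=u\overline{u}^t$ of Proposition \ref{AB} combined with the identity property from Proposition \ref{defuut}, which is exactly what you do. Your additional verification that $u\overline{u}^t$ acts as a \emph{right} identity (via Propositions \ref{rank1product} and \ref{defu} and the fact that $u=\overline{u}$) is a careful touch, but it is the same route the paper takes.
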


\begin{remark} The map $A\rightarrow F_A((\cdot)^t)$ is the inverse of Jamio􏰘lkowski's isomorphism \cite{Jamiol􏰘kowski}.
\end{remark}

\begin{proposition} \label{F_AB}$F_{A*B}(Y)=F_A(F_B(Y)^t)$ and $G_{A*B}(X)=G_B(G_A(X)^t)$
\end{proposition}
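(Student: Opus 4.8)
The plan is to prove both identities by direct computation, exploiting the fact established in Proposition \ref{AB} that $A*B$ does not depend on the chosen decompositions. So I would begin by fixing decompositions $A=\sum_i A_i\otimes D_i$ and $B=\sum_j B_j\otimes C_j$, so that by definition $A*B=\sum_{i,j}A_i\otimes C_j\,tr(D_iB_j^t)$. The point is to read this last expression as a legitimate decomposition of $A*B\in M_n\otimes M_l$ whose left factors are the $A_i$ and whose right factors are the $tr(D_iB_j^t)\,C_j$; since $F_{A*B}$ and $G_{A*B}$ are independent of the decomposition, I may compute them from exactly this one.

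For the first identity I would expand $F_{A*B}$ against this decomposition:
\begin{center}
$F_{A*B}(Y)=\sum_{i,j}tr\big(Y\,tr(D_iB_j^t)C_j\big)A_i=\sum_{i,j}tr(D_iB_j^t)\,tr(YC_j)\,A_i.$
\end{center}
On the other side, I would compute $F_B(Y)=\sum_j tr(YC_j)B_j$, transpose to get $F_B(Y)^t=\sum_j tr(YC_j)B_j^t$, and then apply $F_A$ to obtain $F_A(F_B(Y)^t)=\sum_{i,j}tr(YC_j)\,tr(B_j^tD_i)\,A_i$. The two expressions agree once I invoke the single trace identity $tr(B_j^tD_i)=tr(D_iB_j^t)$, which is just cyclicity of the trace. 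The second identity is entirely parallel: expanding $G_{A*B}(X)$ from the same decomposition gives $\sum_{i,j}tr(XA_i)\,tr(D_iB_j^t)\,C_j$, while $G_B(G_A(X)^t)$ yields $\sum_{i,j}tr(XA_i)\,tr(D_i^tB_j)\,C_j$, and here the matching term requires $tr(D_i^tB_j)=tr(D_iB_j^t)$, which follows from $tr(M)=tr(M^t)$ together with cyclicity.

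There is no real obstacle here: the entire content is bookkeeping of indices, traces and transpositions, and the only substantive ingredients are the well-definedness from Proposition \ref{AB} and the elementary trace facts $tr(PQ)=tr(QP)$ and $tr(P)=tr(P^t)$. The single point demanding care is keeping track of which factor is transposed inside each trace so that the scalar coefficients $tr(D_iB_j^t)$ coming from the $*$-product are correctly reproduced on the right-hand sides; once the transpositions are aligned, both equalities are immediate.
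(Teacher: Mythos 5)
Your proof is correct and follows essentially the same route as the paper: both fix decompositions of $A$ and $B$, read $F_{A*B}$ (resp. $G_{A*B}$) off a resulting decomposition of $A*B$ using the fact that these maps do not depend on the decomposition, and compare with the expanded right-hand side. The only cosmetic difference is that the paper starts from the form $A*B=\sum_j F_A(B_j^t)\otimes C_j$ supplied by Proposition \ref{AB}, which collapses the computation by linearity of $F_A$, whereas you start from the definitional double sum and match scalar coefficients via $tr(PQ)=tr(QP)$ and $tr(P)=tr(P^t)$.
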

\begin{proof}Let $\displaystyle A=\sum_{i} A_i\otimes D_i$ and $\displaystyle B=\sum_{j} B_j\otimes C_j$. \\
By proposition \ref{AB},
$\displaystyle A*B=F_A((\cdot)^t)\otimes Id (B)=\sum_{j}F_A(B_j^t)\otimes C_j.$\\
Thus, $\displaystyle F_{A*B}(Y)=\sum_{j}F_A(B_j^t)tr(YC_j)=F_A((\sum_{j}B_j tr(YC_j))^t)=F_A(F_B(Y)^t)$.\\
The proof of $G_{A*B}(X)=G_B(G_A(X)^t)$ is analogous.
\end{proof}

\begin{corollary} The $*-product$ is associative.
\end{corollary}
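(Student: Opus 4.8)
The plan is to reduce associativity to an identity between the linear maps $F_{(A*B)*C}$ and $F_{A*(B*C)}$, exploiting the fact that a matrix is completely recovered from its associated map $F$. Concretely, Corollary \ref{corollaryAB} gives $X=F_X((\cdot)^t)\otimes Id\,(u\overline{u}^t)$ for any $X\in M_n\otimes M_p$, so if two such matrices satisfy $F_X=F_Y$ then $X=Y$. Hence it suffices to verify that $F_{(A*B)*C}$ and $F_{A*(B*C)}$ agree as maps, for $A\in M_n\otimes M_m$, $B\in M_m\otimes M_l$, $C\in M_l\otimes M_p$.

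First I would expand the left side using Proposition \ref{F_AB} twice. Setting $P=A*B$, the proposition gives $F_{P*C}(Y)=F_P(F_C(Y)^t)$, and then $F_P=F_{A*B}$ yields $F_P(Z)=F_A(F_B(Z)^t)$; substituting $Z=F_C(Y)^t$ produces
$$F_{(A*B)*C}(Y)=F_A\big(F_B(F_C(Y)^t)^t\big).$$
Next I would expand the right side the same way: Proposition \ref{F_AB} gives $F_{A*(B*C)}(Y)=F_A(F_{B*C}(Y)^t)$, and $F_{B*C}(Y)=F_B(F_C(Y)^t)$, so
$$F_{A*(B*C)}(Y)=F_A\big(F_B(F_C(Y)^t)^t\big).$$
The two expressions coincide for every $Y\in M_p$, hence $F_{(A*B)*C}=F_{A*(B*C)}$, and the uniqueness coming from Corollary \ref{corollaryAB} finishes the proof. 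The identity $G_{(A*B)*C}=G_{A*(B*C)}$ could be checked identically from the $G$-half of Proposition \ref{F_AB}, but it is not needed once the $F$ version is established.

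The computation is entirely mechanical; the only point deserving attention is the bookkeeping of the transposes inside the nested $F$'s, since Proposition \ref{F_AB} inserts a transpose at each application and one must ensure the same pattern of transposes appears on both sides. I expect this transpose tracking to be the only place where an order error could creep in; everything else is a direct substitution. An alternative route would be to argue on rank-one generators via Proposition \ref{rank1product}, reducing associativity in $M_n\otimes M_m$ to associativity of the vector $*$-product in $\mathbb{C}^n\otimes\mathbb{C}^m$, but the map-based argument above is shorter because Proposition \ref{F_AB} already packages the needed recursion.
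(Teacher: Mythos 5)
Your proof is correct and rests on the same key ingredient as the paper's, namely Proposition \ref{F_AB}, which turns the $*$-product into composition of the associated $F$-maps; the transpose bookkeeping in your nested expansion $F_A\bigl(F_B(F_C(Y)^t)^t\bigr)$ is right on both sides. The only cosmetic difference is the final step: the paper applies Proposition \ref{AB} to write $(A*B)*C$ directly as $\bigl(F_A((\cdot)^t)\otimes Id\bigr)\circ\bigl(F_B((\cdot)^t)\otimes Id\bigr)(C)=A*(B*C)$, whereas you compare $F_{(A*B)*C}$ with $F_{A*(B*C)}$ and recover the matrices via Corollary \ref{corollaryAB}, which is an equally valid way to finish.
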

\begin{proof}
$(A*B)*C=F_{A*B}((\cdot)^t)\otimes Id(C)=(F_A((\cdot)^t)\otimes Id)\circ (F_B((\cdot)^t)\otimes Id)(C)=A*(B*C).$
\end{proof}

\subsection{Interesting Consequences of $*-$Product}
\indent\\

The next proposition and its corollary shows that the $*-product$ of two positive semidefinite Hermitian matrices is a positive semidefinite Hermitian matrix. This fact is used many times in this article.

\begin{proposition}\label{important} Let $v\in\mathbb{C}^n\otimes\mathbb{C}^m$ and $w\in\mathbb{C}^m\otimes\mathbb{C}^l$. Then $v\overline{v}^t*w\overline{w}^t=(v*w)\overline{(v*w)}^t$.
\end{proposition}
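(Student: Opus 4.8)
The plan is to reduce everything to proposition \ref{rank1product}, which already computes the $*-product$ of two rank one matrices, and then to handle the one extra ingredient that is not purely formal: the interaction of the $*-product$ with complex conjugation.

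First I would apply proposition \ref{rank1product} directly. Viewing $v\overline{v}^t$ as a rank one matrix with factors $v$ and $\overline{v}$, and $w\overline{w}^t$ as a rank one matrix with factors $w$ and $\overline{w}$, that proposition yields
\[
v\overline{v}^t * w\overline{w}^t = (v*w)(\overline{v}*\overline{w})^t.
\]
Thus the statement is reduced to the single identity $\overline{v}*\overline{w} = \overline{(v*w)}$; once this is known, the right-hand side above becomes $(v*w)\overline{(v*w)}^t$, which is exactly the claim.

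To establish $\overline{v}*\overline{w} = \overline{(v*w)}$ I would go back to the definition of the $*-product$ in $\mathbb{C}^n\otimes\mathbb{C}^m$. Expanding $v=\sum_i v_i\otimes r_i$ and $w=\sum_j w_j\otimes s_j$, the conjugates are $\overline{v}=\sum_i \overline{v_i}\otimes\overline{r_i}$ and $\overline{w}=\sum_j \overline{w_j}\otimes\overline{s_j}$, so that
\[
\overline{v}*\overline{w}=\sum_{i,j}\overline{v_i}\otimes\overline{s_j}\,tr\big(\overline{w_j}\,\overline{r_i}^t\big).
\]
The only point to verify is that the scalar coefficients conjugate correctly, i.e. $tr(\overline{w_j}\,\overline{r_i}^t)=\overline{tr(w_jr_i^t)}$, which is immediate since conjugation commutes with both the transpose and the trace. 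Comparing this with $\overline{(v*w)}=\sum_{i,j}\overline{v_i}\otimes\overline{s_j}\,\overline{tr(w_jr_i^t)}$ gives the desired equality.

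There is essentially no hard step here; the only things to be careful about are the bookkeeping in matching the factor pairs $(v,\overline{v})$ and $(w,\overline{w})$ to the hypotheses of proposition \ref{rank1product}, and making sure the conjugation identity is checked at the level of the scalar coefficients rather than merely asserted. Since the $*-product$ is already known to be well defined (proposition \ref{AB}), the particular decompositions chosen above are harmless, and the proof concludes by combining these two observations.
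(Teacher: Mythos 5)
Your proposal is correct, but it follows a different route from the paper. The paper proves Proposition \ref{important} by direct computation: it writes $v=\sum_i v_i\otimes e_i$ and $w=\sum_j e_j\otimes w_j$ in the canonical basis of $\mathbb{C}^m$, expands $v\overline{v}^t$, $w\overline{w}^t$ and their $*$-product explicitly, and checks that the result coincides with the expansion of $(v*w)\overline{(v*w)}^t$ --- in effect repeating the computation in the proof of Proposition \ref{rank1product} with conjugates inserted. You instead invoke Proposition \ref{rank1product} as a black box, which immediately gives $v\overline{v}^t*w\overline{w}^t=(v*w)(\overline{v}*\overline{w})^t$, and then isolate the only genuinely new content, the identity $\overline{v}*\overline{w}=\overline{(v*w)}$, which you correctly verify at the level of the scalar coefficients via $tr(\overline{w_j}\,\overline{r_i}^t)=\overline{tr(w_jr_i^t)}$ (together with the standard fact $\overline{a\otimes b}=\overline{a}\otimes\overline{b}$ for the Kronecker identification). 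Your approach is more modular and is actually in the spirit of the paper's own proof of Proposition \ref{defuut}, which likewise derives a conjugate-pair statement from \ref{rank1product}; the paper's version buys self-containedness at the cost of redoing the rank-one computation. One small caveat: the well-definedness you cite from Proposition \ref{AB} concerns the $*$-product on $M_n\otimes M_m$, whereas what your coefficient computation uses is decomposition-independence of the $*$-product on $\mathbb{C}^n\otimes\mathbb{C}^m$; this follows easily (the vector $*$-product is ordinary matrix multiplication under $\sum_i v_i\otimes r_i \mapsto \sum_i v_ir_i^t$, as in the proof of Proposition \ref{defu}), and the paper's own proof tacitly relies on the same fact, so this is a presentational point rather than a gap.
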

\begin{proof} Let $\{e_i|1\leq i\leq m\}$ be the canonical basis for $\mathbb{C}^m$.
We can write $v=\sum_{i=1}^{m}v_i\otimes e_i$ and $w=\sum_{j=1}^{m}e_j\otimes w_j$, 
Therefore $$v\overline{v}^t=\sum_{i,k} v_i\overline{v_k}^t\otimes e_ie_k^t,\  w\overline{w}^t=\sum_{j,s} e_je_s^t\otimes w_j\overline{w_s}^t.$$
Thus $v\overline{v}^t*w\overline{w}^t=\sum_{i,k} v_i\overline{v_k}^t\otimes w_i\overline{w_k}^t$. \\
Notice that $v*w=\sum_{i=1}^{m}v_i\otimes w_i$ and $(v*w)\overline{(v*w)}^t=\sum_{i,k} v_i\overline{v_k}^t\otimes w_i\overline{w_k}^t$.
\end{proof}

\begin{corollary}\label{corollarypositive} If $A\in M_{n}\otimes M_{m}$ and $B\in M_{m}\otimes M_{l}$ are positive semidefinite Hermitian matrices then $A*B$ is also.
\end{corollary}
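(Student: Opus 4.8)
The plan is to reduce the statement to the rank-one case already settled in Proposition \ref{important}, using the spectral decomposition of a positive semidefinite Hermitian matrix together with the distributivity of the $*-product$.

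First I would write $A$ and $B$ as finite sums of rank-one positive semidefinite Hermitian matrices. Since $A\in M_{nm}$ is positive semidefinite Hermitian, its spectral decomposition reads $A=\sum_p \lambda_p\, u_p\overline{u_p}^t$ with $\lambda_p\geq 0$; absorbing $\sqrt{\lambda_p}$ into the eigenvectors and viewing $\mathbb{C}^{nm}=\mathbb{C}^n\otimes\mathbb{C}^m$ under the fixed identification, this becomes $A=\sum_p v_p\overline{v_p}^t$ with each $v_p\in\mathbb{C}^n\otimes\mathbb{C}^m$. In the same way I would write $B=\sum_q w_q\overline{w_q}^t$ with each $w_q\in\mathbb{C}^m\otimes\mathbb{C}^l$.

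Next, since the $*-product$ is distributive (as observed just before Definition \ref{defmaps}), I would expand
$$ A*B=\sum_{p,q} v_p\overline{v_p}^t * w_q\overline{w_q}^t, $$
and then apply Proposition \ref{important} to each summand, obtaining
$$ A*B=\sum_{p,q} (v_p*w_q)\overline{(v_p*w_q)}^t. $$
Each term $(v_p*w_q)\overline{(v_p*w_q)}^t$ has the form $z\overline{z}^t$, hence is a rank-one positive semidefinite Hermitian matrix, and a finite sum of positive semidefinite Hermitian matrices is again positive semidefinite Hermitian. This gives the claim.

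I do not expect a genuine obstacle here; the proof is essentially a bookkeeping argument once Proposition \ref{important} is in hand. The only points that deserve a word of care are that the eigenvectors of $A$ and $B$ genuinely lie in the tensor spaces $\mathbb{C}^n\otimes\mathbb{C}^m$ and $\mathbb{C}^m\otimes\mathbb{C}^l$ under the fixed identifications, so that Proposition \ref{important} applies verbatim to each pair $(v_p,w_q)$, and that distributivity of the $*-product$ extends to these finite sums, which is immediate from the definition.
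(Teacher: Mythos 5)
Your proposal is correct and follows essentially the same route as the paper: decompose $A$ and $B$ into sums of rank-one terms $v_p\overline{v_p}^t$ and $w_q\overline{w_q}^t$, expand $A*B$ by distributivity, and apply Proposition \ref{important} to each summand. The paper's proof is identical in substance, merely stating the rank-one decompositions as immediate from positive semidefiniteness rather than spelling out the spectral-decomposition bookkeeping as you do.
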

\begin{proof} By hypothesis $A=\sum_iv_i\overline{v_i}^t$ and $B=\sum_jw_j\overline{w_j}^t$, for $v_i\in\mathbb{C}^n\otimes\mathbb{C}^m$ and $w_j\in\mathbb{C}^m\otimes\mathbb{C}^l$.
Therefore $$A*B=\sum_iv_i\overline{v_i}^t*\sum_jw_j\overline{w_j}^t=\sum_{i,j}v_i\overline{v_i}^t*w_j\overline{w_j}^t$$
By proposition \ref{important},  
$A*B=\sum_{i,j}(v_i*w_j)\overline{(v_i*w_j)}^t$.
\end{proof}

\begin{proposition}\label{carac} If for every positive semidefinite Hermitian matrix $B\in M_{ m}\otimes M_{ l}$ $($for arbitrary $l$ $)$, $A*B$ is also a positive semidefinite Hermitian matrix,  then A is a positive semidefinite Hermitian matrix.
\end{proposition}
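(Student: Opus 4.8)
The plan is to apply the hypothesis to a single, well-chosen matrix, namely the identity element of the $*$-product. By proposition \ref{defuut}, the rank one matrix $u\overline{u}^t\in M_m\otimes M_m$ is the identity with respect to the $*$-product; equivalently, corollary \ref{corollaryAB} records that $A*(u\overline{u}^t)=A$ for the given $A\in M_n\otimes M_m$. So the strategy is to obtain the conclusion ``$A$ is positive semidefinite Hermitian'' by recognizing $A$ as the $*$-product of $A$ with this identity, and then invoking the hypothesis on that particular product.

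The one point I would verify is that $u\overline{u}^t$ is a legitimate test matrix, i.e.\ that it is itself positive semidefinite Hermitian. This is immediate: $u\overline{u}^t$ is of the form $v\overline{v}^t$, which is exactly the building block of the positive semidefinite Hermitian matrices used throughout, as in the decomposition $\sum_i v_i\overline{v_i}^t$ appearing in the proof of corollary \ref{corollarypositive}. Taking $l=m$ and $B=u\overline{u}^t$ in the statement, the hypothesis guarantees that $A*(u\overline{u}^t)$ is positive semidefinite Hermitian. Since $A*(u\overline{u}^t)=A$, this is precisely the assertion that $A$ is positive semidefinite Hermitian, and the argument is complete.

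I do not expect any real obstacle; the entire content is the observation that the universal quantifier over $B$ may be collapsed to the single evaluation $B=u\overline{u}^t$, using that the identity of the $*$-product happens to be positive semidefinite. This proposition is the exact converse of corollary \ref{corollarypositive}, so that together they characterize the positive semidefinite Hermitian matrices $A$ as precisely those for which $A*B$ is positive semidefinite Hermitian for every such $B$.
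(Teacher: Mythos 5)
Your proof is correct and is essentially identical to the paper's own argument: both collapse the hypothesis to the single test matrix $B=u\overline{u}^t$, which is positive semidefinite Hermitian, and use that it is the identity for the $*$-product (proposition \ref{defuut}) so that $A*u\overline{u}^t=A$. No further comment is needed.
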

\begin{proof}The matrix $u\overline{u}^t$ from proposition \ref{defuut} is Hermitian and positive semidefinite. Thus, $A*u\overline{u}^t=A$ must be Hermitian and positive semidefinite.
\end{proof}

An easy consequence of this proposition is the characterization of all completely positive maps, obtained in \cite{choi}.

\begin{definition} A linear transformation  $T:M_{m}\rightarrow M_{ n}$ is completely positive if for every positive semidefinite Hermitian matrix $B\in M_{ m}\otimes M_{ l}$, we obtain a positive semidefinite Hermitian matrix $T\otimes Id(B)$. 
\end{definition}

\begin{proposition}\label{choitheorem} \textbf{$($Choi's Theorem$)$} A linear transformation $T:M_{m}\rightarrow M_{ n}$ is a completely positive map if and only if $T\otimes Id(u\overline{u}^t)$ is a positive semidefinite Hermitian matrix.
\end{proposition}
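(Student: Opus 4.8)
The plan is to prove the two implications separately, using the $*$-product as the engine for the nontrivial direction.

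For the forward implication there is nothing to do beyond specializing the definition. If $T$ is completely positive, then $T\otimes Id(B)$ is positive semidefinite Hermitian for every positive semidefinite Hermitian $B$; and since $u\overline{u}^t$ is itself Hermitian and positive semidefinite (as already noted in the proof of proposition \ref{defuut}, using $u=\overline{u}$), taking $B=u\overline{u}^t$ gives the conclusion at once. This direction introduces no new idea.

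For the converse, the key observation is that testing $T$ against an arbitrary positive semidefinite input can be rephrased entirely as a single $*$-product. I would set $A:=T\otimes Id(u\overline{u}^t)$, which is positive semidefinite Hermitian precisely by hypothesis. The first step is to identify this $A$ with the map $T$: by corollary \ref{corollaryAB} we have $A=F_A((\cdot)^t)\otimes Id(u\overline{u}^t)$, while by construction $A=T\otimes Id(u\overline{u}^t)$, so since the correspondence $S\mapsto S\otimes Id(u\overline{u}^t)$ (the inverse of $A\mapsto F_A((\cdot)^t)$, hence a bijection) is injective, we must have $F_A((\cdot)^t)=T$. With this identification in hand, proposition \ref{AB} yields, for every positive semidefinite Hermitian $B\in M_m\otimes M_l$, the identity $A*B=F_A((\cdot)^t)\otimes Id(B)=T\otimes Id(B)$. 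The conclusion then drops out of corollary \ref{corollarypositive}: as both $A$ and $B$ are positive semidefinite Hermitian, so is $A*B=T\otimes Id(B)$, which is exactly what complete positivity of $T$ demands.

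The main obstacle I anticipate is the identification $F_A((\cdot)^t)=T$, where one must keep careful track of the transpositions and rely on the fact that the Jamio\l kowski correspondence is a bijection, so that two matrices with the same image under $S\mapsto S\otimes Id(u\overline{u}^t)$ must coincide. Everything else is direct substitution into results already established, and the argument illustrates the remark from the introduction that claims about completely positive maps can be converted into multiplication by a suitable positive semidefinite matrix.
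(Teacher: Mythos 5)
Your proof is correct and runs on the same engine as the paper's: both reduce everything to the identification $F_A((\cdot)^t)=T$ for $A=T\otimes Id(u\overline{u}^t)$, so that $T\otimes Id(B)=A*B$ by proposition \ref{AB}, and then the converse direction follows from corollary \ref{corollarypositive}. Where you genuinely diverge is in how that identification is justified. The paper does it by direct computation: writing $A=\sum_{i,j}T(e_ie_j^t)\otimes e_ie_j^t$ and checking $F_A(Y)=\sum_{i,j}T(e_ie_j^t)tr(Ye_ie_j^t)=T(Y^t)$; in the language of your correspondences $\Phi(S)=S\otimes Id(u\overline{u}^t)$ and $\Psi(A)=F_A((\cdot)^t)$, this is precisely a verification of $\Psi\circ\Phi=Id$, which is the composition actually needed. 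You instead cite corollary \ref{corollaryAB}, which is the \emph{opposite} composition $\Phi\circ\Psi=Id$, and then deduce $\Psi(A)=T$ from injectivity of $\Phi$. That deduction is sound, but note that corollary \ref{corollaryAB} by itself only exhibits $\Psi$ as a right inverse of $\Phi$, i.e.\ it makes $\Phi$ surjective; calling $\Phi$ a bijection requires one more (easy) ingredient, either the dimension count (both the space of linear maps $M_m\rightarrow M_n$ and $M_n\otimes M_m$ have dimension $n^2m^2$, and a surjective linear map between finite-dimensional spaces of equal dimension is injective) or the one-line direct check that $\Phi(S)=\sum_{i,j}S(e_ie_j^t)\otimes e_ie_j^t=0$ forces $S(e_ie_j^t)=0$ for all $i,j$, hence $S=0$. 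Either patch is immediate, so this is a presentational wrinkle rather than a genuine gap, but as written the word ``bijection'' carries unjustified weight. On the other side of the ledger, your forward implication is tidier than the paper's: you simply specialize $B=u\overline{u}^t$, whereas the paper routes that direction through proposition \ref{carac}; the trade-off is that the paper's computational route is fully self-contained, while yours leans on the finite-dimensional linear algebra above.
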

\begin{proof} Let $A=T\otimes Id(u\overline{u}^t)=\sum_{i,j=1}^m T(e_ie_j^t)\otimes e_ie_j^t$. Now $$F_A(Y)=\sum_{i,j=1}^m T(e_ie_j^t)tr(Ye_ie_j^t)=T(\sum_{i,j=1}^m e_ie_j^t tr(Ye_ie_j^t))=T(Y^t).$$

Notice that $A*B=F_A((\cdot)^t)\otimes Id(B)=T\otimes Id(B)$. Thus $T$ is completely positive if and only if $A*B$ is a positive semidefinite Hermitian matrix for every positive semidefinite Hermitian matrix $B$. Therefore $A$ must be a positive semidefinite Hermitian matrix (by proposition \ref{carac}).
\end{proof}

\subsection{Minimal Hermitian Decomposition}
\indent\\

In this subsection we present some definitions and give a quick proof (alternative proof) of the fact that every Hermitian matrix has a minimal Hermitian decomposition, this result can also be found in \cite{shao}. Based on this result, in the next  subsection, we also give a quick proof (alternative proof) that every Hermitian matrix has a Hermitian Schmidt decomposition, this result can also be found in \cite{herbut}. 
These decompositions simplify the calculations with $*-product$ and
in section 3, theorem \ref{conditionsirreducible2}, we give a complete description of the weak irreducible SPC/PPT matrices based on a property of their Hermitian Schmidt decompositions.\\

Here, and from now on, the subspace generated by $\{m_1,...,m_t\}$ will be denoted by $(m_1,...,m_t)$. 

\begin{definition} \label{definitiontensorrank}Let $r \in V\otimes W$. The tensor rank of $r$ is $1$, if $r=v\otimes w$ and $r\neq0$. The tensor rank of $r$ is the minimal number of tensors with tensor rank 1 that can be added to form $r$.
\end{definition}

\begin{theorem}\label{theoremprincipal}
Let V and W be vector spaces over the field F and let  $v_i,r_j\in V$ and $w_i,s_j\in W$, for $1\leq i\leq n$ and $1\leq j\leq k$.\\ Let $\sum_{i=1}^n v_i\otimes w_i=\sum_{j=1}^kr_j\otimes s_j\in V\otimes_F W$.
\begin{itemize}
 \item [a)] If $\{v_1,...,v_n\}$ is a linear independent set then $(w_1,...,w_n)\subset(s_1,...,s_k)$.
 \item [b)] If $\{w_1,...,w_n\}$ is a linear independent set then $(v_1,...,v_n)\subset(r_1,...,r_k)$.
\end{itemize}

\end{theorem}
\begin{proof} See reference \cite{marvin}.
\end{proof}

\begin{corollary}\label{corollarytensorrank}Let $\sum_{i=1}^n v_i\otimes w_i=\sum_{j=1}^kr_j\otimes s_j$.\\ If $\{v_1,...,v_n\}$ and $\{w_1,...,w_n\}$ are linear independent sets then $k\geq n$. So the tensor rank of  $\sum_{i=1}^n v_i\otimes w_i$ is n.
\end{corollary}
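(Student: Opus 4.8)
The plan is to derive the corollary directly from Theorem \ref{theoremprincipal} by a dimension count, and then to bootstrap the resulting inequality into the statement about tensor rank.

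First I would apply part a) of Theorem \ref{theoremprincipal}. Since $\{v_1,\dots,v_n\}$ is linearly independent, the theorem yields the subspace inclusion $(w_1,\dots,w_n)\subset(s_1,\dots,s_k)$. Now I would count dimensions: because $\{w_1,\dots,w_n\}$ is \emph{also} linearly independent, the left-hand subspace has dimension exactly $n$, while the right-hand subspace is spanned by $k$ vectors and so has dimension at most $k$. The inclusion then forces $n=\dim(w_1,\dots,w_n)\le\dim(s_1,\dots,s_k)\le k$, that is, $k\ge n$. One could equally invoke part b) with the roles of the $v_i$ and $w_i$ exchanged; either direction suffices.

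The second assertion follows by reinterpreting the inequality just proved as a statement about all rank-$1$ decompositions. On one hand, the linear independence of $\{v_1,\dots,v_n\}$ (and of $\{w_1,\dots,w_n\}$) guarantees that each $v_i$ and each $w_i$ is nonzero, so every summand $v_i\otimes w_i$ is a genuine tensor of rank $1$; the given expression therefore exhibits $\sum_{i=1}^n v_i\otimes w_i$ as a sum of $n$ rank-$1$ tensors, whence its tensor rank is at most $n$. On the other hand, any representation of this element as a sum of $k$ rank-$1$ tensors can be written in the form $\sum_{j=1}^k r_j\otimes s_j$, and the first part of the argument applies verbatim to give $k\ge n$. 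Hence no shorter rank-$1$ decomposition exists and the tensor rank equals $n$.

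I do not expect a genuine obstacle here. The only points requiring care are to observe that an arbitrary minimal rank-$1$ decomposition really has the shape $\sum_j r_j\otimes s_j$ to which Theorem \ref{theoremprincipal} applies, which is immediate from Definition \ref{definitiontensorrank}, and to note that the nonvanishing of the summands $v_i\otimes w_i$, needed to count the given decomposition as one into $n$ rank-$1$ terms, is precisely what linear independence supplies.
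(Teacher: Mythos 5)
Your proposal is correct. The paper itself gives no argument for this corollary -- its ``proof'' is only a citation to the Marcus--Moyls reference \cite{marvin}, just as for Theorem \ref{theoremprincipal} -- so there is nothing in the text to compare step by step; what you have written is the natural self-contained derivation that the corollary's placement immediately after Theorem \ref{theoremprincipal} suggests. Your two steps are both sound: part a) of the theorem gives $(w_1,\dots,w_n)\subset(s_1,\dots,s_k)$, and the linear independence of the $w_i$ turns this inclusion into the dimension count $n=\dim(w_1,\dots,w_n)\le\dim(s_1,\dots,s_k)\le k$; then, since every decomposition into rank-one tensors has the form $\sum_{j=1}^k r_j\otimes s_j$ by Definition \ref{definitiontensorrank}, the inequality applies to all of them, while the given expression itself is a sum of $n$ nonzero rank-one tensors (nonvanishing being guaranteed by the linear independence, as you note), so the tensor rank is exactly $n$. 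In short, you have supplied the proof the paper delegates to the literature, and you did so using only the tools the paper itself makes available.
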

\begin{proof} See reference \cite{marvin}.
\end{proof}

\begin{definition}
A decomposition $\sum_{i=1}^n A_i\otimes B_i$ of a matrix A, is said to be a minimal decomposition, if $\{A_1,...,A_n\}$ and $\{B_1,...,B_n\}$ are linear independent sets. This nomenclature is justified by corollary \ref{corollarytensorrank}.
\end{definition}

\begin{definition} A decomposition $\sum_{i=1}^n A_i\otimes B_i$ of a Hermitian matrix A is a Hermitian decomposition if $\{A_1,...,A_n\}$ and $\{B_1,...,B_n\}$ are Hermitian matrices. Also, if  $\{A_1,...,A_n\}$ and $\{B_1,...,B_n\}$ are linear independent sets then $\sum_{i=1}^n A_i\otimes B_i$ is a minimal Hermitian decomposition of A. 
\end{definition}

 Every Hermitian matrix has a minimal Hermitian decomposition. This result was proved in \cite{shao} but, here below, we present a quick proof for the convenience of the reader.

\begin{lemma} Every Hermitian matrix $A_{km\times km}$ has a Hermitian decomposition in $M_{k}\otimes M_{m}$.
\end{lemma}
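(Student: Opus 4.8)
The plan is to start from an arbitrary (not necessarily Hermitian) decomposition of $A$ and symmetrize it using the fact that $A=A^*$. Since $M_{km}$ is identified with $M_k\otimes M_m$ via the Kronecker product, I can expand $A$ in the product basis coming from the matrix units of $M_k$ and $M_m$, obtaining some decomposition $A=\sum_{i=1}^N C_i\otimes D_i$ in which the factors $C_i\in M_k$ and $D_i\in M_m$ are arbitrary complex matrices.

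Next I would exploit that $A$ is Hermitian. Taking conjugate transposes gives $A^*=\sum_i C_i^*\otimes D_i^*$, so that
$$A=\frac{A+A^*}{2}=\frac{1}{2}\sum_{i=1}^N\left(C_i\otimes D_i+C_i^*\otimes D_i^*\right).$$
This already symmetrizes $A$, and it remains to rewrite each summand in terms of Hermitian tensors.

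The key step is a polarization computation. Writing $C_i=P_i+iQ_i$ and $D_i=R_i+iS_i$, where $P_i=\tfrac{1}{2}(C_i+C_i^*)$, $Q_i=\tfrac{1}{2i}(C_i-C_i^*)$ and similarly for $R_i,S_i$, the four matrices $P_i,Q_i\in M_k$ and $R_i,S_i\in M_m$ are all Hermitian, and $C_i^*=P_i-iQ_i$, $D_i^*=R_i-iS_i$. A direct expansion then shows that the imaginary cross terms $P_i\otimes S_i+Q_i\otimes R_i$ cancel, yielding $C_i\otimes D_i+C_i^*\otimes D_i^*=2\,(P_i\otimes R_i-Q_i\otimes S_i)$. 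Substituting this back gives
$$A=\sum_{i=1}^N\left(P_i\otimes R_i-Q_i\otimes S_i\right),$$
which is a decomposition of $A$ into tensor products of Hermitian matrices, that is, a Hermitian decomposition in $M_k\otimes M_m$.

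There is no serious obstacle here: the only point requiring care is the cancellation of the imaginary part in the polarization identity, which is a routine verification. Observe that this lemma only asserts the existence of \emph{a} Hermitian decomposition, not a minimal one; the linear independence of the factors needed for a minimal Hermitian decomposition can be extracted afterward by discarding redundant terms and invoking Theorem \ref{theoremprincipal}.
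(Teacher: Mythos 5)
Your proof is correct. It is built on the same polarization device as the paper's proof --- splitting each tensor factor as $C=P+iQ$ with $P,Q$ Hermitian --- but the two arguments exploit the Hermiticity of $A$ at different points and in different ways. The paper expands the original decomposition directly, obtaining $A=\sum_j\bigl(H_1^j\otimes S_1^j-H_2^j\otimes S_2^j\bigr)+i\sum_j\bigl(H_1^j\otimes S_2^j+H_2^j\otimes S_1^j\bigr)$, and then kills the second sum by a structural argument: that term is anti-Hermitian (being $i$ times a Hermitian matrix) and simultaneously Hermitian (being $A$ minus a Hermitian matrix), hence zero. You instead invoke Hermiticity at the outset, replacing $A$ by $\tfrac12(A+A^*)$, after which the unwanted cross terms cancel pairwise inside each summand $C_i\otimes D_i+C_i^*\otimes D_i^*$ by a purely algebraic identity, so no global vanishing argument is needed. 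The cost is the same in both cases (at most doubling the number of terms), but your cancellation is local and self-contained, whereas the paper's route yields in passing the slightly stronger, though unused, fact that the cross sum attached to the original decomposition vanishes identically. One side remark: your closing comment attributes the passage to a \emph{minimal} Hermitian decomposition to discarding redundant terms plus theorem \ref{theoremprincipal}; in the paper that step (lemma \ref{minimalHermitiandecomposition}) actually hinges on lemma \ref{real}, which guarantees that the coefficients expressing a dependent factor are real, so that regrouping preserves hermiticity. This does not affect the lemma you were asked to prove, which only claims existence.
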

\begin{proof} Let $A_1\otimes B_1+...+A_n\otimes B_n$ be any decomposition of $A$ in\\ $M_{k}\otimes M_{m}$.  Let  $A_j=H_1^j+iH_2^j$ and $B_j=S_1^j+iS_2^j$, where $H_k^j$ and $S_k^j$ are Hermitian matrices for every $k$ and $j$.  So $A=\sum_{j=1}^n A_j\otimes B_j=$
\begin{center} $\sum_{j=1}^nH_1^j\otimes S_1^j+\sum_{j=1}^n(-H_2^j)\otimes S_2^j +i(\sum_{j=1}^nH_1^j\otimes S_2^j+\sum_{j=1}^nH_2^j\otimes S_1^j)$.
\end{center}
Notice that $i(\sum_{j=1}^nH_1^j\otimes S_2^j+\sum_{j=1}^nH_2^j\otimes S_1^j)$ is anti-hemitian matrix, because $\sum_{j=1}^nH_1^j\otimes S_2^j+\sum_{j=1}^nH_2^j\otimes S_1^j$ is a Hermitian matrix. But 
\begin{center}
$i(\sum_{j=1}^nH_1^j\otimes S_2^j+\sum_{j=1}^nH_2^j\otimes S_1^j)= A- (\sum_{j=1}^nH_1^j\otimes S_1^j+\sum_{j=1}^n(-H_2^j)\otimes S_2^j)$,
\end{center}
 So $i(\sum_{j=1}^nH_1^j\otimes S_2^j+\sum_{j=1}^nH_2^j\otimes S_1^j)$ is also Hermitian as a difference of two Hermitian matrices. We obtain  $i(\sum_{j=1}^nH_1^j\otimes S_2^j+\sum_{j=1}^nH_2^j\otimes S_1^j)=0$.
 
Finally $A=\sum_{j=1}^nH_1^j\otimes S_1^j+\sum_{j=1}^n(-H_2^j)\otimes S_2^j$, which is a Hermitian decomposition for $A$.

\end{proof}

\begin{lemma}\label{real} Let A be a Hermitian matrix and $\{A_1,...,A_n\}$ be a linear independent set of Hermitian matrices. If $A=\sum_{k=1}^n z_k A_k$ then $z_k\in\mathbb{R}$ for every k. 

\end{lemma}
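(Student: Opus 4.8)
The plan is to exploit that Hermiticity is preserved under the conjugate transpose operation $\overline{(\cdot)}^t$, which acts conjugate-linearly on scalars. First I would apply $\overline{(\cdot)}^t$ to both sides of $A=\sum_{k=1}^n z_k A_k$. Since each $A_k$ is Hermitian, $\overline{A_k}^t=A_k$, and since the conjugate transpose sends $z_k A_k$ to $\overline{z_k}\,\overline{A_k}^t=\overline{z_k}A_k$, the right-hand side becomes $\sum_{k=1}^n \overline{z_k}A_k$. On the left-hand side, the hypothesis that $A$ is Hermitian gives $\overline{A}^t=A$, so both sides are expressions for the single matrix $A$.

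Equating these two expressions yields $\sum_{k=1}^n z_k A_k=\sum_{k=1}^n \overline{z_k}A_k$, hence $\sum_{k=1}^n (z_k-\overline{z_k})A_k=0$. At this point the linear independence of $\{A_1,\ldots,A_n\}$ forces every coefficient to vanish, so $z_k-\overline{z_k}=0$ for each $k$. Since $z-\overline{z}=0$ characterizes real numbers, this gives $z_k\in\mathbb{R}$, which is the claim.

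There is no genuine obstacle here; the only point requiring a little care is that the conjugate transpose is conjugate-linear rather than linear, so the coefficients acquire a complex conjugation — and this is exactly what produces the constraint $z_k=\overline{z_k}$ rather than a triviality. The linear independence hypothesis is invoked precisely once, at the final step, to pass from a vanishing linear combination of the $A_k$ to the vanishing of each individual coefficient $z_k-\overline{z_k}$.
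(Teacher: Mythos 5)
Your proof is correct and is essentially identical to the paper's: both apply the conjugate-transpose to $A=\sum_{k=1}^n z_k A_k$, use Hermiticity of $A$ and the $A_k$ to get $\sum_{k=1}^n \overline{z_k}A_k = \sum_{k=1}^n z_k A_k$, and invoke linear independence to conclude $z_k=\overline{z_k}$.
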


\begin{proof}$A=\overline{A^t}=\sum_{k=1}^n\overline{z_k}\overline{A_k^t}=\sum_{k=1}^n\overline{z_k}A_k$. Since $\{A_1,...,A_n\}$ is a linear independent set we have $\overline{z_k}=z_k$.
\end{proof}

\begin{lemma}\label{minimalHermitiandecomposition}If  A is a Hermitian matrix in $M_{k}\otimes M_{m}$ then A has a minimal Hermitian decomposition.
\end{lemma}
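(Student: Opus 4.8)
The plan is to start from an arbitrary Hermitian decomposition of $A$ (which exists by the previous lemma) and then compress it to a minimal one while preserving the Hermitian structure. Concretely, among all Hermitian decompositions $A=\sum_{i=1}^n A_i\otimes B_i$ of $A$ (with every $A_i$ and $B_i$ Hermitian), I would fix one for which the number of terms $n$ is as small as possible. The claim is then that this particular decomposition is automatically minimal, i.e. that $\{A_1,\dots,A_n\}$ and $\{B_1,\dots,B_n\}$ are both linearly independent sets.

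To prove the claim I would argue by contradiction. Suppose, without loss of generality, that $\{A_1,\dots,A_n\}$ is linearly dependent, and relabel so that $\{A_1,\dots,A_p\}$ is a maximal linearly independent subset with $p<n$. For each $j\in\{p+1,\dots,n\}$ the Hermitian matrix $A_j$ lies in the span of the linearly independent Hermitian set $\{A_1,\dots,A_p\}$, so Lemma \ref{real} guarantees that the coefficients in the expression $A_j=\sum_{i=1}^p c_{ji}A_i$ are real. Substituting and collecting the tensor factors on the second slot gives
\[
A=\sum_{i=1}^p A_i\otimes\Bigl(B_i+\sum_{j=p+1}^n c_{ji}B_j\Bigr).
\]
Because each $c_{ji}$ is real and each $B_i,B_j$ is Hermitian, every matrix $B_i+\sum_{j} c_{ji}B_j$ is again Hermitian, so this is a Hermitian decomposition of $A$ with only $p<n$ terms, contradicting the minimality of $n$. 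The same argument applies verbatim if instead $\{B_1,\dots,B_n\}$ is linearly dependent (collecting factors in the first slot). Hence both sets are linearly independent and the chosen decomposition is minimal.

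The real content — and the step I expect to be the only genuine obstacle — is the reality of the coefficients $c_{ji}$: without it, the recombined factors $B_i+\sum_j c_{ji}B_j$ need not be Hermitian, and the compression would leave the class of Hermitian decompositions. This is precisely why the maximal linearly independent subset is chosen first, so that Lemma \ref{real} can be invoked to force the $c_{ji}$ into $\mathbb{R}$. Everything else is the routine bookkeeping of eliminating dependent terms, which is why passing to a decomposition with the fewest terms packages the reduction cleanly into a single minimality argument rather than an explicit induction.
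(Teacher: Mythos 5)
Your proof is correct and takes essentially the same approach as the paper: both eliminate dependent tensor factors by expanding them over a (maximal) linearly independent Hermitian subset and invoking Lemma \ref{real} to force the coefficients to be real, so that the recombined factors stay Hermitian. The only difference is cosmetic --- you package termination as an extremal choice (a Hermitian decomposition with the fewest terms) plus a contradiction, whereas the paper performs the same one-term reduction explicitly and repeats it until a minimal decomposition is reached.
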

\begin{proof} Let $\sum_{i=1}^n A_i\otimes B_i$ be a Hermitian decomposition of A. If $\{A_1,...,A_n\}$ and $\{B_1,...,B_n\}$ are linear independent sets then $\sum_{i=1}^n A_i\otimes B_i$ is a minimal Hermitian decomposition. If not suppose that $\{A_1,...,A_n\}$ is a linear dependent set. Extract from $\{A_1,...,A_n\}$ a basis for $(A_1,..,A_n)$. Without loss of generality suppose $\{A_1,...,A_k\}$ is this basis. Write $A_n=\sum_{i=1}^k z_iA_i$. By lemma \ref{real}, $z_i\in\mathbb{R}$. So $\sum_{i=1}^n A_i\otimes B_i=\sum_{i=1}^{k} A_i\otimes (B_i+z_iB_n)+\sum_{i=k+1}^{n-1} A_i\otimes B_i$. Notice that $B_i+z_iB_n,\ 1\leq i\leq k$, are Hermitian matrices. Therefore we got a smaller Hermitian decomposition for $A$. Repeat the argument until you find a minimal Hermitian decomposition.
\end{proof}

\subsection{Hermitian Schmidt Decomposition}
\indent\\

If you consider the inner product in $M_{m}$ as $\langle C,D\rangle=\mbox{trace}(CD^*)$, we can define a Schmidt decomposition of a matrix $A_{km\times km}$.

\begin{definition} A decomposition of a matrix A in $ M_{k}\otimes M_{m}$,
$$\sum_{i=1}^n \lambda_i \gamma_i\otimes \delta_i$$
is a Schmidt decomposition if $\{\gamma_i|1\leq i\leq n\}\subset M_k$,\ $\{\delta_i|1\leq i\leq n\}\subset M_m$ are orthonormal sets,  $\lambda_i\in\mathbb{R}$ and $\lambda_i>0$.
Also, if $\gamma_i$ and $\delta_i$ are Hermitian matrices for every $i$, then $\sum_{i=1}^n \lambda_i \gamma_i\otimes \delta_i$ is a Hermitian Schmidt decomposition of A.
 Notice that every Schmidt decomposition is minimal.
\end{definition}
 
Next we present a quick proof of the existence of such decomposition for every Hermitian matrix in $M_{k}\otimes M_{m}$ for the convenience of the reader, another proof can be found in \cite{herbut}. First we need to define the supports of a matrix. The definition of the supports was used many times in this paper. \\

\begin{definition} \label{supp}
Let $\sum_{i=1}^n A_i\otimes B_i$ be a minimal Hermitian decomposition of $A_{mk\times mk}$ on $M_{k}\otimes M_{m}$. Define the supports of $A$ as\\
$supp_1(A)=\mbox{the real subspace generated by the Hermitian matrices}\ A_1,...,A_n$,\\  
$supp_2(A)=\mbox{the real subspace generated by the Hermitian matrices}\  B_1,...,B_n.$

Notice that the matrices in $supp_1(A)$ and $supp_2(A)$ are Hermitians. These supports do not depend on the minimal Hermitian decomposition of A, because of theorem \ref{theoremprincipal} and lemma \ref{real}.\\
\end{definition}

\begin{lemma}\label{adjoint}Let $A\in M_{k}\otimes M_{m}$ then  $tr(F_A(Y)X)=tr(YG_A(X))$, where $F_A$ and $G_A$ are the maps of definition \ref{defmaps}.
\end{lemma}
\begin{proof}
$tr(F_A(Y)X)=tr(\sum_{i=1}^ntr(YB_i)A_iX)=\sum_{i=1}^ntr(YB_i)tr(A_iX)=$ 

$=tr(Y\sum_{i=1}^nB_itr(A_iX))=tr(Y\sum_{i=1}^nB_itr(XA_i))=tr(YG_A(X))$.
\end{proof}

\begin{remark}\label{remarkadjoint}
If we restrict the domain and the codomain of $F_A$ and $G_A$ to the supports of $A$,
$$F_A:supp_2(A)\rightarrow supp_1(A)\hspace{0.5cm}G_A:supp_1(A)\rightarrow supp_2(A),$$ these maps become real linear tranformations. The inner product restricted to the set of Hermitian matrices is only the trace, therefore by lemma \ref{adjoint}, these are now adjoints maps.
\end{remark}

\begin{theorem}\label{SVD} Every Hermitian matrix $A_{km\times km}$ has a Hermitian Schmidt decomposition on $M_{k}\otimes M_{m}$.
\end{theorem}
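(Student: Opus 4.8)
The plan is to obtain the decomposition as the singular value decomposition of the adjoint pair $F_A, G_A$ restricted to the supports of $A$, reinterpreted as a tensor. First I would use Lemma \ref{minimalHermitiandecomposition} to fix a minimal Hermitian decomposition $A=\sum_{i=1}^n A_i\otimes B_i$. Then $supp_1(A)=(A_1,\dots,A_n)$ and $supp_2(A)=(B_1,\dots,B_n)$ are real vector spaces of Hermitian matrices, both of dimension $n$, equipped with the trace inner product, and by Remark \ref{remarkadjoint} the restricted maps $F_A\colon supp_2(A)\to supp_1(A)$ and $G_A\colon supp_1(A)\to supp_2(A)$ are mutually adjoint real linear transformations. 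The desired Hermitian Schmidt decomposition is precisely the singular value decomposition of $F_A$ read back into $M_k\otimes M_m$.

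The next step is to run the SVD. I would first check that $F_A|_{supp_2(A)}$ is injective, hence bijective since the two supports share dimension $n$: if $F_A(Y)=\sum_i tr(YB_i)A_i=0$ with $Y\in supp_2(A)$, then linear independence of $\{A_1,\dots,A_n\}$ gives $tr(YB_i)=0$ for all $i$, and writing $Y=\sum_j c_jB_j$ this places $c$ in the kernel of the Gram matrix $[tr(B_iB_j)]$, which is positive definite because $\{B_1,\dots,B_n\}$ is linearly independent, so $Y=0$. Consequently $G_A\circ F_A\colon supp_2(A)\to supp_2(A)$ is self-adjoint and positive definite, and the real spectral theorem yields an orthonormal basis $\{\delta_1,\dots,\delta_n\}$ of $supp_2(A)$ of Hermitian eigenvectors with eigenvalues $\mu_i>0$. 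Setting $\lambda_i=\sqrt{\mu_i}>0$ and $\gamma_i=\lambda_i^{-1}F_A(\delta_i)$, the computation $\langle\gamma_i,\gamma_j\rangle=\lambda_i^{-1}\lambda_j^{-1}\langle\delta_i,G_AF_A(\delta_j)\rangle=\lambda_j\lambda_i^{-1}\langle\delta_i,\delta_j\rangle$ shows that $\{\gamma_1,\dots,\gamma_n\}$ is an orthonormal family of Hermitian matrices in $supp_1(A)$.

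It then remains to verify that $A=\sum_{i=1}^n\lambda_i\gamma_i\otimes\delta_i=:\tilde A$. By Corollary \ref{corollaryAB} we have $A=F_A((\cdot)^t)\otimes Id(u\overline{u}^t)$, so $A$ is completely determined by $F_A$; hence it suffices to prove $F_A=F_{\tilde A}$ on all of $M_m$. On $supp_2(A)$ the maps agree by construction: $F_{\tilde A}(\delta_j)=\sum_i\lambda_i\,tr(\delta_j\delta_i)\gamma_i=\lambda_j\gamma_j=F_A(\delta_j)$ using orthonormality of the $\delta_i$. For a general $Y\in M_m$ I would write $Y=H_1+iH_2$ with $H_1,H_2$ Hermitian and decompose each $H_s=H_s^{\parallel}+H_s^{\perp}$ into its orthogonal projection onto $supp_2(A)$ and the complement. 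Since every coefficient $tr(H_sB_i)$ and $tr(H_s\delta_i)$ detects only $H_s^{\parallel}$ (because $B_i,\delta_i\in supp_2(A)$), both $F_A$ and $F_{\tilde A}$ factor through the projection onto $supp_2(A)$, where they already coincide; complex linearity in $Y$ then forces $F_A=F_{\tilde A}$ everywhere, whence $A=\tilde A$.

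The main point to get right is not the SVD itself but the two features that make the result a genuine Hermitian Schmidt decomposition of the original $A$. Hermiticity of the $\gamma_i$ and $\delta_i$ comes for free because the whole argument takes place inside the real inner product spaces $supp_1(A)$ and $supp_2(A)$ of Hermitian matrices, where $F_A$ and $G_A$ are honest adjoint operators by Remark \ref{remarkadjoint}; and the reconstruction $A=\tilde A$ rests on $A$ being recoverable from $F_A$ via Corollary \ref{corollaryAB} together with the observation that $F_A(Y)$ sees only the $supp_2(A)$-component of $Y$. Positivity of the $\lambda_i$ and orthonormality of both families are exactly the positive definiteness and spectral data produced in the injectivity step.
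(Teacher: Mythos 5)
Your proof is correct and takes essentially the same route as the paper: both are the singular value decomposition of the adjoint pair $F_A,G_A$ restricted to the supports, obtained from the real spectral theorem applied to their composition. The differences are minor — you diagonalize $G_A\circ F_A$ on $supp_2(A)$ rather than $F_A\circ G_A$ on $supp_1(A)$, and you spell out the injectivity (hence $\lambda_i>0$) and the reconstruction $A=\sum_i\lambda_i\gamma_i\otimes\delta_i$ via Corollary \ref{corollaryAB}, steps the paper handles implicitly by expanding $A$ in the eigenbasis as $A=\sum_i\gamma_i\otimes G_A(\gamma_i)$ and normalizing.
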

\begin{proof} Let $\sum_{i=1}^n A_i\otimes B_i$ be a minimal Hermitian decomposition of $A_{mk\times mk}$. Since $F_A$ and $G_A$ are adjoints,  $F_A\circ G_A:supp_1(A)\rightarrow supp_1(A)$ is self-adjoint. Thus, there exists an orthonormal basis for $supp_1(A)$, formed by eigenvectors of $F_A\circ G_A$. 
Let this basis be $\gamma=\{\gamma_1,...,\gamma_n\}$. 

Thus $A=\gamma_1\otimes\delta_1+...+\gamma_n\otimes\delta_n$, where $\delta_i=G_A(\gamma_i)$. 

Since $\gamma_i,\gamma_j$ are orthogonal eigenvectors of $F_A\circ G_A$, 
\begin{center}
$\langle\delta_j,\delta_i\rangle=\langle G_A(\gamma_j),G_A(\gamma_i)\rangle =\langle\gamma_j,F_A\circ G_A(\gamma_i)\rangle=0.$
\end{center}
Thus $|\delta_1|\gamma_1\otimes\frac{\delta_1}{|\delta_1|}+...+|\delta_n|\gamma_n\otimes\frac{\delta_n}{|\delta_n|}$ is a Hermitian Schmidt decomposition of A.
\end{proof}

\begin{remark}
\end{remark}
This proof is analogous to the proof of SVD decomposition of a linear transformation (see page 204, reference \cite{simon}), although SVD decomposition does not garantee the hermiticity of the decomposition.

\section{Split Decompositions for SPC and PPT Matrices}

We begin this section defining SPC and PPT matrices considered in the main theorems of this paper. Then we  prove some results that holds for both types. Actually, some theorems concerning  PPT matrices are consequences of the theorems obtained for SPC matrices. 

The main results of this section are theorems \ref{theomiracle1} and corollary \ref{corollarytheomiracle1}, theorems \ref{splitdecompositionSPC} and \ref{splitdecompositionPPT}.
 The first two results actually show that under certain hypothesis on the Hermitian Schmidt decomposition, the SPC/PPT matrices are separable in a very strong sense (see definition \ref{definitionseparability}).
The last two results are called the split decompositions of the SPC/PPT matrices.

In section 3, we continue to obtain theorems that holds for both types concerning the weak irreducible property and we noticed that SPC/PPT matrices that have trivial split decomposition are in fact weak irreducible.

In our final result (in section 5), we provide a sharp inequalities that provide separability for SPC/PPT matrices. 
Since a necessary condition for separability of a matrix $A$ is to be PPT, we obtain quite a general result.

 \begin{definition}  \label{definitionSPC}\textbf{$($SPC matrices$)$}Let $A\in M_k\otimes M_k$ be a positive semidefinite Hermitian matrix. We say that $A$  is symmetric with positive coefficients or simply SPC, if $A$ has the following Hermitian Schmidt decomposition: $A=\sum_{i=1}^n\lambda_i\gamma_i\otimes\gamma_i$ such that $\lambda_i>0$ for every $i$.
\end{definition}

\begin{definition} \label{definitionPPT}\textbf{$($PPT matrices$)$}Let $A\in M_k\otimes M_m$ be a positive semidefinite Hermitian matrix. 
We say that $ A=\sum_{i}A_i\otimes B_i$ is positive under partial transposition or simply PPT, if 
$A^{t_1}=(\cdot)^t\otimes Id(A)=\sum_{i}A_i^t\otimes B_i$ still is a positive semidefinite Hermitian matrix.\\
\end{definition}

To obtain our first theorem regarding SPC matrices (theorem \ref{theomiracle1}), we need the following lemmas. 

\begin{lemma}\label{new3}
Let $A\in M_k\otimes M_m$ be a positive semidefinite Hermitian matrix. Let $\sum_{i=1}^n\lambda_i\gamma_i\otimes\delta_i$ be a Hermitian Schmidt decomposition of $A$. If  for some $j$, the matrices $\gamma_j$ and $\delta_j$ are positive semidefinite then 
$$[\Im(\gamma_j)\otimes\ker(\delta_j)]\oplus[\ker(\gamma_j)\otimes\Im(\delta_j)]\subset\ker(A).$$

\end{lemma}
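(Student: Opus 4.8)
The plan is to reduce the statement to single product vectors and then, rather than evaluate the quadratic form of $A$ on them directly, to test $A$ against a positive semidefinite matrix manufactured out of $\delta_j$.

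First I would shrink the problem. Because $\ker(A)$ is a subspace and $\Im(\gamma_j)\otimes\ker(\delta_j)$ is spanned by the product vectors $a\otimes b$ with $a\in\Im(\gamma_j)$ and $b\in\ker(\delta_j)$ (and likewise for the other summand), it suffices to show each such product vector lies in $\ker(A)$; the sum is direct because $\gamma_j$ and $\delta_j$ are Hermitian, so $\mathbb{C}^k=\Im(\gamma_j)\oplus\ker(\gamma_j)$ and $\mathbb{C}^m=\Im(\delta_j)\oplus\ker(\delta_j)$ are orthogonal splittings. The flip $x\otimes y\mapsto y\otimes x$ is a unitary carrying $A$ to $\sum_i\lambda_i\delta_i\otimes\gamma_i$, which is again positive semidefinite with a Hermitian Schmidt decomposition whose $j$-th term has both factors positive semidefinite, and it interchanges the two summands. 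Hence I only need to prove one inclusion, say $\ker(\gamma_j)\otimes\Im(\delta_j)\subset\ker(A)$, for an arbitrary such matrix; the other then follows by applying the same statement to the flipped matrix. It is worth stressing that the naive expansion $\overline{(a\otimes b)}^t A(a\otimes b)=\sum_i\lambda_i(\overline{a}^t\gamma_i a)(\overline{b}^t\delta_i b)$ annihilates only the $j$-th term and leaves signed cross terms, so the conclusion cannot be read off termwise and must come from the global positivity of $A$. This is the real obstacle.

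The mechanism I would use is the following. Since $A$ is positive semidefinite, write $A=\sum_c\eta_c\overline{\eta_c}^t$ with $\eta_c\in\mathbb{C}^k\otimes\mathbb{C}^m$. Fix $a\in\ker(\gamma_j)$. The crucial observation is that, by orthonormality of $\{\delta_1,\dots,\delta_n\}$ in the trace inner product, $tr\big((a\overline{a}^t\otimes\delta_j)A\big)=\sum_i\lambda_i\,(\overline{a}^t\gamma_i a)\,tr(\delta_j\delta_i)=\lambda_j\,(\overline{a}^t\gamma_j a)=0$, the last equality because $\gamma_j a=0$. Here it is essential that $\delta_j$ is positive semidefinite, so that $a\overline{a}^t\otimes\delta_j$ is positive semidefinite and factors as $(a\otimes\delta_j^{1/2})(a\otimes\delta_j^{1/2})^*$. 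Expanding the trace through the decomposition gives $\sum_c\overline{\eta_c}^t(a\overline{a}^t\otimes\delta_j)\eta_c=0$, a sum of nonnegative terms; hence each one vanishes, and since $\overline{\eta_c}^t(a\overline{a}^t\otimes\delta_j)\eta_c=\|(\overline{a}^t\otimes\delta_j^{1/2})\eta_c\|^2$ I conclude $(\overline{a}^t\otimes\delta_j^{1/2})\eta_c=0$ for every $c$.

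To finish, take any $b\in\Im(\delta_j)=\Im(\delta_j^{1/2})$ and write $b=\delta_j^{1/2}b'$. Then for every $c$, using that $\delta_j^{1/2}$ is Hermitian, $\langle\eta_c,a\otimes b\rangle=\langle\eta_c,(Id\otimes\delta_j^{1/2})(a\otimes b')\rangle=\langle(\overline{a}^t\otimes\delta_j^{1/2})\eta_c,\,b'\rangle=0$. Therefore $\overline{(a\otimes b)}^t A(a\otimes b)=\sum_c|\langle\eta_c,a\otimes b\rangle|^2=0$, and as $A$ is positive semidefinite this forces $a\otimes b\in\ker(A)$, the desired inclusion. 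The main difficulty, as flagged above, is conceptual: one must avoid testing $A$ on the product vector itself and instead pair $A$ with the positive semidefinite matrix $a\overline{a}^t\otimes\delta_j$, whose trace pairing collapses by orthonormality to the single scalar $\lambda_j\,\overline{a}^t\gamma_j a$; transferring this vanishing to each rank-one piece $\eta_c\overline{\eta_c}^t$ and then exploiting $\Im(\delta_j)=\Im(\delta_j^{1/2})$ is exactly what makes the cross terms disappear.
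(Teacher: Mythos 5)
Your proof is correct. Its first half is in fact the mirror image of the paper's: the paper also trace-pairs $A$ against a positive semidefinite product test matrix, namely $\gamma_j\otimes w\overline{w}^t$ with $w\in\ker(\delta_j)$, and collapses the sum by Schmidt orthonormality to $\lambda_j\,tr(\delta_j w\overline{w}^t)=0$, exactly as you do with $a\overline{a}^t\otimes\delta_j$ and $a\in\ker(\gamma_j)$. The genuine difference is in how the vanishing trace is converted into kernel membership. The paper expands the \emph{test matrix} through the spectral decomposition $\gamma_j=\sum_i a_i v_i\overline{v_i}^t$ with $a_i>0$, so the trace becomes $\sum_i a_i\,\overline{(v_i\otimes w)}^t A(v_i\otimes w)$, a nonnegative combination of quadratic forms of $A$ on product vectors; each term must vanish, which gives $v_i\otimes w\in\ker(A)$ at once, and the $v_i$ span $\Im(\gamma_j)$. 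You instead decompose $A=\sum_c\eta_c\overline{\eta_c}^t$, keep the test matrix whole, factor it as $(a\otimes\delta_j^{1/2})(a\otimes\delta_j^{1/2})^*$, and then need the extra fact $\Im(\delta_j)=\Im(\delta_j^{1/2})$ to reach an arbitrary $b\in\Im(\delta_j)$. Both routes are sound; the paper's choice of decomposing the positive Schmidt factor rather than $A$ makes each term a quadratic form of $A$ on a product vector, so kernel membership is immediate and no square roots are required, while your version trades that for the (standard) square-root and image argument. Your flip reduction to a single inclusion is correct but buys little, since the argument is symmetric in the two factors anyway (the paper simply says the other inclusion is analogous).
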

\begin{proof}

Let $w\in \ker(\delta_j)$ and let $\sum_ia_iv_i\overline{v_i}^t$ be a spectral decomposition of $\gamma_j$. Thus $a_i>0$.
Notice that $tr(A(\gamma_j\otimes w\overline{w}^t))=\lambda_j tr(\delta_j w\overline{w}^t)=0$, therefore $$0=tr(A(\gamma_j\otimes w\overline{w}^t))=\sum_ia_itr(A(v_i\overline{v_i}^t\otimes w\overline{w}^t)).$$
Since $a_i>0$ and $tr(A(v_i\overline{v_i}^t\otimes w\overline{w}^t))\geq 0$ we got $tr(A(v_i\overline{v_i}^t\otimes w\overline{w}^t))= 0$ for every $i$. Therefore $v_i\otimes w\in\ker(A)$.

Now the eigenvectors of $\gamma_j$ associated to the eigenvalues $a_i>0$, i.e., the vectors $v_i$, span  $\Im(\gamma_j)$. Then $\Im(\gamma_j)\otimes\ker(\delta_j)\subset\ker(A)$.

To obtain $\ker(\gamma_j)\otimes\Im(\gamma_j)\subset\ker(A)$, the argument is analogous.
\end{proof}

\begin{lemma}\label{new5} Let $A\in M_k\otimes M_k$ be a SPC matrix. 
If for some $\gamma_j$ in the Hermitian Schmidt decomposition of $A$, $\gamma_j$ is a positive semidefinite Hermitian matrix  and if

\begin{enumerate}
\item $r\in\Im(\gamma_j)$ and $s\in\ker(\gamma_j)$ or
\item $r\in\ker(\gamma_j)$ and $s\in\Im(\gamma_j)$
\end{enumerate}
 then  $tr(\gamma_i (r\overline{s}^t))=0$ for every $1\leq i\leq n$. 
\end{lemma}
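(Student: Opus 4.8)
The plan is to reduce everything to Lemma \ref{new3} together with a single well-chosen trace computation. Since $A$ is SPC, its Hermitian Schmidt decomposition is $A=\sum_{i=1}^n\lambda_i\gamma_i\otimes\gamma_i$, so in the notation of Lemma \ref{new3} we have $\delta_j=\gamma_j$. The hypothesis that $\gamma_j$ is positive semidefinite therefore lets me apply that lemma, which gives $[\Im(\gamma_j)\otimes\ker(\gamma_j)]\oplus[\ker(\gamma_j)\otimes\Im(\gamma_j)]\subset\ker(A)$. In case $(1)$ the vector $r\otimes s$ lies in $\Im(\gamma_j)\otimes\ker(\gamma_j)$ and in case $(2)$ it lies in $\ker(\gamma_j)\otimes\Im(\gamma_j)$; in either case $r\otimes s\in\ker(A)$, i.e. $A(r\otimes s)=0$. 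This is the only place the two cases differ, and both are treated by the same subsequent argument.

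Write $\beta_i=tr(\gamma_i r\overline{s}^t)$; the goal is to show every $\beta_i$ vanishes. The key step is to evaluate the scalar $tr\big(A\,(r\overline{s}^t\otimes s\overline{r}^t)\big)$ in two different ways. First, expanding $A$ through its Schmidt decomposition and using $tr\big((\gamma_i\otimes\gamma_i)(X\otimes Y)\big)=tr(\gamma_iX)\,tr(\gamma_iY)$, together with the identity $tr(\gamma_i s\overline{r}^t)=\overline{tr(\gamma_i r\overline{s}^t)}=\overline{\beta_i}$ (valid since each $\gamma_i$ is Hermitian), I obtain $tr\big(A\,(r\overline{s}^t\otimes s\overline{r}^t)\big)=\sum_{i=1}^n\lambda_i\,\beta_i\overline{\beta_i}=\sum_{i=1}^n\lambda_i|\beta_i|^2$.

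The crucial observation for the second evaluation is that the tested matrix factors as a rank-one matrix built from the kernel vector: $r\overline{s}^t\otimes s\overline{r}^t=(r\otimes s)\,\overline{(s\otimes r)}^t$, since $\overline{(s\otimes r)}^t=\overline{s}^t\otimes\overline{r}^t$ and $(r\otimes s)(\overline{s}^t\otimes\overline{r}^t)=(r\overline{s}^t)\otimes(s\overline{r}^t)$. Consequently $tr\big(A\,(r\overline{s}^t\otimes s\overline{r}^t)\big)=\overline{(s\otimes r)}^t\,A\,(r\otimes s)$, which is $0$ because $A(r\otimes s)=0$. Comparing the two evaluations gives $\sum_{i=1}^n\lambda_i|\beta_i|^2=0$, and since every $\lambda_i>0$ this forces $\beta_i=tr(\gamma_i r\overline{s}^t)=0$ for all $i$, as required.

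The one genuine obstacle is spotting the correct test matrix $r\overline{s}^t\otimes s\overline{r}^t$: it must simultaneously reproduce $\sum_i\lambda_i|\beta_i|^2$ when paired against the \emph{symmetric} Schmidt decomposition (this is where SPC, rather than a general positive semidefinite matrix, is used, via $\delta_i=\gamma_i$), and collapse to $\overline{(s\otimes r)}^t A(r\otimes s)$ so that membership of $r\otimes s$ in $\ker(A)$ annihilates it. Everything else is routine: the trace expansion and the Hermiticity identity $tr(\gamma_i s\overline{r}^t)=\overline{\beta_i}$ are immediate, and positivity of the $\lambda_i$ converts the vanishing sum of squares into the vanishing of each $\beta_i$.
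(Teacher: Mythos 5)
Your proof is correct and follows essentially the same route as the paper: apply Lemma \ref{new3} (with $\delta_j=\gamma_j$ by the SPC hypothesis) to get $r\otimes s\in\ker(A)$, then evaluate $tr\bigl(A(r\overline{s}^t\otimes s\overline{r}^t)\bigr)=tr\bigl(A(r\otimes s)\overline{(s\otimes r)}^t\bigr)$ both as $0$ and as $\sum_i\lambda_i|tr(\gamma_i r\overline{s}^t)|^2$, and use $\lambda_i>0$. The only cosmetic difference is that you treat the two cases uniformly via the full kernel inclusion of Lemma \ref{new3}, where the paper handles case (2) by saying the argument is analogous.
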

\begin{proof} 

Let $r\in\Im(\gamma_j)$, $s\in\ker(\gamma_j)$, therefore $r\otimes s \in \Im(\gamma_j)\otimes\ker(\gamma_j)$. 

Now since $\gamma_j$ is a positive semidefinite Hermitian matrix, by lemma \ref{new3}, $r\otimes s\in \ker(A)$.

Therefore, $0=tr(A(r\otimes s)\overline{(s\otimes r )}^t)=tr(A(r\overline{s}^t\otimes s\overline{r}^t))$ 
$$=\sum_{i=1}^n\lambda_i tr(\gamma_ir\overline{s}^t)tr(\gamma_i s\overline{r}^t)=\sum_{i=1}^n\lambda_i tr(\gamma_ir\overline{s}^t)\overline{tr(\gamma_ir\overline{s}^t)}$$
Since $\lambda_i>0$ and $tr(\gamma_ir\overline{s}^t)\overline{tr(\gamma_ir\overline{s}^t)}\geq0$ we obtain $tr(\gamma_ir\overline{s}^t)=0.$

For $r\in\ker(\gamma_j)$ and $s\in\Im(\gamma_j)$ the proof is analogous.

\end{proof}

\begin{corollary}\label{corollarynew5}  Let $A\in M_k\otimes M_k$ be a SPC matrix and  $\gamma_j\in M_k$ as in lemma \ref{new5}. 
Let $V_1\in M_k$ be the Hermitian projection onto $\Im(\gamma_j)$ and $V_2\in M_k$ be the Hermitian projection onto $\ker(\gamma_j)$. Then
$A=\sum_{i=1}^2(V_i\otimes V_i) A (V_i\otimes V_i)$.
\end{corollary}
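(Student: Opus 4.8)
The plan is to use the orthogonal decomposition $\mathbb{C}^k=\Im(\gamma_j)\oplus\ker(\gamma_j)$, which holds because $\gamma_j$ is Hermitian; hence $V_1+V_2=Id$ and $V_1,V_2$ are mutually orthogonal Hermitian projections. Then $Id\otimes Id=(V_1+V_2)\otimes(V_1+V_2)=\sum_{a,b\in\{1,2\}}V_a\otimes V_b$, and writing $A=(Id\otimes Id)\,A\,(Id\otimes Id)$ expands $A$ into sixteen blocks $(V_a\otimes V_b)A(V_c\otimes V_d)$. The whole point of the proof is to show that every block vanishes except the two ``diagonal'' ones $(V_1\otimes V_1)A(V_1\otimes V_1)$ and $(V_2\otimes V_2)A(V_2\otimes V_2)$.

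My first step is to invoke Lemma \ref{new3} (with $\delta_j=\gamma_j$) to remove all terms involving the mixed projections $V_1\otimes V_2$ and $V_2\otimes V_1$. Since $\Im(\gamma_j)\otimes\ker(\gamma_j)$ is the range of $V_1\otimes V_2$ and $\ker(\gamma_j)\otimes\Im(\gamma_j)$ is the range of $V_2\otimes V_1$, Lemma \ref{new3} says both ranges lie in $\ker(A)$, i.e. $A(V_1\otimes V_2)=A(V_2\otimes V_1)=0$; taking adjoints (all of $A,V_1,V_2$ being Hermitian) gives $(V_1\otimes V_2)A=(V_2\otimes V_1)A=0$. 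Setting $W=V_1\otimes V_1+V_2\otimes V_2$ and $W'=V_1\otimes V_2+V_2\otimes V_1$, I have $Id\otimes Id=W+W'$ with $AW'=W'A=0$, so that $A=(W+W')A(W+W')=WAW$.

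The second step is to expand $WAW$ into the four remaining blocks $(V_a\otimes V_a)A(V_b\otimes V_b)$, $a,b\in\{1,2\}$, and kill the two off-diagonal ones. Here I use Lemma \ref{new5}: the condition $tr(\gamma_i(r\overline{s}^t))=0$ whenever $r$ and $s$ lie in opposite summands is precisely the statement that every matrix element of $\gamma_i$ between $\Im(\gamma_j)$ and $\ker(\gamma_j)$ vanishes, that is $V_1\gamma_iV_2=V_2\gamma_iV_1=0$ for all $i$. Consequently $(V_1\otimes V_1)A(V_2\otimes V_2)=\sum_i\lambda_i(V_1\gamma_iV_2)\otimes(V_1\gamma_iV_2)=0$, and likewise $(V_2\otimes V_2)A(V_1\otimes V_1)=0$. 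This leaves exactly $A=(V_1\otimes V_1)A(V_1\otimes V_1)+(V_2\otimes V_2)A(V_2\otimes V_2)$, as claimed.

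The main obstacle I anticipate is not the algebra of projections but the correct translation of the trace conditions of Lemma \ref{new5} into the operator identities $V_1\gamma_iV_2=0$: one must match $tr(\gamma_i r\overline{s}^t)$ with the inner product $\overline{s}^t\gamma_i r$ and carefully track which vector lies in $\Im(\gamma_j)$ and which in $\ker(\gamma_j)$. It is also worth emphasizing why both lemmas are genuinely required. Lemma \ref{new3} alone only removes the mixed-projection blocks and still leaves the cross term $(V_1\otimes V_1)A(V_2\otimes V_2)$ untouched, so the finer block-diagonal information of Lemma \ref{new5} is indispensable for the final reduction to two summands.
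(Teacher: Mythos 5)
Your proof is correct and follows essentially the same route as the paper: Lemma \ref{new3} is used to annihilate the mixed blocks $A(V_i\otimes V_j)$ for $i\neq j$, and Lemma \ref{new5} is then translated into the operator identities $V_1\gamma_iV_2=V_2\gamma_iV_1=0$ to kill the remaining cross terms $(V_1\otimes V_1)A(V_2\otimes V_2)$ and $(V_2\otimes V_2)A(V_1\otimes V_1)$. The only cosmetic difference is that the paper obtains $V_1\gamma_iV_2=0$ by expanding $V_1,V_2$ in orthonormal bases of $\Im(\gamma_j)$ and $\ker(\gamma_j)$, whereas you read it off directly as the vanishing of matrix elements.
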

\begin{proof}
Recall that $A\in M_k\otimes M_k$ and $\gamma_j\in M_k$ are positive semidefinite and $\sum_{i=1}^n\lambda_i\gamma_i\otimes\gamma_i$ is a Hermitian Schmidt decomposition of $A$.

By lemma \ref{new3}, $A(V_i\otimes V_j)=0$ for $i\neq j$. Thus, $[A(V_i\otimes V_j)]^*=(V_i\otimes V_j)A=0$ for $i\neq j$ too. Since $V_1+V_2=Id\in M_k$, we obtain $A=\sum_{i,j=1}^2(V_i\otimes V_i) A (V_j\otimes V_j).$

Now, write $V_1=\sum_{m} r_m\overline{r_m}^t$ and $V_2=\sum_{l} s_l\overline{s_l}^t$, where the vectors $r_m$ form an orthonormal basis of $\Im(\gamma_j)$ and  the vectors $s_l$ form an orthonormal basis of $\ker(\gamma_j)$.

Now $\displaystyle V_1\gamma_i V_2=(\sum_{m}r_m\overline{r_m}^t)\gamma_i (\sum_{l}s_l\overline{s_l}^t)=\sum_{m,l} (r_m\overline{r_m}^t)\gamma_i (s_l\overline{s_l}^t)=$ 

\hspace{2,2cm}$\displaystyle =\sum_{m,l} (r_m\overline{s_l}^t)(\overline{r_m}^t\gamma_i s_l)=\sum_{m,l} (r_m\overline{s_l}^t) tr(\gamma_i s_l\overline{r_m}^t).$

Next, by item 2 of lemma \ref{new5}, $tr(\gamma_i s_l\overline{r_m}^t)=0$ for every $m$ and $l$, since $s_l\in ker(\gamma_j)$ and $r_m\in\Im(\gamma_j)$ and $\gamma_j$ is positive semidefinite.

So $V_1\gamma_iV_2=0$ for every $i$ and thus $(V_1\gamma_iV_2)^*=V_2\gamma_iV_1=0$ too.

Finally $ A=\sum_{i=1}^2(V_i\otimes V_i) A (V_i\otimes V_i)$.

\end{proof}

\begin{lemma} \label{lemmaY}Let $\gamma,B$ be Hermitian matrices in $M_k$, $\gamma$ positive semidefinite and $B\neq 0$.
Suppose that $\Im(B)\subset\Im(\gamma)$ and $B$ is not a multiple of $\gamma$. 
Then, there exists $\lambda\in\mathbb{R}$ such that $\gamma-\lambda B$ is positive semidefinite, non-null and  $0\neq x\in\ker(\gamma-\lambda B)\cap \Im(\gamma)$.
\end{lemma}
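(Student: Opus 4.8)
The plan is to reduce everything to the subspace $V=\Im(\gamma)$, where $\gamma$ becomes positive definite, and then to treat $\gamma-\lambda B$ as a Hermitian pencil, pushing $\lambda$ to the boundary of positive semidefiniteness.

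First I would observe that both $\gamma$ and $B$ leave $V=\Im(\gamma)$ invariant and annihilate $V^{\perp}=\ker(\gamma)$. For $\gamma$ this is clear. For $B$, since $B$ is Hermitian we have $\ker(B)=\Im(B)^{\perp}\supset\Im(\gamma)^{\perp}=\ker(\gamma)=V^{\perp}$, so $B$ vanishes on $V^{\perp}$; and $Bv\in\Im(B)\subset V$ for every $v$, so $B(V)\subset V$. Hence, with respect to the orthogonal decomposition $\mathbb{C}^{k}=V\oplus V^{\perp}$, the matrix $\gamma-\lambda B$ is block diagonal, equal to $\gamma_V-\lambda B_V$ on $V$ and to $0$ on $V^{\perp}$, where $\gamma_V:=\gamma|_V$ is positive definite and $B_V:=B|_V$ is Hermitian. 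Note that $\gamma\neq 0$ (otherwise $\Im(B)\subset\Im(\gamma)=\{0\}$ forces $B=0$), so $V\neq\{0\}$, and $B_V\neq 0$ because $B\neq 0$ while $B$ vanishes off $V$.

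Next I would diagonalize the pencil on $V$. Writing $\gamma_V-\lambda B_V=\gamma_V^{1/2}\big(I_V-\lambda C\big)\gamma_V^{1/2}$ with $C:=\gamma_V^{-1/2}B_V\gamma_V^{-1/2}$ Hermitian and nonzero, the matrix $\gamma_V-\lambda B_V$ is positive semidefinite if and only if $1-\lambda\mu\geq 0$ for every eigenvalue $\mu$ of $C$, and it is singular on $V$ exactly when equality holds for some $\mu$. Since $C\neq 0$ it has a nonzero eigenvalue. If $C$ has a positive eigenvalue, let $\mu_{+}$ be its largest one and set $\lambda=1/\mu_{+}>0$; then $1-\lambda\mu\geq 0$ for every eigenvalue $\mu$, with equality at $\mu_{+}$. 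Otherwise all eigenvalues are $\leq 0$ and at least one is negative; letting $\mu_{-}$ be the smallest and setting $\lambda=1/\mu_{-}<0$ gives $\lambda\mu=\mu/\mu_{-}\in[0,1]$ for every eigenvalue $\mu$, so again $1-\lambda\mu\geq 0$ with equality at $\mu_{-}$. In either case $\lambda\neq 0$ and $I_V-\lambda C\geq 0$ has a nontrivial kernel in $V$.

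Finally I would assemble the conclusion. By the block structure, $\gamma-\lambda B\geq 0$ on all of $\mathbb{C}^{k}$. Choosing $0\neq x\in\ker(\gamma_V-\lambda B_V)\subset V=\Im(\gamma)$ yields $(\gamma-\lambda B)x=(\gamma_V-\lambda B_V)x=0$, so $x\in\ker(\gamma-\lambda B)\cap\Im(\gamma)$, as required. The matrix $\gamma-\lambda B$ is non-null precisely because $\lambda\neq 0$: were $\gamma-\lambda B=0$, then $B=\tfrac{1}{\lambda}\gamma$ would be a multiple of $\gamma$, contradicting the hypothesis. The only genuine work lies in the invariance step, which is where the assumption $\Im(B)\subset\Im(\gamma)$ is indispensable; once the problem lives on $V$ the boundary choice of $\lambda$ is the routine extremal-eigenvalue computation above, and the hypothesis that $B$ is not a multiple of $\gamma$ enters only to guarantee non-nullity.
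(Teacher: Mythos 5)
Your proof is correct and takes essentially the same route as the paper's: both arguments normalize $\gamma$ to the identity by a congruence (the paper via a single invertible $R$ with $R\gamma R^{*}=\mathrm{diag}(Id,0)$, you via $\gamma_V^{-1/2}$ after restricting to $V=\Im(\gamma)$), and both choose $\lambda$ as the reciprocal of an extremal eigenvalue of the transformed $B$ so that the pencil remains positive semidefinite while acquiring a kernel vector inside $\Im(\gamma)$. Your explicit verification that $\gamma-\lambda B\neq 0$ (from $\lambda\neq 0$ and the hypothesis that $B$ is not a multiple of $\gamma$) is a small point the paper leaves implicit.
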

\begin{proof}
Let $R\in M_k$ be an invertible matrix such that $$R\gamma R^{*}=\left(\begin{array}{cc}
Id_{s\times s} & 0_{s\times t}\\
0 _{t\times s}           & 0_{t\times t}
\end{array}\right).$$

Since $\Im(B)\subset\Im(\gamma)$. We obtain $$RBR^{*}=\left(\begin{array}{cc}
\widetilde{B}_{s\times s} & 0_{s \times t}\\
0 _{t\times s}           & 0_{t\times t}
\end{array}\right).$$

Let $\frac{1}{\lambda}$ be the eigenvalue of $\widetilde{B}$ with the biggest absolute value. Therefore $Id-\lambda \widetilde{B}$ is positive semidefinite with non-null kernel. Then 
$$\dim(\ker(\gamma-\lambda B))=\dim(\ker(R(\gamma-\lambda B) R^{*}))>\dim(\ker(R\gamma R^{*}))=\dim(\ker(\gamma)).$$

Thus, there exists $0\neq x\in \ker(\gamma-\lambda B)\cap \ker(\gamma)^{\perp}. $ Since $\gamma$ is Hermitian $\ker(\gamma)^{\perp}=\Im(\gamma).$

Since $R(\gamma-\lambda B)R^*$ is positive semidefinite then $\gamma-\lambda B$ is too.

Therefore exist a $\lambda\in\mathbb{R}$ such that $\gamma-\lambda B$ is positive semidefinite and $0\neq x\in \ker(\gamma-\lambda B)\cap \Im(\gamma). $
\end{proof}

\subsection{First Results for SPC and PPT Matrices}
\indent\\

Our first theorem regarding SPC matrices is the following one.

\begin{theorem}\label{theomiracle1}
Let $A\in M_k\otimes M_k$ be a SPC matrix with all $\lambda_i=1$. Then
\begin{itemize}
\item[a)] $A$ has a Hermitian Schmidt decomposition,
$A= \sum_{i=1}^n\gamma_i'\otimes\gamma_i'$  such that $\gamma_i'$ is a positive semidefinite Hermitian matrix for $1\leq i\leq n$.
\item[b)] The decomposition of item $a)$ is unique.
\end{itemize}
\end{theorem}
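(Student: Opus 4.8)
Since every $\lambda_i=1$ and $\{\gamma_i\}$ is orthonormal for the trace inner product, the map $F_A(Y)=\sum_i\mathrm{tr}(Y\gamma_i)\gamma_i$ is exactly the orthogonal projection $P_V$ onto $V:=supp_1(A)$. My first step is to record that, by corollary \ref{corollaryAB}, a matrix in $M_k\otimes M_k$ is determined by its $F$-map; hence for \emph{any} orthonormal basis $\{\beta_1,\dots,\beta_n\}$ of $V$ one has $F_{\sum_i\beta_i\otimes\beta_i}=P_V=F_A$, so $A=\sum_i\beta_i\otimes\beta_i$. Thus statement (a) is equivalent to the purely linear-algebraic assertion: \emph{$V$ admits an orthonormal basis of positive semidefinite matrices.} I would also note the basic fact that an orthonormal set of positive semidefinite Hermitian matrices automatically has mutually orthogonal ranges (from $\mathrm{tr}(\gamma_i'\gamma_j')=0$ and positivity one gets $\gamma_i'\gamma_j'=0$), so such a basis amounts to an orthogonal decomposition of (part of) $\mathbb{C}^k$.

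\textbf{Existence: setting up the induction on $k$.} First I would produce a nonzero positive semidefinite element of $V$: the partial trace $\mathrm{tr}_2(A)=\sum_i\mathrm{tr}(\gamma_i)\gamma_i=P_V(Id)$ is positive semidefinite and lies in $V$ (congruence plus corollary \ref{corollarypositive} show $F_A=P_V$ is a positive map), and if it vanishes I split any nonzero $\gamma\in V$ as $\gamma=\gamma_+-\gamma_-$ and use $P_V(\gamma_+)-P_V(\gamma_-)=\gamma\neq0$ to find a nonzero positive semidefinite $P_V(\gamma_\pm)\in V$. Next I choose $\gamma\in V$ positive semidefinite of \emph{minimal rank}, with range $W$; lemma \ref{lemmaY} combined with minimality forces every positive semidefinite element of $V$ whose range is contained in $W$ to be a scalar multiple of $\gamma$. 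Normalizing $\gamma$ and taking it as the first vector of an orthonormal basis of $V$, corollary \ref{corollarynew5} gives the block splitting $A=(V_1\otimes V_1)A(V_1\otimes V_1)+(V_2\otimes V_2)A(V_2\otimes V_2)$ with $V_1,V_2$ the projections onto $W,W^{\perp}$, and shows that every Schmidt vector is block-diagonal with respect to $W\oplus W^{\perp}$.

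\textbf{The main obstacle.} The compressed block $A_2=(V_2\otimes V_2)A(V_2\otimes V_2)$ is again positive semidefinite and symmetric, but a priori only of SPC type; to invoke the inductive hypothesis on the smaller space $\mathrm{Herm}(W^{\perp})$ I must upgrade this to the special form in which \emph{all} singular values equal $1$. This is precisely the hard point: block-diagonality of the Schmidt vectors is not enough, since for unequal coefficients the theorem genuinely fails (e.g. $\tfrac12\,\mathrm{diag}(3,1,1,3)$ on $M_2\otimes M_2$ has only one positive semidefinite orthonormal Schmidt basis and it produces cross terms). What I expect to be the crux is to show that the splitting is in fact \emph{orthogonal}, i.e. $V=\mathbb{R}\gamma\oplus\bigl(V\cap\{\text{matrices supported on }W^{\perp}\}\bigr)$, which is equivalent to $A_2$ having all singular values $1$. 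Here positivity of $A$ must be used essentially, not merely the congruence facts: positivity of $P_V$ forces the off-diagonal block coupling $W$ and $W^{\perp}$ to vanish. (This is already transparent in the diagonal case, where $P_V$ is a rank-$n$ coordinate projection of $\mathbb{R}^k$, and preservation of the nonnegative orthant forces its range to be a coordinate subspace.) I would prove it by playing the minimality of $\mathrm{rank}(\gamma)$ against the positivity of $P_V$ evaluated on matrices supported in $W^{\perp}$. Granting this, $A_2$ is of the same type on a space of dimension $<k$, induction yields an orthonormal positive semidefinite basis of $supp_1(A_2)\subseteq\mathrm{Herm}(W^{\perp})$, and adjoining $\gamma/\|\gamma\|$ (positive semidefinite, with range $W\perp W^{\perp}$) completes an orthonormal positive semidefinite basis of $V$. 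The base case $\dim V=1$ is immediate, since $\gamma_1\otimes\gamma_1\succeq0$ forces $\gamma_1$ (up to sign) to be semidefinite.

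\textbf{Uniqueness (b).} Suppose $A=\sum_i\gamma_i'\otimes\gamma_i'=\sum_j\delta_j'\otimes\delta_j'$ with both families orthonormal and positive semidefinite, hence with mutually orthogonal ranges. A general positive semidefinite element of $V$ is $\sum_i c_i\gamma_i'$ with all $c_i\ge0$, so its rank is $\sum_{c_i>0}\dim\Im(\gamma_i')$; thus a minimal-rank member $\gamma$ of the first basis is a global minimal-rank positive semidefinite element of $V$, and by the lemma above every positive semidefinite element of $V$ supported in $W=\Im(\gamma)$ is a multiple of $\gamma$. Expanding $\gamma=\sum_j\langle\gamma,\delta_j'\rangle\,\delta_j'$, all coefficients are $\ge0$ (pairwise traces of positive semidefinite matrices); orthogonality of the ranges gives $\Im(\gamma)=\bigoplus_{\langle\gamma,\delta_j'\rangle>0}\Im(\delta_j')$, and since each such $\delta_j'$ is positive semidefinite with range inside $W$ it must be a multiple of $\gamma$, forcing exactly one surviving index and $\gamma=\delta_{j_0}'$. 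Removing this pair and restricting to $W^{\perp}$, an induction on $k$ shows the two bases coincide up to reordering.
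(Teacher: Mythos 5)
Your overall architecture is viable and genuinely different from the paper's: the paper proves existence by induction on the tensor rank of $A$, with a two-case analysis (\emph{some} $\Im(\gamma_i)\not\subset\Im(\gamma_1)$, handled by corollary \ref{corollarynew5} plus a bookkeeping argument on supports; \emph{all} $\Im(\gamma_i)\subset\Im(\gamma_1)$, handled by lemma \ref{lemmaY} to produce a better pivot), and it proves uniqueness from $\Im(A)=\bigoplus_i\Im(\gamma_i')\otimes\Im(\gamma_i')$; you instead induct on the dimension $k$, pivot on a positive semidefinite element of $V$ of \emph{minimal rank}, and run uniqueness through the minimal-rank characterization. Your uniqueness argument is complete and correct. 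The existence argument, however, has one genuine gap, and it sits exactly at the heart of the theorem: the claim $V=\mathbb{R}\gamma\oplus\bigl(V\cap\mathrm{Herm}(W^{\perp})\bigr)$, which you introduce with ``what I expect to be the crux'' and then only use under ``granting this.'' Everything after that point is conditional, so as written item a) is not proved.

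The gap is closable, and with precisely the two tools you name. Positivity first: for $X$ positive semidefinite with $\Im(X)\subset W^{\perp}$, the matrix $P_V(X)=F_A(X)$ is positive semidefinite (corollary \ref{corollarypositive} applied to $A*(X^t\otimes Id)=F_A(X)\otimes Id$), lies in $V$, is block-diagonal with respect to $W\oplus W^{\perp}$ (every element of $V$ is, by the proof of corollary \ref{corollarynew5}), and satisfies $tr(P_V(X)\gamma)=tr(X\,P_V(\gamma))=tr(X\gamma)=0$. Hence its $W$-block is positive semidefinite, has range inside $W=\Im(\gamma)$, and is trace-orthogonal to $\gamma$; by your own observation that trace-orthogonal positive semidefinite matrices have orthogonal ranges, that block is zero. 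So $P_V$ maps $\mathrm{Herm}(W^{\perp})$ into itself; since $P_V$ is self-adjoint it preserves the orthogonal complement of $\mathrm{Herm}(W^{\perp})$ as well, and intersecting with block-diagonality of $V$ gives $V=\bigl(V\cap\mathrm{Herm}(W)\bigr)\oplus\bigl(V\cap\mathrm{Herm}(W^{\perp})\bigr)$. Minimality second: if $B\in V\cap\mathrm{Herm}(W)$ were not a multiple of $\gamma$, lemma \ref{lemmaY} would produce $\gamma-\lambda B\in V$ positive semidefinite, nonzero, with kernel containing $W^{\perp}$ and a nonzero vector of $W$, i.e.\ of rank strictly smaller than $\mathrm{rank}(\gamma)$, a contradiction; hence $V\cap\mathrm{Herm}(W)=\mathbb{R}\gamma$. (This also settles the case $W=\mathbb{C}^k$, where the conclusion reads $V=\mathbb{R}\gamma$ and $n=1$.) Note that minimality is needed only for this last identification, not for the splitting itself; the paper reaches the analogous point by a different mechanism, namely concatenating minimal Hermitian decompositions of the two compressed blocks to get $supp_1(A)=supp_1(A_1)\oplus supp_1(A_2)$ and then using $G_A=Id$ on $supp_1(A)$. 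With a paragraph along these lines inserted in place of ``granting this,'' your induction on $k$ and your uniqueness argument go through.
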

\begin{proof} 
Let $\sum_{i=1}^n\gamma_i\otimes\gamma_i$ be a Hermitian Schmidt decomposition of $A$.\\\\
a) The proof will be done by induction on the tensor rank of $A$.

 If $A=\gamma_1\otimes\gamma_1$ is positive semidefinite then it is obvious that $\gamma_1$ or $-\gamma_1$  is positive semidefinite, but $A=\gamma_1\otimes\gamma_1=-\gamma_1\otimes-\gamma_1$, and we are done.
Suppose $n$  bigger than 1.

 If $\sum_{i=1}^n tr(\gamma_i)\gamma_i=0$ then $tr(\gamma_i)=0$ for $1\leq i\leq n$, since the $\gamma_i$'s are orthonormal. It implies that $ tr(A)=\sum_{i=1}^n tr(\gamma_i)tr(\gamma_i)=0$ and since $A$ is positive semidefinite, we obtain $A=0$. This is a contradiction.

Since $A\neq0$ then $\sum_{i=1}^n tr(\gamma_i)\gamma_i\neq0$. Since $A*Id\otimes Id=\sum_{i=1}^n tr(\gamma_i)\gamma_i\otimes Id$ is positive semidefinite, we obtain that $\sum_{i=1}^n tr(\gamma_i)\gamma_i$ is Hermitian and positive semidefinite.

Define $\gamma_1'=\sum_{i=1}^n tr(\gamma_i)\gamma_i/|\sum_{i=1}^n tr(\gamma_i)\gamma_i|$. Notice that $\gamma_1'\in supp_1(A)$. 
 
Let $\{\gamma_1',\widetilde{\gamma_2},...,\widetilde{\gamma_n}\}$be any orthonormal basis of $supp_1(A)$ containing $\gamma_1'$.

Since $A=\sum_{i=1}^n\gamma_i\otimes\gamma_i$, we have $supp_1(A)=supp_2(A)$. Notice that $G_A:supp_1(A)\rightarrow supp_2(A)$ is the identity $Id:supp_1(A)\rightarrow supp_1(A)$.

Now 
$A=\gamma_1'\otimes G_A(\gamma_1')+\widetilde{\gamma_2}\otimes G_A(\widetilde{\gamma_2})+...+\widetilde{\gamma_n}\otimes G_A(\widetilde{\gamma_n})$
 $=\gamma_1'\otimes \gamma_1'+\widetilde{\gamma_2}\otimes \widetilde{\gamma_2}+...+\widetilde{\gamma_n}\otimes \widetilde{\gamma_n}.$

Now to simplify the notation, we may assume without loss of generality, that $A=\sum_{i=1}^n\gamma_i\otimes\gamma_i$ with $\gamma_1$ positive semidefinite.

Our aim now is to write $A=A_1+A_2$, where $A_1$ and $A_2$ have the same type of $A$, but both with smaller tensor rank and use induction on the tensor rank.
Next we split the proof in two cases:\\\\
\underline{First Case:} Exists $i>1$ such that $\Im(\gamma_i)$ is not contained in $\Im(\gamma_1)$.\\

Under this hypothesis $\ker(\gamma_1)\neq 0$. Let $V_1\in M_k$ be the Hermitian projection onto $\Im(\gamma_1)$ and $V_2\in M_k$ be the Hermitian projection onto $\ker(\gamma_1)$, therefore $V_1+V_2=Id_k $.

By corollary \ref{corollarynew5}, $A=\sum_{i=1}^2(V_i\otimes V_i) A (V_i\otimes V_i)$.

If $(V_2\otimes V_2) A (V_2\otimes V_2)= 0$ then $A=(V_1\otimes V_1) A (V_1\otimes V_1)$ and $supp_1(A)\subset
$ $\text{span}\{V_1\gamma_1V_1,\ldots, V_1\gamma_nV_1\}$, by theorem \ref{theoremprincipal}. Then every $\gamma_i\in supp_1(A)$ would have $\Im(\gamma_i)\subset\Im(V_1)=\Im(\gamma_1)$, which is a contradiction with the hypothesis of this case. Thus, $(V_2\otimes V_2) A (V_2\otimes V_2)\neq 0$.
 Now, if $(V_1\otimes V_1) A (V_1\otimes V_1)= 0$ then $A=(V_2\otimes V_2) A (V_2\otimes V_2)$  and again we obtain that $\Im(\gamma_1)\subset\Im(V_2)=\ker(\gamma_1)$. Since $\gamma_1$ is Hermitian  $\Im(\gamma_1)\perp\ker(\gamma_1)$, but $\gamma_1\neq 0$. It is a contradiction and $(V_1\otimes V_1) A (V_1\otimes V_1)\neq 0$.

Let us write $A=A_1+A_2$, where $A_i=(V_i\otimes V_i)A(V_i\otimes V_i)$. Notice that $supp_1(A_1)\perp supp_1(A_2)$, since $V_1V_2=0$.

Let $\delta_1,\ldots,\delta_r$ be an orthonormal basis of $supp_1(A_1)$ and $\epsilon_1,\ldots,\epsilon_s$ be an orthonormal basis of $supp_1(A_2)$. Since $supp_1(A)=supp_1(A_1)\oplus supp_1(A_2)$ then 
$A=\delta_1\otimes G_A(\delta_1)+\ldots+\delta_r\otimes G_A(\delta_r)+\epsilon_1\otimes G_A(\epsilon_1)+\ldots+\epsilon_s\otimes G_A(\epsilon_s)=\delta_1\otimes \delta_1+\ldots+\delta_r\otimes \delta_r+\epsilon_1\otimes \epsilon_1+\ldots+\epsilon_s\otimes \epsilon_s$. 

Thus, $A_1-\delta_1\otimes \delta_1-\ldots-\delta_r\otimes \delta_r=\epsilon_1\otimes \epsilon_1+\ldots+\epsilon_s\otimes \epsilon_s-A_2$ which implies  $supp_1(A_1-\delta_1\otimes \delta_1-\ldots-\delta_r\otimes \delta_r)\subset supp_1(A_1)\cap supp_1(A_2)=\{0\}$. Thus, $ A_1-\delta_1\otimes \delta_1-\ldots-\delta_r\otimes \delta_r=0$ and $\epsilon_1\otimes \epsilon_1+\ldots+\epsilon_s\otimes \epsilon_s-A_2=0$. 

Recall that both  $A_i\neq 0$ then $r+s=n$ and $r<n$ and $s<n$. Thus $A_1$ and $A_2$ are positive semidefinite Hermitian matrices with Hermitian Schmidt decompositions similar to $A$, but with smaller tensor rank.\\\\
\underline{Second Case:} For every $i$,  $\Im(\gamma_i)\subset \Im(\gamma_1)$.\\

Recall that $\gamma_1$ is positive semidefinite. Since $\Im(\gamma_2)\subset\Im(\gamma_1)$, there exists  $\lambda\in\mathbb{R}$ such that $\gamma_1-\lambda\gamma_2$ still is positive semidefinite with $0\neq v\in\Im(\gamma_1)\cap\ker(\gamma_1-\lambda\gamma_2)$, by lemma \ref{lemmaY}. Let $\delta_1=\gamma_1-\lambda\gamma_2/|\gamma_1-\lambda\gamma_2|$.

Let $\{\delta_1,\ldots,\delta_n\}$ be an orthonormal basis of $supp_1(A)$ containing $\delta_1$. So $A=\delta_1\otimes G_A(\delta_1)+\ldots+\delta_n\otimes G_A(\delta_n)=\delta_1\otimes\delta_1+\ldots+\delta_n\otimes\delta_n$.

Now if $\Im(\delta_i)\subset\Im(\delta_1)$ for every $i$, since $\ker(\delta_i)=\Im(\delta_i)^{\perp}$ (because $\delta_i$ are Hermitian matrices), we got $\ker(\delta_i)\supset\ker(\delta_1)$. So $tr(\delta_iv\overline{v}^t)=0$ for every $i$.

So $F_A(v\overline{v}^t)=\sum_{i=1}^n\delta_i tr(\delta_iv\overline{v}^t)=0$, but $F_A(v\overline{v}^t)=\sum_{i=1}^n\gamma_i tr(\gamma_iv\overline{v}^t)$.

Since $\gamma_i$ are orthonormal $tr(\gamma_iv\overline{v}^t)=0$ for every $i$, so $tr(\gamma_1v\overline{v}^t)=0$, but this is not possible since $\gamma_1$ is positive semidefinite and $0\neq v\in\Im(\gamma_1)$. Thus, exists $i>1$ such that $\Im(\delta_i)$ is not contained in $\Im(\delta_1)$.

By the first case we can write $A=A_1+A_2$, with the same type of $A$, but both with smaller tensor rank. Let us now use induction on the tensor rank.
 \\\\
This induction proves the existence of this Hermitian Schmidt decomposition for $A$.
Let us now prove the uniqueness of this decomposition. \\\\
b) Since $\{\gamma_1',...,\gamma_n'\}$ are positive semidefinite and orthonormal then $\Im(\gamma_i')\perp\Im(\gamma_j')$ for $i\neq j$. 
Now since $A= \sum_{i=1}^n\gamma_i'\otimes\gamma_i'$, we know that $$ \Im(A)=\bigoplus_{i=1}^n\Im(\gamma_i'\otimes\gamma_i')=\bigoplus_{i=1}^n\Im(\gamma_i')\otimes\Im(\gamma_i').$$

Suppose that $A$ has another Hermitian Schmidt decomposition, $A= \sum_{i=1}^n\delta_i\otimes\delta_i$, such that $\delta_1,...,\delta_n$ are positive semidefinite.
Again, we obtain $\Im(A)=\bigoplus_{i=1}^n\Im(\delta_i)\otimes\Im(\delta_i).$

Now $\delta_j=G_A(\delta_j)=\sum_{i=1}^ntr(\gamma_i'\delta_j)\gamma_i'$. 

Since $\gamma_i'$ and $\delta_j$ are positive semidefinite then $tr(\gamma_i'\delta_j)\geq0$. 

If there are two different indeces $r,s$ such that $tr(\gamma_r'\delta_j)>0$ and $tr(\gamma_s'\delta_j)>0$, we obtain
$\Im(\gamma_r')\oplus\Im(\gamma_s')\subset\Im(\delta_j).$ Thus, $\Im(\gamma_r')\otimes\Im(\gamma_s')\subset\Im(\delta_j)\otimes\Im(\delta_j)\subset\Im(A).$

But $\Im(\gamma_r')\otimes\Im(\gamma_s')\perp\Im(\gamma_i')\otimes\Im(\gamma_i')$, for every $1\leq i\leq n$, thus $$\Im(\gamma_r')\otimes\Im(\gamma_s')\perp\Im(A),$$  a contradiction

Therefore  $\delta_j=G_A(\delta_j)=tr(\gamma_r'\delta_j)\gamma_r'$ just for one index $r$. Now, since both $\delta_j$ and $\gamma_r'$ are positive semidefinite with norm equal to 1, then $\delta_j=\gamma_r'$.

Finally, each $\delta_j$ is equal to some $\gamma_r'$ and the decompositions are equal.
\end{proof} 
\vspace{0.5cm}

Our first theorem regarding PPT matrices is the following corollary.

\begin{corollary}\label{corollarytheomiracle1}
Let $A\in M_k\otimes M_m$ be a PPT matrix with Hermitian Schmidt decomposition $A=\sum_{i=1}^n\gamma_i\otimes\delta_i$ $($all $\lambda_i=1)$. Then $A$ has a unique Hermitian Schmidt decomposition,
$A= \sum_{i=1}^n\gamma_i'\otimes\delta_i'$,  such that $\gamma_i',\delta_i'$ are positive semidefinite Hermitian matrices for $1\leq i\leq n$.
\end{corollary}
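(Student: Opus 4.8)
The plan is to reduce the statement to Theorem \ref{theomiracle1} by attaching to the PPT matrix $A=\sum_{i=1}^n\gamma_i\otimes\delta_i$ (with $\{\gamma_i\},\{\delta_i\}$ orthonormal and all $\lambda_i=1$) the matrix $B=\sum_{i=1}^n\gamma_i\otimes\gamma_i\in M_k\otimes M_k$. Since $\{\gamma_i\}$ is orthonormal, $B$ is already written in the symmetric form of Definition \ref{definitionSPC} with all coefficients equal to $1$; the only missing ingredient to make $B$ an SPC matrix is positive semidefiniteness, and establishing it is exactly where the PPT hypothesis is consumed. Everything afterwards will be transport of the resulting decomposition back to $A$ through the maps $F_A,G_A$ of Definition \ref{defmaps}.

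The crux (and the main obstacle) is therefore $B\succeq 0$. I would set $T=F_A((\cdot)^t)\colon M_m\to M_k$. By the remark after Corollary \ref{corollaryAB} the Choi matrix of $T$ is $A$, so Proposition \ref{choitheorem} shows that $T$ is completely positive because $A$ is positive semidefinite. The PPT hypothesis (Definition \ref{definitionPPT}) gives $A^{t_1}\succeq0$; denoting by $A^{t_2}=Id\otimes(\cdot)^t(A)=\sum_i\gamma_i\otimes\delta_i^{t}$ the partial transposition on the second factor, one checks, using that $A$ is Hermitian, that $A^{t_2}=\overline{A^{t_1}}$, hence $A^{t_2}\succeq 0$. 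Using $T(\delta_i^{t})=F_A(\delta_i)=\gamma_i$ one computes
$$(Id\otimes T)(A^{t_2})=\sum_{i=1}^n\gamma_i\otimes T(\delta_i^{t})=\sum_{i=1}^n\gamma_i\otimes\gamma_i=B,$$
and since $T$ is completely positive the map $Id\otimes T$ preserves positive semidefiniteness, so $B\succeq0$.

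Now $B$ is an SPC matrix with all $\lambda_i=1$, so Theorem \ref{theomiracle1} yields a unique Hermitian Schmidt decomposition $B=\sum_{i}\gamma_i'\otimes\gamma_i'$ with each $\gamma_i'$ positive semidefinite; in particular $\{\gamma_i'\}$ is an orthonormal basis of $supp_1(B)=supp_1(A)$ made of positive semidefinite matrices. I would then use the identity $A=\sum_i\eta_i\otimes G_A(\eta_i)$, valid for every orthonormal basis $\{\eta_i\}$ of $supp_1(A)$ (expand each $\gamma_j$ in the basis and apply $G_A$), to write $A=\sum_i\gamma_i'\otimes\delta_i'$ with $\delta_i'=G_A(\gamma_i')$. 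Because all $\lambda_i=1$, the restriction $G_A\colon supp_1(A)\to supp_2(A)$ carries the orthonormal basis $\{\gamma_i\}$ onto the orthonormal basis $\{\delta_i\}$ and is thus an isometry, so $\{\delta_i'\}$ is orthonormal. Finally $G_A$ equals $F_{\widetilde A}$ for $\widetilde A=\sum_i\delta_i\otimes\gamma_i$, and $F_{\widetilde A}((\cdot)^t)$ is completely positive by Proposition \ref{choitheorem} (its Choi matrix $\widetilde A$ is positive semidefinite, being $A$ with its tensor factors interchanged); hence $G_A(X)=F_{\widetilde A}((\cdot)^t)(X^t)$ is positive semidefinite whenever $X$ is, and each $\delta_i'=G_A(\gamma_i')$ is positive semidefinite. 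This produces the required decomposition, and it is worth noting that only $A\succeq0$ is needed for the positivity of $G_A$, so the PPT hypothesis enters solely through $B\succeq0$.

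For uniqueness, let $A=\sum_i\alpha_i\otimes\beta_i$ be any Hermitian Schmidt decomposition with $\alpha_i,\beta_i$ positive semidefinite. Then $\{\alpha_i\}$ is an orthonormal basis of $supp_1(A)$ and, since the left factors are linearly independent, $\beta_i=G_A(\alpha_i)$ by Theorem \ref{theoremprincipal}. Writing $\{\alpha_i\}$ as a real orthogonal change of the basis $\{\gamma_i\}$ (the coefficients are real by Lemma \ref{real}) gives $\sum_i\alpha_i\otimes\alpha_i=B$, so $B=\sum_i\alpha_i\otimes\alpha_i$ is a positive semidefinite decomposition of $B$. The uniqueness clause of Theorem \ref{theomiracle1} then forces $\{\alpha_i\}=\{\gamma_i'\}$ up to reordering, and consequently the two decompositions of $A$ coincide.
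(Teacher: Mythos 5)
Your proof is correct and follows essentially the same route as the paper: both arguments form the matrix $\sum_i\gamma_i\otimes\gamma_i$, consume the PPT hypothesis exactly once to show it is positive semidefinite, apply Theorem \ref{theomiracle1} to it, transport the resulting orthonormal basis $\{\gamma_i'\}$ of $supp_1(A)$ back to $A$ via $\delta_i'=G_A(\gamma_i')$ (checking that $G_A$ is an isometry carrying positive semidefinite matrices to positive semidefinite matrices), and deduce uniqueness from Theorem \ref{theomiracle1}(b). The only difference is packaging: where you invoke Choi's theorem and complete positivity of $F_A((\cdot)^t)$ and $F_{\widetilde{A}}((\cdot)^t)$, the paper multiplies by suitable positive semidefinite matrices with the $*$-product (Corollary \ref{corollarypositive}), which by Proposition \ref{AB} is the identical operation written in different notation.
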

\begin{proof}
Let $\sum_{i=1}^n\gamma_i\otimes\delta_i$ be a Hermitian Schmidt decomposition of $A$.

Since $A$ is PPT, the matrix $\sum_{i=1}^n\gamma_i^t\otimes\delta_i$ is positive semidefinite.

Therefore $(\sum_{i=1}^n\gamma_i^t\otimes\delta_i)^t=\sum_{i=1}^n\gamma_i\otimes\delta_i^t$ is positive semidefinite and $B=\sum_{i=1}^n\delta_i^t\otimes\gamma_i$ is too.

Now $A*B=\sum_{i=1}^n\gamma_i\otimes\gamma_i$ is positive semidefinite.

By theorem \ref{theomiracle1}, $A*B$ has another Hermitian Schmidt decomposition $A*B=\sum_{i=1}^n\gamma_i'\otimes\gamma_i'$ such that $\gamma_i'$ is positive semidefinite for $1\leq i\leq n$.

Now notice that $supp_1(A*B)=supp_1(A)$ and $\{\gamma_1',...,\gamma_n'\}$ is an orthonormal basis of $supp_1(A*B)=supp_1(A)$.

Therefore $A=\sum_{i=1}^n\gamma_i'\otimes G_A(\gamma_i')$. 

Now since $\gamma_i'\otimes\gamma_i'^t$ and $A$ are positive semidefinite and
$\gamma_i'\otimes\gamma_i'^t*A=\gamma_i'\otimes G_A(\gamma_i'),$
then the matrices $G_A(\gamma_1'),...,G_A(\gamma_n')$ are positive semidefinite.

The matrices $G_A(\gamma_1'),...,G_A(\gamma_n')$ are also orthonormal because the adjoints maps of the remark \ref{remarkadjoint} $($for this $A):$  $$F_A: supp_2(A)\rightarrow supp_1(A),\ \ G_A: supp_1(A)\rightarrow supp_2(A)$$ satisfy $F_A\circ G_A=Id$.
Thus $G_A$ is an isometry and since $\{\gamma_1',...,\gamma_n'\}$ are orthonormal the matrices $G_A(\gamma_1'),...,G_A(\gamma_n')$ are orthonormal.

Let $G_A(\gamma_i')=\delta_i'$ for $1\leq i\leq n$.

Finally the Hermitian Schmidt decomposition required for A, in this corollary, is
$A=\sum_{i=1}^n\gamma_i'\otimes \delta_i'$.\\

For the uniqueness of such decomposition, notice that  the decomposition  of $A*B$, $\sum_{i=1}^n\gamma_i'\otimes\gamma_i'$, is unique by item b of theorem \ref{theomiracle1}. Therefore $\{\gamma_1',...,\gamma_n'\}$ is an orthonormal basis of $supp_1(A*B)=supp_1(A)$ such that all matrices are positive semidefinite.

Suppose $\{\gamma_1'',...,\gamma_n''\}$ is  another orthonormal basis of $supp_1(A*B)=supp_1(A)$, such that all matrices are positive semidefinite, then $A*B=\sum_{i=1}^n\gamma_i''\otimes G_{A*B}(\gamma_i'')=\sum_{i=1}^n\gamma_i''\otimes\gamma_i''$ is another Hermitian Schmidt decomposition of $A*B$ such that all matrices are positive semidefinite, which is absurd.

Therefore $\{\gamma_1',...,\gamma_n'\}$ is the only orthonormal basis of $supp_1(A)$ such that all matrices are positive semidefinite and $A=\sum_{i=1}^n\gamma_i'\otimes G_A(\gamma_i')$ is the only Hermitian Schmidt decomposition announced in this corollary.
\end{proof}

\begin{lemma}\label{lemmapositive}  Let $A\in M_k\otimes M_m$ be a positive semidefinite Hermitian matrix. Let $\sum_{i=1}^n\lambda_i\gamma_i\otimes\delta_i$ be a Hermitian Schmidt decomposition of A such that  $\lambda_1=...=\lambda_s>\lambda_{s+1}\geq...\geq\lambda_n>0$. 

Then $D=\sum_{i=1}^s\gamma_i\otimes\delta_i$ is a positive semidefinite hermitan matrix.
\end{lemma}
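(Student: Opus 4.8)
The plan is to realise $D$ as a limit of positive semidefinite matrices obtained from $A$ by an ``amplification'' that raises each Schmidt coefficient to a high odd power while leaving the Schmidt vectors untouched. Concretely, I would consider
\[
A_r:=\sum_{i=1}^n\lambda_i^{\,2r+1}\,\gamma_i\otimes\delta_i\qquad(r\geq 0),\qquad A_0=A,
\]
and aim to show that every $A_r$ is positive semidefinite. Granting this, normalising by the (maximal) coefficient $\lambda_1^{\,2r+1}$ gives $\lambda_1^{-(2r+1)}A_r=\sum_{i=1}^n(\lambda_i/\lambda_1)^{2r+1}\gamma_i\otimes\delta_i$, and since $\lambda_i/\lambda_1=1$ for $i\leq s$ while $0<\lambda_i/\lambda_1<1$ for $i>s$, this converges entrywise to $\sum_{i=1}^s\gamma_i\otimes\delta_i=D$ as $r\to\infty$. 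Because each $\lambda_1^{-(2r+1)}A_r$ is positive semidefinite and the positive semidefinite cone is closed, $D$ is positive semidefinite.

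The core of the argument is to build a single positive semidefinite ``amplifier'' $P$ that performs $A_{r-1}\mapsto A_r$ via the $*$-product. I would set $B:=\sum_{i=1}^n\lambda_i\,\overline{\delta_i}\otimes\overline{\gamma_i}\in M_m\otimes M_k$. This $B$ is obtained from $A$ by conjugating all entries and swapping the two tensor factors; each of these operations sends a positive semidefinite Hermitian matrix to a positive semidefinite Hermitian matrix (conjugation by the flip unitary, and complex conjugation, preserve the spectrum up to conjugation), so $B$ is positive semidefinite. I would then define $P:=A*B\in M_k\otimes M_k$. Using hermiticity of the $\delta_j$ in the form $\delta_j^t=\overline{\delta_j}$ together with the orthonormality $tr(\delta_i\delta_j)=\delta_{ij}$, the trace contractions in the $*$-product collapse and one computes
\[
P=A*B=\sum_{i=1}^n\lambda_i^2\,\gamma_i\otimes\overline{\gamma_i},
\]
which is positive semidefinite by corollary \ref{corollarypositive}.

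A second direct computation, now using $\gamma_j^t=\overline{\gamma_j}$ and $tr(\gamma_i\gamma_j)=\delta_{ij}$, shows $P*A_{r-1}=\sum_{i=1}^n\lambda_i^2\cdot\lambda_i^{\,2r-1}\gamma_i\otimes\delta_i=A_r$; equivalently $A_r=P^{*r}*A$ with $P^{*r}=\sum_i\lambda_i^{\,2r}\gamma_i\otimes\overline{\gamma_i}$. Since $A$ and $P$ are both positive semidefinite, corollary \ref{corollarypositive} applied inductively yields that every $A_r$ is positive semidefinite, which is exactly what the limiting argument above needs.

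The main obstacle is precisely the construction of the amplifier: a fixed $*$-multiplier acting on the $M_m$-side cannot tell the components apart by the size of $\lambda_i$, so one cannot naively ``square the coefficients.'' The trick is to manufacture $P$ out of $A$ itself, so that $P$ is automatically positive semidefinite (as a $*$-product of positive semidefinite matrices) and simultaneously has $\gamma_i$ as eigen-directions with eigenvalues $\lambda_i^2$, making $P*A_{r-1}$ reproduce the same Schmidt data with coefficients multiplied by $\lambda_i^2$. The only delicate points are verifying that $B$ is positive semidefinite and that the two trace pairings reduce to Kronecker deltas, both of which rest on the hermiticity and orthonormality of the $\gamma_i$ and $\delta_i$.
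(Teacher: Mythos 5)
Your proposal is correct and takes essentially the same route as the paper: since $\overline{\delta_i}=\delta_i^t$ and $\overline{\gamma_i}=\gamma_i^t$ for Hermitian matrices, your $B=\sum_i\lambda_i\overline{\delta_i}\otimes\overline{\gamma_i}$ is (up to the scaling by $\lambda_1$) the paper's flipped partial transpose $C$, your amplifier $P=A*B=\sum_i\lambda_i^2\gamma_i\otimes\overline{\gamma_i}$ is exactly the paper's $B*C$, and both arguments iterate the $*$-product to suppress the coefficients with $\lambda_i<\lambda_1$ and conclude by closedness of the positive semidefinite cone. The only cosmetic difference is that the paper takes the limit of the iterated amplifier first and $*$-multiplies by $B$ once at the end, whereas you carry the $\delta_i$ factor along at every step of the iteration.
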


\begin{proof}
The matrix $B=\sum_{i=1}^s\gamma_i\otimes\delta_i+\sum_{i=s+1}^n\frac{\lambda_i}{\lambda_1}\gamma_i\otimes\delta_i$ is positive semidefinite.
It is well known that 
$C=\sum_{i=1}^s\delta_i^t\otimes\gamma_i^t+\sum_{i=s+1}^n\frac{\lambda_i}{\lambda_1}\delta_i^t\otimes\gamma_i^t$ is also  a positive semidefinite Hermitian matrix.

Therefore $B*C=\sum_{i=1}^s\gamma_i\otimes\gamma_i^t+\sum_{i=s+1}^n(\frac{\lambda_i}{\lambda_1})^2\gamma_i\otimes\gamma_i^t$  is a positive semidefinite Hermitian matrix by corollary \ref{corollarypositive}.

The $*-product$ of $B*C$ by itself, $l$ times, remains positive semidefinite and since $0<\frac{\lambda_i}{\lambda_1}<1$, for $i>s$, the\\
$\displaystyle\lim_{l\rightarrow\infty}\overbrace{(B*C)*...*(B*C)}^{l\  times}=\lim_{l\rightarrow\infty}\sum_{i=1}^s\gamma_i\otimes\gamma_i^t+\sum_{i=s+1}^n(\frac{\lambda_i}{\lambda_1})^{2l}\gamma_i\otimes\gamma_i^t=\sum_{i=1}^s\gamma_i\otimes\gamma_i^t,$\\
still is positive semidefinite (the set of positive semidefinite Hermitian matrices is closed!).

Thus $\sum_{i=1}^s\gamma_i\otimes\gamma_i^t*B=\sum_{i=1}^s\gamma_i\otimes\delta_i=D$ is a positive semidefinite Hermitian matrix.
\end{proof}

\subsection{Split Decompositions}
\indent\\

\begin{theorem}\textbf{$($Split Decomposition for SPC matrices$)$}\label{splitdecompositionSPC}
Let $A\in M_k\otimes M_k$ be a SPC matrix. Suppose that $\lambda_1=...=\lambda_s>\lambda_{s+1}\geq...\geq\lambda_n>0.$
 Let $D=\sum_{i=1}^s\gamma_i\otimes\gamma_i$.

\begin{itemize}
\item[a)]  There exists a unique Hermitian Schmidt decomposition, $D=\sum_{i=1}^s\gamma_i'\otimes\gamma_i'$, such that $\gamma_i'$ are positive semidefinite for $1\leq i\leq s$.
\item[b)] Let $V_i\in M_k$ be the Hermitian projection onto $\Im(\gamma_i')$  $(1\leq i\leq s)$ and  $V_{s+1}\in M_k$ be the Hermitian projection onto $(\Im(\gamma_1')\oplus ...\oplus\Im(\gamma_{s}'))^{\perp}$.\\
 Then 
\begin{equation}\label{eqsplit1}
A=\sum_{i=1}^{s+1} (V_i\otimes V_i) A (V_i\otimes V_i)
\end{equation}
and $(V_i\otimes V_i) A (V_i\otimes V_i)$ are also SPC, for $1\leq i\leq s+1$.
\end{itemize}

We denote equation \eqref{eqsplit1} as \textbf{the split decomposition} of $A$. 
 
\end{theorem}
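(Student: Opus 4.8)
The plan is to deduce part a) immediately from the earlier results and then, for part b), to run the two-projection argument of corollary \ref{corollarynew5} with the list $V_1,\dots,V_{s+1}$ in place of a single pair.

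For a), apply lemma \ref{lemmapositive} to the Hermitian Schmidt decomposition $A=\sum_{i=1}^n\lambda_i\gamma_i\otimes\gamma_i$ with $\lambda_1=\dots=\lambda_s>\lambda_{s+1}\geq\dots\geq\lambda_n$: this gives that $D=\sum_{i=1}^s\gamma_i\otimes\gamma_i$ is positive semidefinite Hermitian. Since $\sum_{i=1}^s\gamma_i\otimes\gamma_i$ is a Hermitian Schmidt decomposition of $D$ with all coefficients equal to $1$, the matrix $D$ is SPC with all $\lambda_i=1$, so theorem \ref{theomiracle1} yields the unique decomposition $D=\sum_{i=1}^s\gamma_i'\otimes\gamma_i'$ with each $\gamma_i'$ positive semidefinite. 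This is exactly a).

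For b), I would first record two facts. Because $\{\gamma_1',\dots,\gamma_s'\}$ are orthonormal and positive semidefinite, their images are pairwise orthogonal (as in the proof of theorem \ref{theomiracle1} b)), so $V_1,\dots,V_s,V_{s+1}$ are pairwise orthogonal Hermitian projections with $\sum_{i=1}^{s+1}V_i=Id_k$. Next, since $D=\sum_{i=1}^s\gamma_i\otimes\gamma_i=\sum_{i=1}^s\gamma_i'\otimes\gamma_i'$, the matrix $A$ has the Hermitian Schmidt decomposition $A=\sum_{i=1}^s\lambda_1\,\gamma_i'\otimes\gamma_i'+\sum_{i=s+1}^n\lambda_i\,\gamma_i\otimes\gamma_i$, in which $\{\gamma_1',\dots,\gamma_s',\gamma_{s+1},\dots,\gamma_n\}$ is orthonormal and the first $s$ components are positive semidefinite. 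I will write this decomposition as $A=\sum_{j=1}^n\mu_j\beta_j\otimes\beta_j$, with $\beta_1,\dots,\beta_s$ positive semidefinite.

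The core of the proof consists of the two vanishing statements that underlie corollary \ref{corollarynew5}. For each $p\leq s$ both tensor factors of the $p$-th term equal the positive semidefinite $\beta_p=\gamma_p'$, so lemma \ref{new3} gives $\Im(\gamma_p')\otimes\ker(\gamma_p')\subset\ker(A)$ and $\ker(\gamma_p')\otimes\Im(\gamma_p')\subset\ker(A)$. A short case check then shows $\Im(V_i)\otimes\Im(V_j)\subset\ker(A)$ for $i\neq j$: if $i,j\leq s$ take $p=i$ and use $\Im(V_j)\subset\Im(\gamma_i')^{\perp}=\ker(\gamma_i')$; if $j=s+1$ use that $\Im(V_{s+1})=(\bigoplus_{k=1}^s\Im(\gamma_k'))^{\perp}$ lies in every $\ker(\gamma_p')$; the case $i=s+1$ is symmetric. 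Hence $A(V_i\otimes V_j)=0$ and, taking adjoints, $(V_i\otimes V_j)A=0$ for $i\neq j$; expanding $A=(Id\otimes Id)\,A\,(Id\otimes Id)$ through $Id=\sum_a V_a$ and dropping the vanishing terms leaves $A=\sum_{a,c}(V_a\otimes V_a)A(V_c\otimes V_c)$. To kill the off-diagonal terms $a\neq c$, I would invoke lemma \ref{new5}: for the relevant $p\leq s$, orthonormal bases $\{r_m\}$ of $\Im(V_a)$ and $\{s_l\}$ of $\Im(V_c)$ lie in $\Im(\gamma_p')$ and $\ker(\gamma_p')$ in one order or the other, so $tr(\beta_j s_l\overline{r_m}^t)=0$ and hence, exactly as computed in corollary \ref{corollarynew5}, $V_a\beta_j V_c=0$ for every $j$. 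Therefore $(V_a\otimes V_a)A(V_c\otimes V_c)=\sum_j\mu_j(V_a\beta_j V_c)\otimes(V_a\beta_j V_c)=0$ for $a\neq c$, which proves \eqref{eqsplit1}.

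Finally, each block $B=(V_i\otimes V_i)A(V_i\otimes V_i)=\sum_j\mu_j(V_i\beta_j V_i)\otimes(V_i\beta_j V_i)$ is positive semidefinite Hermitian, being $PAP$ with $P=V_i\otimes V_i$ a Hermitian projection, and has the symmetric form $\sum_j\mu_j M_j\otimes M_j$ with $M_j=V_i\beta_j V_i$ Hermitian and $\mu_j>0$. For such a matrix $F_B=G_B$ on $supp_1(B)=supp_2(B)$, so $G_B$ is self-adjoint there, and it is positive definite on $supp_1(B)$ because the quadratic form $\sum_j\mu_j\,tr(XM_j)^2$ vanishes only for $X$ orthogonal to every $M_j$, hence for $X=0$ within $supp_1(B)$; diagonalizing $G_B$ in an orthonormal eigenbasis $\{\theta_j\}$ of $supp_1(B)$ and using the identity $B=\sum_j\theta_j\otimes G_B(\theta_j)$ exhibits a Hermitian Schmidt decomposition with positive coefficients, so $B$ is SPC. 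I expect the main obstacle to be the case bookkeeping needed to place $\Im(V_i),\Im(V_j)$ inside the image and kernel of a single positive semidefinite $\gamma_p'$, so that lemmas \ref{new3} and \ref{new5} apply uniformly to all $s+1$ projections; the rest follows the template of corollary \ref{corollarynew5}.
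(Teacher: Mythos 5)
Your proof is correct, and it rests on the same pillars as the paper's — lemma \ref{lemmapositive} plus theorem \ref{theomiracle1} for part a), and lemmas \ref{new3} and \ref{new5} for part b) — but the execution of part b) is organized differently. For the identity \eqref{eqsplit1}, the paper treats corollary \ref{corollarynew5} as a black box and iterates it with the complementary projections $V_i'=Id-V_i$, peeling off one diagonal block at a time and ending with $V_1'\cdots V_s'=V_{s+1}$; you instead inline and generalize the corollary's two-projection proof into a one-shot argument for all $s+1$ projections, using lemma \ref{new3} to get $\Im(V_i)\otimes\Im(V_j)\subset\ker(A)$ for every $i\neq j$ and lemma \ref{new5} to annihilate the off-diagonal blocks $(V_a\otimes V_a)A(V_c\otimes V_c)$; this trades the paper's induction for the case bookkeeping you flagged, and both are sound (your decomposition $A=\sum_j\mu_j\beta_j\otimes\beta_j$ is a legitimate Hermitian Schmidt decomposition of $A$, so the two lemmas do apply to it). For the SPC property of the diagonal blocks, the paper argues globally: $F_A=G_A$ is positive semidefinite self-adjoint on $supp_1(A)$, so every Hermitian Schmidt decomposition of $A$ has SPC form, and since the blocks have pairwise orthogonal supports, concatenating their Schmidt decompositions yields one of $A$, forcing each block to be SPC. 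You argue locally, showing for each block $B=\sum_j\mu_j M_j\otimes M_j$ that $G_B$ is positive definite self-adjoint on $supp_1(B)$ and diagonalizing it to exhibit the SPC-form decomposition directly. Your local verification is more self-contained — it never needs the claim that all Schmidt decompositions of $A$ share one type — and it makes explicit what the paper's rather terse final step leaves implicit; the paper's version, in exchange, recycles corollary \ref{corollarynew5} and the orthogonality of supports, which it uses elsewhere anyway.
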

\begin{proof} a) By lemma \ref{lemmapositive}, $D=\sum_{i=1}^s \gamma_i\otimes\gamma_i$ is positive semidefinite. Thus, by theorem \ref{theomiracle1}, the result follows.

b) First, since $tr(\gamma_i'\gamma_j')=0$, for $i\neq j$, and $\gamma_i'$ is positive semidefinite for every $i$ then $\Im(V_i)=\Im(\gamma_i')\subset\ker(\gamma_j')$. Now $\ker(\gamma_j')=\Im(\gamma_j')^{\perp}=\ker(V_j)$, since $\gamma_j'$ is Hermitian. Thus, $\Im(V_i)\subset\ker(V_j)$ and $V_iV_j=V_jV_i=0$ for $i\neq j$.

Let $V_j'=Id-V_j$ and notice that $V_j'V_i=V_iV_j'=V_i$, for $i\neq j$, since $V_iV_j=V_jV_i=0$.

Now by lemma \ref{corollarynew5},
$A=(V_i\otimes V_i)A(V_i\otimes V_i)+(V_i'\otimes V_i')A(V_i'\otimes V_i')$, for $1\leq i \leq s$. Thus,
$A=(V_1\otimes V_1)A(V_1\otimes V_1)+(V_1'\otimes V_1')A(V_1'\otimes V_1')$. 

Next $(V_1'\otimes V_1')A(V_1'\otimes V_1')=(V_1'\otimes V_1')(V_2\otimes V_2)A(V_2\otimes V_2)(V_1'\otimes V_1')+(V_1'\otimes V_1')(V_2'\otimes V_2')A(V_2'\otimes V_2')(V_1'\otimes V_1')$.

Thus, $A=(V_1\otimes V_1)A(V_1\otimes V_1)+(V_2\otimes V_2)A(V_2\otimes V_2)+(V_1'\otimes V_1')(V_2'\otimes V_2')A(V_2'\otimes V_2')(V_1'\otimes V_1')$, since $V_1'V_2=V_2V_1'=V_2$.
 
We can repeat the argument to obtain $A=\sum_{i=1}^s (V_s\otimes V_s)A(V_s\otimes V_s)+(V_1'\ldots V_s'\otimes V_1'\ldots V_s')A(V_1'\ldots V_s'\otimes V_1'\ldots V_s').$

Notice that  $V_1'\ldots V_s'=Id-V_1-\ldots-V_s=V_{s+1}$, because $V_jV_i=0$ for $i\neq j$.

Finally, to see that $(V_i\otimes V_i) A (V_i\otimes V_i)$ is also SPC, notice that $G_A(X)=F_A(X)=\sum_{i=1}^n \lambda_i tr(\gamma_i X)\gamma_i$. It implies that $G_A:supp_1(A)\rightarrow supp_2(A)=supp_1(A)$ is a positive semidefinite self-adjoint linear transformation. Thus, $F_A\circ G_A=G_{A}^2$ and $G_A$ have the same eigenvectors which implies that all Hermitian Schmidt decompositions of $A$ have the same SPC type $($See theorem \ref{SVD} for more details$)$.

Next, since $V_iV_j=0$ for $i\neq j$, then $supp_1((V_i\otimes V_i) A (V_i\otimes V_i))\perp supp_1((V_j\otimes V_j) A (V_j\otimes V_j))$. Thus, the sum of the Hermitian Schmidt decompositions of $(V_i\otimes V_i) A (V_i\otimes V_i)$ for $1\leq i\leq s+1$ is a Hermitian Schmidt decomposition of $A$, which is SPC. Thus, each $(V_i\otimes V_i) A (V_i\otimes V_i)$ is SPC.

\end{proof}

\begin{theorem}\textbf{$($Split Decomposition for PPT matrices$)$}\label{splitdecompositionPPT}
 Let $A\in M_k\otimes M_m$ be a PPT matrix. Suppose $A= \sum_{i=1}^n\lambda_i\gamma_i\otimes\delta_i$ is a Hermitian Schmidt decomposition such that $\lambda_1=...=\lambda_s>\lambda_{s+1}\geq...\geq\lambda_n>0.$
 Let $D=\sum_{i=1}^s\gamma_i\otimes\delta_i$.
 
\begin{itemize}
\item[a)]  There exists a unique Hermitian Schmidt decomposition, $D=\sum_{i=1}^s\gamma_i'\otimes\delta_i'$, such that $\gamma_i',\delta_i'$ are positive semidefinite for $1\leq i\leq s$.
\item[b)] Let $V_i\in M_k$ be the Hermitian projection onto $\Im(\gamma_i')$  $(1\leq i\leq s)$ and  $V_{s+1}\in M_k$ be the Hermitian projection onto $(\Im(\gamma_1')\oplus ...\oplus\Im(\gamma_{s}'))^{\perp}$.\\
Let $W_i\in M_m$ be the Hermitian projection onto $\Im(\delta_i')$  $(1\leq i\leq s)$ and $W_{s+1}\in M_m$ be the Hermitian projection  onto $(\Im(\delta_1')\oplus ...\oplus\Im(\delta_{s}'))^{\perp}$.\\
 Then 
\begin{equation}\label{eqsplit}
A=\sum_{i=1}^{s+1} (V_i\otimes W_i) A (V_i\otimes W_i)
\end{equation}
and $(V_i\otimes W_i) A (V_i\otimes W_i)$ is PPT for $1\leq i\leq s+1$.
\end{itemize}

We denote  equation \eqref{eqsplit} as  \textbf{the split decomposition} of $A$.

 \end{theorem}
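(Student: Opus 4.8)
The plan is to mirror the proof of the SPC split decomposition (theorem \ref{splitdecompositionSPC}), replacing the single-factor projections by the pairs $V_i,W_i$ and importing the positivity of the partial transpose. For part a) I would first show that $D$ is itself PPT. By lemma \ref{lemmapositive} applied to $A$, the top part $D=\sum_{i=1}^s\gamma_i\otimes\delta_i$ is positive semidefinite. Since $A$ is PPT, $A^{t_1}=\sum_{i=1}^n\lambda_i\gamma_i^t\otimes\delta_i$ is positive semidefinite, and this is again a Hermitian Schmidt decomposition (each $\gamma_i^t$ is Hermitian and $\{\gamma_i^t\}$ stays orthonormal, with the same ordering of the $\lambda_i$); hence lemma \ref{lemmapositive} applied to $A^{t_1}$ shows $D^{t_1}=\sum_{i=1}^s\gamma_i^t\otimes\delta_i$ is positive semidefinite. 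Thus $D$ is PPT, and since $D=\sum_{i=1}^s\gamma_i\otimes\delta_i$ is a Hermitian Schmidt decomposition with all coefficients equal to $1$, corollary \ref{corollarytheomiracle1} yields the unique decomposition $D=\sum_{i=1}^s\gamma_i'\otimes\delta_i'$ with $\gamma_i',\delta_i'$ positive semidefinite.

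For part b), using $\lambda_1=\dots=\lambda_s$ I would rewrite $A=\sum_{i=1}^s\lambda_1\,\gamma_i'\otimes\delta_i'+\sum_{i=s+1}^n\lambda_i\,\gamma_i\otimes\delta_i$, again a Hermitian Schmidt decomposition of $A$, now exhibiting the positive semidefinite pairs $(\gamma_i',\delta_i')$ at the top. Because the $\gamma_i'$ are positive semidefinite and orthonormal their images are mutually orthogonal, so $V_iV_j=0$ and $W_iW_j=0$ for $i\neq j$, and $V_{s+1}=Id-\sum_{i=1}^sV_i$, $W_{s+1}=Id-\sum_{i=1}^sW_i$. Fixing $j\le s$ and applying lemma \ref{new3} to the positive semidefinite pair $(\gamma_j',\delta_j')$ places $\Im(\gamma_j')\otimes\ker(\delta_j')$ and $\ker(\gamma_j')\otimes\Im(\delta_j')$ inside $\ker(A)$, i.e. $A(V_j\otimes W_j')=A(V_j'\otimes W_j)=0$ with $V_j'=Id-V_j$, $W_j'=Id-W_j$; taking adjoints kills the same blocks on the left. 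Expanding $A=\big((V_j+V_j')\otimes(W_j+W_j')\big)\,A\,\big((V_j+V_j')\otimes(W_j+W_j')\big)$ then leaves only
$$A=(V_j\otimes W_j)A(V_j\otimes W_j)+(V_j'\otimes W_j')A(V_j'\otimes W_j')+(V_j\otimes W_j)A(V_j'\otimes W_j')+(V_j'\otimes W_j')A(V_j\otimes W_j),$$
so the whole matter reduces to showing the two cross terms vanish.

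This cross-term vanishing is the main obstacle, and the difficulty is that lemma \ref{new5} (which handled the analogous step for SPC via the swap $r\otimes s\mapsto s\otimes r$) is unavailable when the two tensor factors live in different spaces. My plan is to reduce to the SPC case through the $*-product$, exactly as in corollary \ref{corollarytheomiracle1}. Writing $B_1=\sum_i\lambda_i\,\delta_i^t\otimes\gamma_i$ and $B_2=A^{t_1}=\sum_i\lambda_i\,\gamma_i^t\otimes\delta_i$, both positive semidefinite (transposition and interchange of tensor factors preserve positive semidefiniteness, and $A$ is PPT), corollary \ref{corollarypositive} gives that $S_1=A*B_1=\sum_i\lambda_i^2\,\gamma_i\otimes\gamma_i$ and $S_2=\big(\sum_i\lambda_i\,\delta_i\otimes\gamma_i\big)*B_2=\sum_i\lambda_i^2\,\delta_i\otimes\delta_i$ are positive semidefinite SPC matrices whose coefficients have the same ordering as those of $A$. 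Applying the SPC split (theorem \ref{splitdecompositionSPC}, through corollary \ref{corollarynew5}) to $S_1$ produces a positive semidefinite orthonormal basis $\{\tilde\gamma_1,\dots,\tilde\gamma_s\}$ of $supp_1(D)=(\gamma_1,\dots,\gamma_s)$ together with $\tilde V_j\gamma_i\tilde V_j'=0$ for every $i$ and every $j\le s$, where $\tilde V_j$ projects onto $\Im(\tilde\gamma_j)$; likewise $S_2$ yields $\tilde W_j\delta_i\tilde W_j'=0$.

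It then remains to match these projections with the ones in the statement. Here I would use that an orthonormal basis of a real subspace of Hermitian matrices consisting of positive semidefinite matrices is unique up to order: if $\{P_a\}$ and $\{Q_b\}$ are two such bases, the change-of-basis matrix $\big(tr(P_aQ_b)\big)$ is orthogonal with nonnegative entries, hence a permutation matrix. Applied to $supp_1(D)$ and to $supp_2(D)$ this identifies $\{\tilde\gamma_j\}$ with $\{\gamma_j'\}$ and $\{\tilde\delta_j\}$ with $\{\delta_j'\}$, so $\tilde V_j=V_j$ and $\tilde W_j=W_j$. Consequently $(V_j\otimes W_j)A(V_j'\otimes W_j')=\sum_i\lambda_i\,(V_j\gamma_iV_j')\otimes(W_j\delta_iW_j')=0$, and its adjoint likewise, which establishes the one-step split for each $j\le s$. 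Iterating precisely as in the proof of theorem \ref{splitdecompositionSPC} (using $V_i'V_j=V_j$, $W_i'W_j=W_j$ for $i\neq j$, and $V_1'\cdots V_s'=V_{s+1}$, $W_1'\cdots W_s'=W_{s+1}$) gives $A=\sum_{i=1}^{s+1}(V_i\otimes W_i)A(V_i\otimes W_i)$. Finally each summand is PPT: $(V_i\otimes W_i)A(V_i\otimes W_i)$ is a compression of the positive semidefinite $A$, and its partial transpose equals $(V_i^t\otimes W_i)A^{t_1}(V_i^t\otimes W_i)$, a compression of the positive semidefinite $A^{t_1}$.
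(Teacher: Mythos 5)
Your proposal is correct, and part a) coincides exactly with the paper's argument (two applications of lemma \ref{lemmapositive} show $D$ is PPT, then corollary \ref{corollarytheomiracle1} applies). In part b), however, you dispose of the cross terms by a genuinely different mechanism. The paper, after obtaining $A(V_i\otimes W_j)=0$ for $i\neq j$ from lemma \ref{new3}, expands $A=\sum_{i,j}(V_i\otimes W_i)A(V_j\otimes W_j)$ and then uses a partial-transposition trick: since $A^{t_1}$ is positive semidefinite with Hermitian Schmidt decomposition $\lambda_1\sum_{i=1}^s\gamma_i'^t\otimes\delta_i'+\sum_{i=s+1}^n\lambda_i\gamma_i^t\otimes\delta_i$, lemma \ref{new3} applied to $A^{t_1}$ gives $A^{t_1}(V_i^t\otimes W_j)=0$ for $i\neq j$, so partially transposing the double-sum identity collapses it to $A^{t_1}=\sum_i(V_i^t\otimes W_i)A^{t_1}(V_i^t\otimes W_i)$, and transposing back yields the split in one stroke (this simultaneously kills all cross terms, including those involving the index $s+1$, with no iteration, and delivers the PPT of the summands exactly as in your last step). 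You instead reduce to the SPC case through the $*-product$, forming $S_1=\sum_i\lambda_i^2\gamma_i\otimes\gamma_i$ and $S_2=\sum_i\lambda_i^2\delta_i\otimes\delta_i$ (positive semidefinite by corollary \ref{corollarypositive}, the PPT hypothesis entering through the positive semidefiniteness of your $B_1$ and $B_2$), and import corollary \ref{corollarynew5} to get the stronger term-by-term facts $V_j\gamma_iV_j'=0$ and $W_j\delta_iW_j'=0$; your matching step is also sound, since an orthogonal change-of-basis matrix with nonnegative entries is indeed a permutation matrix, so positive semidefinite orthonormal bases of a fixed real subspace of Hermitian matrices are unique up to order. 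What the paper's route buys is brevity and uniformity; what yours buys is stronger local information (every matrix in $supp_1(A)$ is block-diagonal with respect to $\Im(\gamma_j')\oplus\ker(\gamma_j')$, and dually on the second factor), a reusable uniqueness lemma, and closer kinship with the way corollary \ref{corollarytheomiracle1} itself converts PPT statements into SPC ones. One detail to patch: lemma \ref{new5} gives the trace conditions only for the terms of the rewritten decomposition of $S_1$ (the $\tilde\gamma_j$ and $\gamma_{s+1},\dots,\gamma_n$); to conclude $\tilde V_j\gamma_i\tilde V_j'=0$ for $i\leq s$ you must add that $\gamma_1,\dots,\gamma_s$ are real linear combinations of $\tilde\gamma_1,\dots,\tilde\gamma_s$, a one-line fix by linearity.
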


\begin{proof} a) By lemma \ref{lemmapositive}, $D=\sum_{i=1}^s \gamma_i\otimes\delta_i$ is positive semidefinite.
Now since $A$ is PPT then $A^{t_1}= \sum_{i=1}^n \lambda_i\gamma_i^t\otimes\delta_i$ is positive semidefinite. 
Then by lemma \ref{lemmapositive}, $\sum_{i=1}^s \gamma_i^t\otimes\delta_i=(\cdot)^t\otimes Id (D)$ is positive semidefinite too.

Therefore $D=\sum_{i=1}^s \gamma_i\otimes\delta_i$ is positive semidefinite and PPT and by corollary \ref{corollarytheomiracle1} the result follows.\\\\
b)
First, by lemma \ref{new3}, since $\gamma_i'$ and $\delta_i'$ are positive semidefinite, then
$(\Im(\gamma_i')\otimes\ker(\delta_i'))\oplus(\ker(\gamma_i')\otimes\Im(\delta_i'))\subset\ker(A)$
for every $1\leq i\leq s$.

Next since $tr(\gamma_i'\gamma_j')=tr(\delta_i'\delta_j')=0$ and $\gamma_i'$ and $\delta_i'$ are positive semidefinite then $\Im(\gamma_i')\subset \ker(\gamma_j')$ and $\Im(\delta_i')\subset \ker(\delta_j'),$  for $1\leq i\neq j\leq s$. 
Notice also that $(\Im(\gamma_1')\oplus ...\oplus\Im(\gamma_{s}'))^{\perp}\subset \ker(\gamma_j')$ and 
$(\Im(\delta_1')\oplus ...\oplus\Im(\delta_{s}'))^{\perp}\subset\ker(\delta_j')$, for $1\leq  j\leq s$.

Therefore, for $1\leq i\neq j\leq s+1$, we have $\Im(V_i)\otimes \Im(W_j)\subset \Im(\gamma_i')\otimes\ker(\delta_i')$ or $\Im(V_i)\otimes \Im(W_j)\subset\ker(\gamma_j')\otimes\Im(\delta_j')$
depending if $i<s+1$ or $j<s+1$.
Thus,  $A(V_i\otimes W_j)=0$ and $[A(V_i\otimes W_j)]^*=(V_i\otimes W_j)A=0$,  for $1\leq i\neq j\leq s+1$.

Now $\displaystyle \sum_{i=1}^{s+1}V_i=Id\in M_k$ and $\displaystyle \sum_{j=1}^{s+1}W_j=Id\in M_m$ and 
\begin{equation}\label{eq7}
A=\sum_{i,j,p,q=1}^{s+1} (V_i\otimes W_j)A(V_p\otimes W_q)=\sum_{i,j=1}^{s+1} (V_i\otimes W_i)A(V_j\otimes W_j).
\end{equation}

Since $A$ is PPT then $A^{t_1}$ is a positive semidefinite Hermitian matrix. Notice that $A^{t_1}=\lambda_1(\sum_{i=1}^s\gamma_i'^t\otimes\delta_i')+\sum_{i=s+1}^n\lambda_i\gamma_i^t\otimes\delta_i$ and $V_i^t\in M_k$ is the Hermitian projection onto $\Im(\gamma_i'^t)$  $(1\leq i\leq s)$, $V_{s+1}^t\in M_k$ is the Hermitian projection onto $(\Im(\gamma_1'^t)\oplus ...\oplus\Im(\gamma_{s}'^t))^{\perp}$.

By the same reason that we obtained $A(V_i\otimes W_j)=0$, we obtain now $A^{t_1}(V_i^t\otimes W_j)=0$, for $1\leq i\neq j\leq s+1$.

Finally, from  equation \ref{eq7}, we get $A^{t_1}=(\cdot)^t\otimes Id(A)=$ $$\sum_{i,j=1}^{s+1} (V_j^t\otimes W_i)A^{t_1}(V_i^t\otimes W_j)=\sum_{i=1}^{s+1} (V_i^t\otimes W_i)A^{t_1}(V_i^t\otimes W_i).$$
Thus, $A=(\cdot)^t\otimes Id(A^{t_1})=\sum_{i=1}^{s+1} (V_i\otimes W_i)A(V_i\otimes W_i)$.

Now since $A^{t_1}$ is positive semidefinite  then 
$(V_i^t\otimes W_i)A^{t_1}(V_i^t\otimes W_i)$ is also positive semidefinite, and thus, $(V_i\otimes W_i)A(V_i\otimes W_i)$ is PPT.  Therefore is PPT.

\end{proof}

\section{A Description of Weak Irreducible SPC or PPT Matrices}

The aim of this section is to continue to prove theorems about SPC/PPT matrices, but first we need the definition  
of weak irreducible matrix. This definition is a weaker version of the concept of irreducible state recently defined in \cite{chen2}. As we can see in the last line of table I of the paper just cited, the matrix $Id\otimes Id$ is not an irreducible state, but in our definition this matrix will be and every irreducible matrix in their sense is irreducible in our sense. Thus, the restriction we impose on a matrix to be irreducible is weaker than the restriction that the authors of \cite{chen2} imposed.

Two important theorems proved in this section are theorems \ref{conditionsirreducible} and \ref{conditionsirreducible2}. These theorems provide a description of all weak irreducible SPC/PPT matrices. This description is related to the format of their Hermitian Schmidt decompositions. 

A relevant fact about this weak irreducible property is that the authors of \cite{leinaas} reduced the positive definite case of the separability problem to a certain standard type of matrices, as described in our introduction. These positive definite matrices with the standard type are all weak irreducible and this fact can be noticed using corollary \ref{corollarylemmainequality}. Actually, these matrices satisfy the condition in theorem \ref{conditionsirreducible} to be weak irreducible.

We also proved in this section that the SPC/PPT matrices are weak irreducible or sum of weak irreducible matrices of the same type. The importance of this kind of theorem for the separability problem was noticed by the authors of \cite{chen1} in their Corollary 16. There they noticed that the separability problem can be reduced to the set of irreducible matrices. 

For the sake of completeness we shall also show in the next section that the separability problem can be reduced to the set of weak irreducible matrices and let us not forget that we have a complete description  of all weak irreducible SPC/PPT matrices (theorem \ref{conditionsirreducible2}).
This reduction and this description of weak irreducible matrices can be seen as a generalization  of the result obtained in \cite{leinaas}, for the positive definite case, described in the previous paragraphs and in the introduction. 
Following this idea we could, in section 5, provide sharp inequalities for separability of weak irreducible SPC/PPT matrices.
Recall that a necessary condition for separability of a matrix is to be PPT. Therefore we obtained quite general results.

In the last result of this section we proved that if the tensor rank or the rank of a matrix is big enough then the matrix must be weak irreducible.

\begin{definition}\label{definitionweakirreducible} Let  $V_i:\mathbb{C}^k\rightarrow\mathbb{C}^k$ and $W_i:\mathbb{C}^m\rightarrow\mathbb{C}^m$ be Hermitian projections, for $1\leq i\leq 2$,  such that $V_1V_2=0$, $W_1W_2=0$, $V_1+V_2=Id\in M_k$  and $W_1+W_2=Id\in M_m$.

Let $A\in M_k\otimes M_m$ be a positive semidefinite Hermitian matrix. We say that $A$ is  \textbf{weak irreducible} if 
the equality $A=\sum_{i=1}^{2} (V_i\otimes W_i) A (V_i\otimes W_i)$ holds then either $(V_1\otimes W_1) A (V_1\otimes W_1)=0$ or  $(V_2\otimes W_2) A (V_2\otimes W_2)=0$.
\end{definition}

\subsection{A Condition for Weak Irreducibility}
\indent\\

\begin{theorem}\label{conditionsirreducible}
 Let $A\in M_k\otimes M_m$ be positive semidefinite Hermitian matrix with a Hermitian Schmidt decomposition
 $A=\sum_{i=1}^n\lambda_i\gamma_i\otimes\delta_i$ with $\lambda_1\geq\ldots\geq\lambda_n>0$. Then $A$ is weak irreducible if 
\begin{enumerate}
\item $\lambda_1>\lambda_{2}\geq...\geq\lambda_n>0$.
\item $\Im(\gamma_i)\subset\Im(\gamma_1)$ for $1\leq i\leq n$.
\item $\Im(\delta_i)\subset\Im(\delta_1)$ for $1\leq i\leq n$.
\end{enumerate}  
\end{theorem}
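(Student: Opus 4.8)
The plan is to start from an arbitrary admissible quadruple of projections $V_1,V_2,W_1,W_2$ as in Definition \ref{definitionweakirreducible}, assume the splitting $A=\sum_{i=1}^2 (V_i\otimes W_i)A(V_i\otimes W_i)$ holds, and prove that one of the two summands vanishes. Write $A_i=(V_i\otimes W_i)A(V_i\otimes W_i)$; each $A_i$ is positive semidefinite, and since $(V_1\otimes W_1)(V_2\otimes W_2)=V_1V_2\otimes W_1W_2=0$, the two summands have orthogonal ranges. More is true: every matrix in $supp_1(A_i)$ has the form $V_iMV_i$ (apply Theorem \ref{theoremprincipal} to the decomposition $A_i=\sum_j \lambda_j (V_i\gamma_jV_i)\otimes(W_i\delta_jW_i)$), and likewise in the second factor; hence $supp_1(A_1)\perp supp_1(A_2)$ and $supp_2(A_1)\perp supp_2(A_2)$ with respect to the trace inner product.

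The key step is to recombine the pieces into a single Hermitian Schmidt decomposition of $A$. Taking Hermitian Schmidt decompositions $A_1=\sum_j \mu_j\alpha_j\otimes\beta_j$ and $A_2=\sum_l \nu_l\rho_l\otimes\sigma_l$ (Theorem \ref{SVD}), the support-orthogonality just established makes $\{\alpha_j\}\cup\{\rho_l\}$ and $\{\beta_j\}\cup\{\sigma_l\}$ orthonormal, so their concatenation is a Hermitian Schmidt decomposition of $A=A_1+A_2$. Because the Schmidt coefficients of $A$ are intrinsic (they are the square roots of the eigenvalues of $F_A\circ G_A$, see the proof of Theorem \ref{SVD}), the multiset $\{\mu_j\}\cup\{\nu_l\}$ coincides with $\{\lambda_1,\dots,\lambda_n\}$. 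Hypothesis (1) says the top coefficient $\lambda_1$ is simple, so it occurs in exactly one of the two lists; moreover its Schmidt pair $(\gamma_1,\delta_1)$ is rigid, being unique up to a common sign, since $\gamma_1$ is a unit eigenvector of $F_A\circ G_A$ for the simple top eigenvalue $\lambda_1^2$ and $\delta_1=G_A(\gamma_1)/\lambda_1$. Thus the unique term carrying $\lambda_1$ equals $\lambda_1\gamma_1\otimes\delta_1$ up to sign, and it belongs to one of the summands, say $A_1$ (the other case is symmetric). In particular $\gamma_1\in supp_1(A_1)$ and $\delta_1\in supp_2(A_1)$ up to sign, whence $\Im(\gamma_1)\subset\Im(V_1)$ and $\Im(\delta_1)\subset\Im(W_1)$.

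It then remains to feed in hypotheses (2) and (3). From $\Im(V_1)=\ker(V_2)$ and $\Im(\gamma_i)\subset\Im(\gamma_1)\subset\Im(V_1)$ we get $V_2\gamma_i=0$, hence $V_2\gamma_iV_2=0$ for every $i$; symmetrically $W_2\delta_iW_2=0$ for every $i$. Therefore
\[
A_2=(V_2\otimes W_2)A(V_2\otimes W_2)=\sum_{i=1}^n\lambda_i\,(V_2\gamma_iV_2)\otimes(W_2\delta_iW_2)=0,
\]
which is exactly what weak irreducibility requires. I expect the main obstacle to be the recombination step together with the rigidity of the dominant Schmidt pair: one must argue carefully that support-orthogonality really produces a genuine Hermitian Schmidt decomposition of $A$ and that the simple largest coefficient pins $\gamma_1\otimes\delta_1$ down inside a single summand. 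Once that is in place, conditions (2)–(3) force the complementary summand to vanish with no further computation.
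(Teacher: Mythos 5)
Your proposal is correct and follows essentially the same route as the paper's own proof: both establish that $supp_1(A_1)\perp supp_1(A_2)$ via Theorem \ref{theoremprincipal}, then use the spectral characterization of the Schmidt coefficients through $F_A\circ G_A$ together with the simplicity of $\lambda_1$ to force $\gamma_1$ into the support of a single summand, after which hypothesis (2) makes the complementary summand vanish. The only cosmetic difference is that you concatenate the Hermitian Schmidt decompositions of $A_1$ and $A_2$ and invoke intrinsicness of the coefficients, whereas the paper splits the operator $F_A\circ G_A$ as $F_{A_1}\circ G_{A_1}+F_{A_2}\circ G_{A_2}$ and compares eigenvalues directly; the underlying mechanism is identical.
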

\begin{proof}
Suppose $A$ satifies these three conditions above, recall that $\lambda_i^2$ are the non null eigenvalues of the self-adjoint map $F_A\circ G_A: supp_1(A)\rightarrow supp_2(A)$ and $\gamma_i$ are corresponding eigenvectors.

Let $A=A_1+A_2$ where $A_i=(V_i\otimes W_i) A (V_i\otimes W_i)$ and $V_i,W_i$ are described in definition \ref{definitionweakirreducible}.

Notice that $supp_i(A_1)\perp supp_i(A_2)$, for $1\leq i\leq 2$, since $V_1V_2=0$, $W_1W_2=0$. 
Then  $supp_i(A)= supp_i(A_1)\oplus supp_i(A_2)$, for $1\leq i\leq 2$.

Consider the maps $F_{B}:supp_2(A)\rightarrow supp_1(A)$ and $G_{B}:supp_1(A)\rightarrow supp_2(A)$, for $B\in\{A,A_1,A_2\}$. Notice that we are using different domain and codomain when $B=A_1$ and $A_2$ (see definition \ref{defmaps} and remark \ref{remarkadjoint}).

Notice that $F_{A_i}\circ G_{A_j}=0$, for $i\neq j$, since $supp_2(A_1)\perp supp_2(A_2)$ and $\Im(G_{A_1})=supp_2(A_1)\subset\ker(F_{A_2})$,  $\Im(G_{A_2})=supp_2(A_2)\subset\ker(F_{A_1})$.

Thus, $F_A\circ G_A=(F_{A_1}+F_{A_2})\circ (G_{A_1}+G_{A_2})=F_{A_1}\circ G_{A_1}+F_{A_2}\circ G_{A_2}$.

Now $F_{A_i}\circ G_{A_i}(supp_1(A_i))\subset supp_1(A_i)$ and $F_{A_i}\circ G_{A_i}(supp_1(A_j))=0$, for $1\leq i\leq 2$ and $i\neq j$. Therefore, the non null eigenvalues  of $F_A\circ G_A$ are the non null eigenvalues  of $F_{A_i}\circ G_{A_i}$,  for $1\leq i\leq 2$. Since the multiplicity of $\lambda_1^2$ is 1 then  $\gamma_1$ is an eigenvector of $F_{A_1}\circ G_{A_1}$ or an eigenvector of $F_{A_2}\circ G_{A_2}$.

Suppose $\gamma_1$ is an eigenvector of $F_{A_1}\circ G_{A_1}$ and since $\lambda_1^2\neq 0$  then $\gamma_1\in supp_1(A_1)$. Notice that $supp_1(A_1)\subset  span\ \{V_1\gamma_1V_1,\ldots,V_1\gamma_nV_1\}$, by item b) of theorem \ref{theoremprincipal}, and since $\gamma_1\in supp_1(A_1)$, we obtain $\Im(\gamma_1)\subset\Im(V_1)$.  Recall that $\Im(V_1)\perp\Im(V_2)$.

Finally by our assumption, for every $1\leq i\leq n$, we know that $\Im(\gamma_i)\subset\Im(\gamma_1)\subset\Im(V_1)$. Thus, $V_2\gamma_iV_2=0$ and $(V_2\otimes W_2) A (V_2\otimes W_2)=0$.
Therefore $A$  is weak irreducible.
\end{proof}
\begin{remark} The converse of this theorem is false. For example, let $v=\lambda_1 e_1\otimes e_1+\lambda_2 e_2\otimes e_2\in \mathbb{C}^k\otimes\mathbb{C}^k$, where $e_1$ and $e_2$ are the first two vectors in the canonical basis of $\mathbb{C}^k$ and $\lambda_1>\lambda_2>0$.
Let $A=v\overline{v}^t$. Since $A$ has rank 1 then $A$ is obviously weak irreducible. Now a Hermitian Schmidt decomposition of $A$ is $\lambda_1^2 \gamma_1\otimes \gamma_1 +(\lambda_1\lambda_2) \gamma_2\otimes\gamma_2+(\lambda_1\lambda_2) \gamma_3\otimes\gamma_3^t+\lambda_2^2  \gamma_4\otimes \gamma_4,$ where 
$\gamma_1= e_1e_1^t, \gamma_2=\frac{e_1e_2^t+e_2e_1^t}{\sqrt{2}}, \gamma_3=\frac{i(e_1e_2^t-e_2e_1^t)}{\sqrt{2}}, \gamma_4=e_2e_2^t$.
Notice that $\lambda_1^2>\lambda_1\lambda_2>\lambda_2^2$, but $\Im(\gamma_i)$ is not contained in  $\Im(\gamma_1)$. 
\end{remark}

Thus, the converse of  theorem \ref{conditionsirreducible} is not true in general, but if the matrix is SPC or PPT then the converse is true.

\subsection{A Description of Weak Irreducible SPC/PPT Matrices}
\indent\\

\begin{theorem}\label{conditionsirreducible2}$($\textbf{Description of Weak Irreducible SPC$/$PPT Matrices}$)$.
 Let $A\in M_k\otimes M_m$ be a SPC or PPT matrix. Then $A$ is weak irreducible if and only if the Hermitian Schmidt decomposition of $A$ satisfies the three conditions of theorem \ref{conditionsirreducible}.
 \end{theorem}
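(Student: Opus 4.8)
The forward implication, that the three conditions force weak irreducibility, is already contained in theorem \ref{conditionsirreducible}, whose proof nowhere used the SPC/PPT hypothesis; so there is nothing to add in that direction. All the work is the converse: assuming $A$ is SPC or PPT and weak irreducible, I want to recover the three conditions on a Hermitian Schmidt decomposition $A=\sum_{i=1}^n\lambda_i\gamma_i\otimes\delta_i$ with $\lambda_1=\cdots=\lambda_s>\lambda_{s+1}\geq\cdots\geq\lambda_n>0$.

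The plan is to feed the split decomposition into the definition of weak irreducibility. I would invoke theorem \ref{splitdecompositionPPT} (respectively \ref{splitdecompositionSPC}, where $W_i=V_i$) to obtain the projections $V_1,\ldots,V_{s+1}$ and $W_1,\ldots,W_{s+1}$ together with the block identity $A=\sum_{i=1}^{s+1}B_i$, where $B_i=(V_i\otimes W_i)A(V_i\otimes W_i)$. Then I would group these blocks into a two-projection decomposition by setting $P_1=V_1$, $P_2=Id-V_1=\sum_{i\geq 2}V_i$ and $Q_1=W_1$, $Q_2=Id-W_1=\sum_{i\geq 2}W_i$. Using the orthogonality $V_iV_j=W_iW_j=0$ for $i\neq j$ established inside the split decomposition, a direct computation gives $(P_1\otimes Q_1)A(P_1\otimes Q_1)=B_1$ and $(P_2\otimes Q_2)A(P_2\otimes Q_2)=\sum_{i\geq 2}B_i$, so that $A=(P_1\otimes Q_1)A(P_1\otimes Q_1)+(P_2\otimes Q_2)A(P_2\otimes Q_2)$ is a genuine instance of the equation in definition \ref{definitionweakirreducible}. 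Weak irreducibility then forces $B_1=0$ or $\sum_{i\geq 2}B_i=0$.

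The crux will be a nonvanishing lemma: for each $1\leq i\leq s$ the block $B_i$ is nonzero, which I would prove by locating $\gamma_i'$ inside $supp_1(B_i)$. Since $B_i=(V_i\otimes W_i)B_i(V_i\otimes W_i)$, theorem \ref{theoremprincipal} forces every matrix of $supp_1(B_i)$ to have image inside $\Im(V_i)=\Im(\gamma_i')$; as these images are mutually orthogonal, the supports $supp_1(B_i)$ are mutually orthogonal and $supp_1(A)=\bigoplus_i supp_1(B_i)$. Now $\gamma_i'$ lies in $supp_1(D)\subset supp_1(A)$, and being positive semidefinite with image exactly $\Im(V_i)$ it satisfies $V_j\gamma_i'V_j=0$ for $j\neq i$; projecting the expansion $\gamma_i'=\sum_j c_j$ (with $c_j\in supp_1(B_j)$) onto block $i$ yields $\gamma_i'=c_i\in supp_1(B_i)$, hence $B_i\neq 0$. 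With this, $B_1\neq 0$ rules out the first alternative, so $\sum_{i\geq 2}B_i=0$; being positive semidefinite, each summand vanishes, and since $B_2,\ldots,B_s$ are nonzero this can happen only when $s=1$. That is condition (1), and it gives $A=B_1=(V_1\otimes W_1)A(V_1\otimes W_1)$.

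Finally, from $A=(V_1\otimes W_1)A(V_1\otimes W_1)$ with $\Im(V_1)=\Im(\gamma_1')=\Im(\gamma_1)$ and $\Im(W_1)=\Im(\delta_1)$ (the $s=1$ case of the split), I would read off conditions (2) and (3): taking a minimal Hermitian decomposition of $A$ and comparing it with its conjugate by $V_1\otimes W_1$, theorem \ref{theoremprincipal} shows that every element of $supp_1(A)$ has image inside $\Im(\gamma_1)$ and every element of $supp_2(A)$ has image inside $\Im(\delta_1)$, i.e. $\Im(\gamma_i)\subset\Im(\gamma_1)$ and $\Im(\delta_i)\subset\Im(\delta_1)$ for all $i$. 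I expect the main obstacle to be exactly the nonvanishing lemma, namely justifying the orthogonal support splitting $supp_1(A)=\bigoplus_i supp_1(B_i)$ and the placement of $\gamma_i'$ in the correct block; everything else is bookkeeping with the already-established split decomposition and theorem \ref{theoremprincipal}.
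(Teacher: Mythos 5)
Your proposal is correct and follows essentially the same route as the paper: both proofs feed the split decomposition (theorems \ref{splitdecompositionSPC}/\ref{splitdecompositionPPT}) into definition \ref{definitionweakirreducible} via the grouping $V_1$ versus $Id-V_1$, show the relevant blocks are nonzero by locating $\gamma_1'$ (and $\gamma_2'$) in the appropriate supports, conclude $s=1$ and $A=(V_1\otimes W_1)A(V_1\otimes W_1)$, and then read off conditions (2)--(3) from theorem \ref{theoremprincipal}. The only differences are organizational: the paper runs the weak-irreducibility argument twice (once to rule out $s>1$, once to get $A=(V_1\otimes W_1)A(V_1\otimes W_1)$) and proves nonvanishing contrapositively, whereas you extract both conclusions from a single application together with a direct nonvanishing lemma ($\gamma_i'\in supp_1(B_i)$ via the orthogonal splitting of $supp_1(A)$) — a mild streamlining, not a different method.
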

\begin{proof}  If $A$ satisfies the three conditions, we saw that $A$ is weak irreducible in theorem \ref{conditionsirreducible}.

Now suppose $A$ is weak irreducible PPT matrix, let us prove that $A$ satifies the three conditions of theorem \ref{conditionsirreducible}.

Suppose $\lambda_1=\ldots=\lambda_{s}$ and $s>1$. Consider the notation of theorem \ref{splitdecompositionPPT}. 
By item b) of theorem \ref{splitdecompositionPPT}, $A=\sum_{i=1}^{s+1} (V_i\otimes W_i) A (V_i\otimes W_i)$. Let $V_2'=Id-V_1$ and $W_2'=Id-W_1$. Notice that $(V_2'\otimes W_2')A(V_2'\otimes W_2')=\sum_{i=2}^{s+1} (V_i\otimes W_i) A (V_i\otimes W_i)$, since $V_2'V_i=V_iV_2'=V_i$ and $W_2'W_i=W_iW_2'=W_i$ for $2\leq i\leq n$. Then $A=(V_1\otimes W_1)A(V_1\otimes W_1)+(V_2'\otimes W_2')A(V_2'\otimes W_2')$.

Now if $(V_1\otimes W_1)A(V_1\otimes W_1)=0$ then $supp_1(A)=supp_1((V_2'\otimes W_2')A(V_2'\otimes W_2'))$ then $\gamma_1'\in supp_1((V_2'\otimes W_2')A(V_2'\otimes W_2'))$. Then $\Im(\gamma_1')\subset\Im(V_2')\perp\Im(V_1)=\Im(\gamma_1').$ Thus, $\gamma_1'=0$ which is absurd.

Next since $A$ is weak irreducible we must have $(V_2'\otimes W_2')A(V_2'\otimes W_2')=0$, which implies $supp_1(A)=supp_1((V_1\otimes W_1)A(V_1\otimes W_1))$, but $\gamma_2'\in supp_1((V_1\otimes W_1)A(V_1\otimes W_1))$. Then $\Im(\gamma_2')\subset\Im(V_1)$ but $\Im(\gamma_2')=\Im(V_2)\perp\Im(V_1)$. Then, $\gamma_2'=0$ which is absurd, thus $s=1$. Therefore $\lambda_1>\lambda_2$ and  $D=\gamma_1\otimes\delta_1=\gamma_1'\otimes\delta_1'$.

Again, by item b) of theorem \ref{splitdecompositionPPT}, but now with $\lambda_1>\lambda_2$, we have $A=\sum_{i=1}^{2} (V_i\otimes W_i) A (V_i\otimes W_i)$, where $V_2$ is the Hermitian projection onto $\Im(\gamma_1')^{\perp}$ and $W_2$ is the Hermitian projection onto $\Im(\delta_1')^{\perp}$.
Again, by the same reasoning as above we obtain $A=(V_1\otimes W_1)A(V_1\otimes W_1)$.

Thus, for every $\gamma_i\in supp_1(A)$  and $\delta_i\in supp_2(A)$ we have  $\Im(\gamma_i)\subset\Im(V_1)=\Im(\gamma_1')=\Im(\gamma_1)$ and  $\Im(\delta_i)\subset\Im(W_1)=\Im(\delta_1')=\Im(\delta_1)$. Thus, $A$ satisfies the three conditions of theorem \ref{conditionsirreducible}.

Now the proof for SPC matrices is the same, we just need to use theorem \ref{splitdecompositionSPC} instead of \ref{splitdecompositionPPT} and that $\gamma_i=\delta_i$, $\gamma_i'=\delta_i'$  and $V_i=W_i$.
\end{proof}

\begin{remark}\label{remarkconditions} In the proof of the previous theorem we saw that if $A$ is weak irreducible then the split decomposition of $A$ is  $A=(V_1\otimes W_1)A(V_1\otimes W_1)$. Actually, we also saw the converse, because if $A=(V_1\otimes W_1)A(V_1\otimes W_1)$ then we saw that $\lambda_1>\lambda_2$ and for every $\gamma_i\in supp_1(A)$  and $\delta_i\in supp_2(A)$ we have  $\Im(\gamma_i)\subset\Im(\gamma_1)$ and  $\Im(\delta_i)\subset\Im(\delta_1)$. Thus, $A$ satisfies the three conditions of theorem \ref{conditionsirreducible} and $A$ is weak irreducible. Thus, $A$ is weak irreducible if and only if its split decomposition is $A=(V_1\otimes W_1)A(V_1\otimes W_1)$.
\end{remark}

\begin{theorem}\label{irreducibleparts} Let $A$ be the SPC matrix of theorem \ref{splitdecompositionSPC}. The matrices $(V_j\otimes V_j)A(V_j\otimes V_j)$ of the split decomposition of $A$ $($equation \eqref{eqsplit1}$)$ for $1\leq j\leq s$ are weak irreducible.
Let $A$ be the PPT  matrix of theorem \ref{splitdecompositionPPT}. The matrices $(V_j\otimes W_j)A(V_j\otimes W_j)$ of the split decomposition of $A$ $($equation \eqref{eqsplit}$)$ for $1\leq j\leq s$ are weak irreducible.
\end{theorem}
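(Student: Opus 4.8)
The plan is to show that each block $A_j:=(V_j\otimes W_j)A(V_j\otimes W_j)$ (for $1\le j\le s$), which is already known to be PPT (resp. SPC) by Theorem \ref{splitdecompositionPPT} (resp. \ref{splitdecompositionSPC}), satisfies the three conditions of Theorem \ref{conditionsirreducible}, and then invoke that theorem to conclude weak irreducibility. The crucial preliminary observation is that $supp_1(A_j)$ is contained in $\mathrm{span}\{V_j\gamma_1V_j,\dots,V_j\gamma_nV_j\}$ by Theorem \ref{theoremprincipal}, so every matrix in $supp_1(A_j)$ has image inside $\Im(V_j)=\Im(\gamma_j')$, and similarly every matrix in $supp_2(A_j)$ has image inside $\Im(W_j)=\Im(\delta_j')$. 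Hence, as soon as I produce a Hermitian Schmidt decomposition of $A_j$ whose first vectors $\gamma_1'',\delta_1''$ satisfy $\Im(\gamma_1'')=\Im(\gamma_j')$ and $\Im(\delta_1'')=\Im(\delta_j')$ and whose top coefficient is simple, conditions $(2)$ and $(3)$ hold automatically and $(1)$ is precisely that simplicity.

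I would first compute the block explicitly. Writing $A=\lambda_1 D+\sum_{i=s+1}^n\lambda_i\gamma_i\otimes\delta_i$ with $D=\sum_{i=1}^s\gamma_i'\otimes\delta_i'$, and using that the positive semidefinite matrices $\gamma_i'$ (resp. $\delta_i'$) are mutually orthogonal so that $V_j\gamma_i'V_j=\gamma_j'$ if $i=j$ and $0$ otherwise (same for the $\delta_i'$), the top part collapses to one term:
\[
A_j=\lambda_1\,\gamma_j'\otimes\delta_j'+R_j,\qquad R_j=(V_j\otimes W_j)\Big(\sum_{i=s+1}^n\lambda_i\gamma_i\otimes\delta_i\Big)(V_j\otimes W_j).
\]
Next I would check that $\gamma_j'$ is a top eigenvector of $F_{A_j}\circ G_{A_j}$. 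The key is $G_{A_j}(\gamma_j')=\lambda_1\delta_j'$ and $F_{A_j}(\delta_j')=\lambda_1\gamma_j'$: the contribution of $R_j$ vanishes because $tr(\gamma_i\gamma_j')=tr(\delta_i\delta_j')=0$ for $i>s$, since $\gamma_j',\delta_j'$ lie in $supp_1(D),supp_2(D)$, which are orthogonal to the tail vectors $\gamma_i,\delta_i$. Thus $F_{A_j}\circ G_{A_j}(\gamma_j')=\lambda_1^2\gamma_j'$. To see that $\lambda_1$ is the \emph{largest} Schmidt coefficient of $A_j$, I would use the identity $G_{A_j}(X)=W_j\,G_A(V_jXV_j)\,W_j$, which yields $\|G_{A_j}(X)\|\le\|G_A(V_jXV_j)\|\le\lambda_1\|X\|$, because $\|G_A\|_{op}=\lambda_1$ and compression by projections cannot increase the Hilbert--Schmidt norm.

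The main obstacle is simplicity of the top coefficient, which I expect to be the delicate step. Suppose $\eta\in supp_1(A_j)$ is a unit eigenvector of $F_{A_j}\circ G_{A_j}$ for $\lambda_1^2$; then $\|G_{A_j}(\eta)\|=\lambda_1$, and since $\eta=V_j\eta V_j$, the chain $\lambda_1=\|W_jG_A(\eta)W_j\|\le\|G_A(\eta)\|\le\lambda_1$ must be an equality throughout. The equality $\|G_A(\eta)\|=\lambda_1\|\eta\|$ forces $\eta$ into the top eigenspace of $F_A\circ G_A$, which is exactly $supp_1(D)=\mathrm{span}\{\gamma_1',\dots,\gamma_s'\}$. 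Writing $\eta=\sum_i c_i\gamma_i'$ and recalling that $\eta$ has image in $\Im(V_j)=\Im(\gamma_j')$ while the subspaces $\Im(\gamma_i')$ are mutually orthogonal, testing $\eta$ against vectors of $\Im(\gamma_i')$ for $i\ne j$ forces $c_i=0$; hence $\eta$ is a multiple of $\gamma_j'$ and $\lambda_1^2$ is a simple eigenvalue.

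Combining these steps, $A_j$ admits a Hermitian Schmidt decomposition with strictly dominant coefficient $\lambda_1$, first vector $\gamma_j'$ and partner $\delta_j'$, so that all $\gamma_i''$ have image in $\Im(\gamma_j')=\Im(\gamma_1'')$ and all $\delta_i''$ have image in $\Im(\delta_j')=\Im(\delta_1'')$; the three conditions of Theorem \ref{conditionsirreducible} are met and $A_j$ is weak irreducible. The SPC statement is the special case $\gamma_i=\delta_i$, $V_i=W_i$, $\gamma_i'=\delta_i'$, where the very same argument applies verbatim.
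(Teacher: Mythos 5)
Your proposal is correct, and it follows the same overall skeleton as the paper: compute the block $A_j=\lambda_1\gamma_j'\otimes\delta_j'+R_j$ explicitly, use Theorem \ref{theoremprincipal} to confine $supp_1(A_j)$ and $supp_2(A_j)$ inside $\Im(V_j)$ and $\Im(W_j)$, and then verify the three conditions of Theorem \ref{conditionsirreducible}. The genuine difference lies in how the crucial condition $(1)$ (strict dominance of $\lambda_1$) is established. The paper argues globally: it takes a Hermitian Schmidt decomposition $\sum_p\epsilon_p\alpha_p\otimes\beta_p$ of the tail $R_j$, checks it is orthogonal to $\gamma_j'\otimes\delta_j'$ so that appending $\lambda_1\gamma_j'\otimes\delta_j'$ gives a Hermitian Schmidt decomposition of $A_j$, and then rules out $\epsilon_1=\lambda_1$ by summing the decompositions of \emph{all} blocks into one for $A$ and counting: the multiset of Schmidt coefficients of $A$ contains $\lambda_1$ exactly $s$ times, so no block can contribute an extra copy. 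You instead argue locally and spectrally: the compression identity $G_{A_j}(X)=W_jG_A(V_jXV_j)W_j$ bounds every Schmidt coefficient of $A_j$ by $\lambda_1$, and the equality-case analysis (equality in $\|G_A(\eta)\|\le\lambda_1\|\eta\|$ forces $\eta$ into the top eigenspace $supp_1(D)=\mathrm{span}\{\gamma_1',\dots,\gamma_s'\}$, after which orthogonality of the images $\Im(\gamma_i')$ forces $\eta=\pm\gamma_j'$) shows the top eigenvalue $\lambda_1^2$ of $F_{A_j}\circ G_{A_j}$ is simple. What each approach buys: the paper's counting argument stays entirely within its decomposition machinery and treats all blocks at once, but leans implicitly on the invariance of the multiset of Schmidt coefficients; yours needs no decomposition of the tail at all, works one block at a time, and yields the simplicity of the top coefficient directly together with a quantitative norm bound on $G_{A_j}$ — at the cost of invoking standard singular-value facts (operator norm of $G_A$, non-expansiveness of compression by projections) that the paper never states explicitly but which are easily justified within its framework.
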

\begin{proof} Let us prove this theorem only for PPT matrices. The proof for SPC matrices is similar.

Using the same notation as in theorem \ref{splitdecompositionPPT}, $A$ has the following Hermitian Schmidt decomposition:  $A=\lambda_1(\sum_{i=1}^s\gamma_i'\otimes\delta_i')+\sum_{i=s+1}^n\lambda_i\gamma_i\otimes\delta_i$
such that $\Im(\gamma_i')\perp \Im(\gamma_j')$ and $\Im(\delta_i')\perp\Im(\delta_j')$.

Since $V_j$ and $W_j$ are the Hermitian projections onto $\Im(\gamma_j')$ and $\Im(\delta_j')$ respectively then $V_j\gamma_j'V_j=\gamma_j'$, $V_j\gamma_i'V_j=0$, $W_j\delta_j'W_j=\delta_j'$ and $W_j\delta_i'W_j=0$. Thus, $(V_j\otimes W_j)A(V_j\otimes W_j)=\lambda_1\gamma_j'\otimes\delta_j'+\sum_{i=s+1}^n\lambda_i V_j\gamma_iV_j\otimes W_j\delta_iW_j$. 

Notice that $tr(\gamma_j' V_j\gamma_i V_j)=tr(V_j\gamma_j' V_j\gamma_i)=tr(\gamma_j' \gamma_i)=0$ and $tr(\delta_j' W_j\delta_i W_j)=tr(W_j\delta_j' W_j\delta_i)=tr(\delta_j' \delta_i)=0$ for $s+1\leq i\leq n$.

Let $\sum_{p=1}^q\epsilon_p \alpha_p\otimes\beta_p$ be an Hermitian Schmidt decomposition for $B=\sum_{i=s+1}^n\lambda_i V_j\gamma_iV_j\otimes W_j\delta_iW_j$ such that $\epsilon_1\geq\ldots\geq\epsilon_q>0$. By theorem \ref{theoremprincipal}, $supp_1(B)\subset\text{span}\{V_j\gamma_iV_j, s+1\leq i\leq n\}$ and $supp_2(A)\subset\text{span}\{W_j\delta_iW_j, s+1\leq j\leq n\}$. Then $supp_1(B)\perp\{\gamma_j'\}$ and $supp_2(B)\perp\{\delta_j'\}$.

Thus, $\lambda_1\gamma_j'\otimes\delta_j'+\sum_{p=1}^q\epsilon_p \alpha_p\otimes\beta_p$ is a Hermitian Schmidt decomposition for $(V_j\otimes W_j)A(V_j\otimes W_j)$.

Next, since $supp_t((V_j\otimes W_j)A(V_j\otimes W_j)) \perp supp_t((V_i\otimes W_i)A(V_i\otimes W_i))$ for $t=1,2$ and $1\leq i\neq j\leq s+1$, then the sum of the Hermitian Schmidt decompositions of each $(V_j\otimes W_j)A(V_j\otimes W_j)$
is a Hermitian Schmidt decomposition for $A$.

Recall that in the Hermitian Schmidt decomposition of $A$, we have  $\lambda_1=\ldots=\lambda_s>\lambda_{s+1}\geq\ldots\geq\lambda_n$. 
Then $\epsilon_1\leq\lambda_1$. 

Next, each $(V_i\otimes W_i)A(V_i\otimes W_i)$ has a Hermitian Schmidt decomposition with the term $\lambda_1\gamma_i'\otimes\delta_i'$, for $1\leq i\leq s$.
Now if we have $\epsilon_1=\lambda_1$ then $A$ would have a Hermitian Schmidt decomposition with more than $s$ numbers $\lambda_1$, but this is impossible.  Therefore $\lambda_1>\epsilon_1\geq\ldots\geq\epsilon_q>0$.

Finally, as we saw above $supp_1(B)\subset\text{span}\{V_j\gamma_iV_j, s+1\leq i\leq n\}$ and $supp_2(B)\subset\text{span}\{W_j\delta_iW_j, s+1\leq j\leq n\}$. Since $\alpha_p\in supp_1(B)$ and $\beta_p\in supp_2(B)$ ($1\leq p\leq q$)then $\Im(\alpha_p)\subset\Im(V_j)=\Im(\gamma_j')$ and $\Im(\beta_p)\subset\Im(W_j)=\Im(\delta_j')$.

Thus, by theorem \ref{conditionsirreducible},  $(V_j\otimes W_j)A(V_j\otimes W_j)$ is weak irreducible.
\end{proof}

\begin{corollary}\label{theoremsumofweakirreducible} Every SPC or PPT matrix $A$ is weak irreducible or a sum of weak irreducible matrices of the same type.
\end{corollary}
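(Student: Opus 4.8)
The plan is to induct on the tensor rank $n$ of $A$ (the number of terms in a Hermitian Schmidt decomposition), using the split decomposition as the inductive step. I would treat the SPC and PPT cases at once, reading $W_i=V_i$ and $\delta_i=\gamma_i$ in the SPC case. First I would invoke theorem \ref{splitdecompositionSPC} (resp.\ theorem \ref{splitdecompositionPPT}) to write
\[
A=\sum_{i=1}^{s+1}(V_i\otimes W_i)A(V_i\otimes W_i)=\sum_{i=1}^{s+1}A_i,
\]
where $s$ is the multiplicity of the largest Schmidt coefficient $\lambda_1$. By item (b) of those theorems each summand $A_i$ is again of the same type (SPC or PPT), and by theorem \ref{irreducibleparts} the summands $A_1,\dots,A_s$ are weak irreducible. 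Hence only the leftover term $A_{s+1}$ needs further analysis.

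The crucial point is that $A_{s+1}$ has strictly smaller tensor rank than $A$. Since $V_iV_j=W_iW_j=0$ for $i\neq j$, the supports $supp_t(A_i)$ ($t=1,2$) are mutually orthogonal, so concatenating the Hermitian Schmidt decompositions of the $A_i$ yields a Hermitian Schmidt decomposition of $A$ — precisely the observation already used in the proof of theorem \ref{irreducibleparts}. As this concatenation is minimal, the tensor rank of $A$ equals the sum of the tensor ranks of the $A_i$. Each of $A_1,\dots,A_s$ is nonzero, because its Hermitian Schmidt decomposition contains the term $\lambda_1\gamma_i'\otimes\delta_i'$ with $\gamma_i',\delta_i'$ orthonormal (hence nonzero), so each has tensor rank at least $1$; as $s\geq1$ this forces the tensor rank of $A_{s+1}$ to be at most $n-s<n$.

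With this the induction closes. In the base case $n=1$ the three conditions of theorem \ref{conditionsirreducible} are vacuous, so $A$ is weak irreducible. For $n>1$: if $A_{s+1}=0$ then $A=A_1+\dots+A_s$ is a sum of weak irreducible matrices of the same type (and $A$ is itself weak irreducible exactly when $s=1$, consistent with remark \ref{remarkconditions}); if $A_{s+1}\neq0$ then, being of the same type and of tensor rank $<n$, the induction hypothesis makes it weak irreducible or a sum of weak irreducible matrices of the same type, and adjoining $A_1,\dots,A_s$ displays $A$ as such a sum. The only delicate step is the additivity of tensor rank under the split — that the orthogonality of the supports genuinely makes the concatenated Schmidt decomposition minimal — but this is immediate from the orthonormality already established for the individual pieces, so I expect no real obstacle beyond careful bookkeeping.
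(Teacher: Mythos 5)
Your proposal is correct and follows essentially the same route as the paper: both arguments combine the split decomposition theorems with theorem \ref{irreducibleparts} to dispose of the first $s$ summands, use orthogonality of the supports (from $V_iV_j=W_iW_j=0$) to get additivity of tensor rank across the split, and then induct on the tensor rank to handle the remaining term $(V_{s+1}\otimes W_{s+1})A(V_{s+1}\otimes W_{s+1})$, with tensor rank $1$ as the trivially weak irreducible base case. Your write-up is if anything slightly more explicit than the paper's (e.g., in justifying that $A_1,\dots,A_s\neq 0$ and in treating the case $A_{s+1}=0$ separately), but the ideas coincide.
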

\begin{proof}
Consider a PPT matrix $A$.
By the last theorem we only need to deal with  $(V_{s+1}\otimes W_{s+1})A(V_{s+1}\otimes W_{s+1})$ in the split decomposition of a PPT matrix $A$ $($equation \ref{eqsplit}$)$.

If $A$ has tensor rank 1 then $A$ satisfies trivially the three conditions of theorem \ref{conditionsirreducible}, therefore $A$ is weak irreducible.

Suppose $A$ has tensor rank bigger than 1 and consider its split decomposition $($equation \eqref{eqsplit}$)$.

Notice that $supp_t((V_{j}\otimes W_{j})A(V_{j}\otimes W_{j}))\perp supp_t((V_{i}\otimes W_{i})A(V_{i}\otimes W_{i}))$ for $t=1,2$ and for every $1\leq i\neq j\leq s+1$ since $V_jV_i=0$ and $W_jW_i=0$. Thus, the tensor rank of $A$ is the sum of the tensor rank of each of these $(V_{i}\otimes W_{i})A(V_{i}\otimes W_{i})$,$1\leq i\leq s+1$. Therefore  $(V_{s+1}\otimes W_{s+1})A(V_{s+1}\otimes W_{s+1})$ has tensor rank smaller than $A$, because  $(V_{1}\otimes W_{1})A(V_{1}\otimes W_{1})\neq 0$. Recall that $(V_{s+1}\otimes W_{s+1})A(V_{s+1}\otimes W_{s+1})$ is also PPT as explained in theorem \ref{splitdecompositionPPT}. Thus, by induction on the tensor rank, the matrix $(V_{s+1}\otimes W_{s+1})A(V_{s+1}\otimes W_{s+1})$ is weak irreducible or a sum of weak irreducible PPT matrices.

The proof is similar for SPC matrices. 
\end{proof}

Now in the final result of this section we prove that if a matrix  has full tensor rank or full rank then the matrix is weak irreducible.

\begin{lemma}\label{lemmainequality} Let $k_i,k,m_i,m\in\mathbb{N}$, for $1\leq i\leq 2$, such that $k_1+k_2=k$ and $m_1+m_2=m$  then 
\begin{enumerate}
\item $k_1^2+k_2^2\leq (k-1)^2+1$
\item $k_1m_1+k_2m_2\leq \min\{(k-1)m,(m-1)k\}.$

\end{enumerate}

\end{lemma}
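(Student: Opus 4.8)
The plan is to reduce each inequality to a manifestly nonnegative expression by substituting the constraints $k_2=k-k_1$, $m_2=m-m_1$ and then factoring. The one structural point to fix first is that, in the role these numbers play (dimensions of the nonzero blocks $\Im(V_i)$, $\Im(W_i)$ of a split decomposition), the natural numbers are taken to start at $1$, so $k_1,k_2,m_1,m_2\geq 1$. This is exactly the hypothesis that makes the factored forms below nonnegative; indeed, allowing $k_1=0$ would already break part (1) for $k\geq 2$.

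For part (1), I would simply compute the difference $(k-1)^2+1-(k_1^2+k_2^2)$ using $k=k_1+k_2$. Expanding $(k-1)^2=(k_1+k_2-1)^2$ and subtracting $k_1^2+k_2^2$ leaves $2\bigl(1+k_1k_2-k_1-k_2\bigr)=2(k_1-1)(k_2-1)$, which is $\geq 0$ since both factors are nonnegative. Hence $k_1^2+k_2^2\leq (k-1)^2+1$.

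For part (2), I would handle the two candidates in the minimum separately. Writing $(k-1)m=(k_1+k_2-1)(m_1+m_2)$, expanding, and subtracting $k_1m_1+k_2m_2$ collapses to $(k_1-1)m_2+(k_2-1)m_1\geq 0$, so $k_1m_1+k_2m_2\leq (k-1)m$. By the symmetric computation with the roles of the two factors interchanged, $(m-1)k-(k_1m_1+k_2m_2)=(m_1-1)k_2+(m_2-1)k_1\geq 0$, so $k_1m_1+k_2m_2\leq (m-1)k$. Since $k_1m_1+k_2m_2$ is bounded above by both quantities, it is bounded by their minimum, giving inequality (2).

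I do not anticipate any genuine obstacle; these are elementary factoring identities once the constraints are substituted. The only thing worth stating explicitly is the convention $k_i,m_i\geq 1$, since the nonnegativity of the factors $(k_i-1)$ and $(m_i-1)$ is the entire content of the proof.
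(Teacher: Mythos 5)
Your proof is correct, and for part (1) it takes a genuinely different (and arguably cleaner) route than the paper. The paper substitutes $k_1=k-k_2$ and treats $k_1^2+k_2^2$ as the convex quadratic $p(k_2)=2k_2^2-2kk_2+k^2$, then argues that over the integer range $1\leq k_2\leq k-1$ the maximum is attained at an endpoint, giving $p(1)=p(k-1)=(k-1)^2+1$; your factoring identity $(k-1)^2+1-(k_1^2+k_2^2)=2(k_1-1)(k_2-1)\geq 0$ reaches the same conclusion without any extremal reasoning, and makes the role of the hypothesis $k_i\geq 1$ completely transparent. For part (2) the two arguments are essentially the same inequality written in different clothing: the paper uses the complement identity $k_1m_1+k_2m_2=km-(k_1m_2+k_2m_1)$ together with the bound $k_1m_2+k_2m_1\geq\max\{k,m\}$, and unwinding that bound gives exactly your factored forms $(k_1-1)m_2+(k_2-1)m_1\geq 0$ and $(m_1-1)k_2+(m_2-1)k_1\geq 0$. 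Your explicit remark that the convention $k_i,m_i\geq 1$ is the entire content of the proof matches the paper's opening observation ($1\leq k_i\leq k-1$), and your observation that $k_1=0$ would break part (1) correctly identifies why the natural-number hypothesis cannot be dropped; in the intended application these are ranks of nonzero projections, so the hypothesis is satisfied.
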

\begin{proof}
Since $k_1+k_2=k$ and $k_1,k_2\in\mathbb{N}$, we have $1\leq k_i\leq k-1$.

Notice that  $k_1^2+k_2^2=(k-k_2)^2+k_2^2=2k_2^2-2kk_2+k^2$.

Consider $p(x)=2x^2-2kx+k^2$, the line of simetry is $x=\frac{k}{2}$. If $x\in\mathbb{N}$ and $1 \leq x\leq k-1$ then the maximum of $p(x)$ occurs in $x=1$ or $k-1$.

Therefore $k_1^2+k_ 2^2\leq p(k-1)=(k-1)^2+1. $

Finally, since $k_i\geq 1$ and $m_i\geq 1$ then $k_1m_2+k_2m_1\geq\max\{k,m\}$.
Then $k_1m_1+k_2m_2=km-(k_1m_2+k_2m_1)\leq\min\{km-k,km-m\}$.

\end{proof}

\begin{theorem}\label{corollarylemmainequality} Let $A\in M_k\otimes M_m$ be a positive semidefinite Hermitian matrix which is not weak irreducible then
\begin{enumerate}
\item tensor rank $(A)\leq\min\{(k-1)^2+1,(m-1)^2+1\}$
\item rank $(A)\leq\min\{(k-1)m,(m-1)k\}.$
\end{enumerate}
\end{theorem}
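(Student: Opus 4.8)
The plan is to use the failure of weak irreducibility verbatim. If $A$ is not weak irreducible, then by definition \ref{definitionweakirreducible} there exist Hermitian projections $V_1,V_2\in M_k$ and $W_1,W_2\in M_m$ with $V_1V_2=0$, $W_1W_2=0$, $V_1+V_2=Id$, $W_1+W_2=Id$, such that $A=A_1+A_2$ where $A_i=(V_i\otimes W_i)A(V_i\otimes W_i)$ and \emph{both} $A_1\neq0$ and $A_2\neq0$. First I would record the dimensional data: put $k_i=\dim\Im(V_i)$ and $m_i=\dim\Im(W_i)$, so that $k_1+k_2=k$ and $m_1+m_2=m$. Since $A_i\neq0$ forces $V_i\neq0$ and $W_i\neq0$, we get $k_i\geq1$ and $m_i\geq1$, which is precisely the hypothesis of lemma \ref{lemmainequality}. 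Thus the entire statement reduces to showing $\mbox{rank}(A)\leq k_1m_1+k_2m_2$ and $\mbox{tensor rank}(A)\leq k_1^2+k_2^2$ (and, symmetrically, $\leq m_1^2+m_2^2$), after which lemma \ref{lemmainequality} finishes both items.

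For the rank bound I would argue that $\Im(A_i)\subset\Im(V_i\otimes W_i)=\Im(V_i)\otimes\Im(W_i)$, a subspace of dimension $k_im_i$, and that the two subspaces $\Im(V_1)\otimes\Im(W_1)$ and $\Im(V_2)\otimes\Im(W_2)$ are orthogonal because $\Im(V_1)\perp\Im(V_2)$. Since $A=A_1+A_2$, every vector $Av=A_1v+A_2v$ lies in the orthogonal direct sum $\Im(A_1)\oplus\Im(A_2)$, whence $\mbox{rank}(A)\leq\mbox{rank}(A_1)+\mbox{rank}(A_2)\leq k_1m_1+k_2m_2$. Part 2 of lemma \ref{lemmainequality} then gives $\mbox{rank}(A)\leq\min\{(k-1)m,(m-1)k\}$.

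For the tensor rank I would first note that $(V_i\otimes W_i)$ is a projection, so $A_i=(V_i\otimes W_i)A_i(V_i\otimes W_i)$; applying theorem \ref{theoremprincipal} to any minimal Hermitian decomposition of $A_i$ — exactly as in the first case of theorem \ref{theomiracle1} — shows $supp_1(A_i)$ is contained in the space of Hermitian matrices of the form $V_iXV_i$, i.e. Hermitian operators supported on $\Im(V_i)$. That real space has dimension $k_i^2$, so $\dim_{\mathbb{R}}supp_1(A_i)\leq k_i^2$, and symmetrically $\dim_{\mathbb{R}}supp_2(A_i)\leq m_i^2$. Because a minimal Hermitian decomposition of $A_i$ has linearly independent legs in both tensor factors (real independence of Hermitian matrices upgrades to complex independence: if $\sum_j c_jA_j=0$, taking adjoints forces $\Re(c_j)=\Im(c_j)=0$), corollary \ref{corollarytensorrank} identifies the tensor rank of $A_i$ with $\dim_{\mathbb{R}}supp_1(A_i)\leq k_i^2$, and also with $\dim_{\mathbb{R}}supp_2(A_i)\leq m_i^2$. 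Since tensor rank is subadditive, concatenating decompositions of $A_1$ and $A_2$ gives $\mbox{tensor rank}(A)\leq k_1^2+k_2^2$ and $\leq m_1^2+m_2^2$, and part 1 of lemma \ref{lemmainequality} yields $\mbox{tensor rank}(A)\leq\min\{(k-1)^2+1,(m-1)^2+1\}$.

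The main obstacle I anticipate is the tensor-rank step rather than the rank step: one must justify carefully that the tensor rank of each $A_i$ equals the real dimension of its support and is therefore capped by $k_i^2$ and $m_i^2$. This hinges on two points — the identification (via corollary \ref{corollarytensorrank}) of tensor rank with the length of a minimal Hermitian decomposition, and the passage from real to complex linear independence of Hermitian matrices. Once these are secured, both inequalities follow immediately from lemma \ref{lemmainequality}.
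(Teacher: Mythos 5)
Your proof is correct and follows essentially the same route as the paper's: split $A=A_1+A_2$ using the failure of weak irreducibility, bound tensor rank$(A_i)\leq\min\{k_i^2,m_i^2\}$ and rank$(A_i)\leq k_im_i$, and finish with lemma \ref{lemmainequality}. The only differences are cosmetic: the paper invokes the embedding of $A_i$ into $M_{k_i}\otimes M_{m_i}$ and the additivity of tensor rank across the split, whereas you spell out the support-dimension argument via theorem \ref{theoremprincipal} and corollary \ref{corollarytensorrank} and use mere subadditivity, which indeed suffices.
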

\begin{proof}
Let $A=A_1+A_2$ where $A_i=(V_i\otimes W_i) A (V_i\otimes W_i)$ and $V_i,W_i$ are described in definition \ref{definitionweakirreducible} and $A_1,A_2\neq 0$. 

Let $k_i$ and $m_i$ be the rank of $V_i$ and $W_i$, respectively, for $1\leq i\leq 2$, and recall that $k_1+k_2=k$ and $m_1+m_2=m$.
Notice that $A_i$ can be embedded in $M_{k_i}\otimes M_{m_i}$ then the tensor rank of $A_i$ is smaller or equal to $\min\{k_i^2,m_i^2\}$ and $rank(A_i)\leq k_im_i$.

As pointed out in the proof of  theorem \ref{theoremsumofweakirreducible}, we have $tensor\ rank(A)= tensor\ rank(A_1)+tensor\ rank(A_2)$.

Thus, $tensor\ rank(A)\leq \sum_{i=1}^2\min\{k_i^2,m_i^2\}\leq \min\{\sum_{i=1}^2k_i^2,\sum_{i=1}^2m_i^2\}$.

Since $\Im(V_i\otimes W_i)\perp\Im(V_j\otimes W_j)$, we obtain $rank(A)= rank(A_1)+rank(A_2)\leq\sum_{i=1}^2k_im_i.$

By lemma \ref{lemmainequality}, the result follows.

\end{proof}

\begin{corollary} Let $A\in M_k\otimes M_m$ be a positive semidefinite Hermitan matrix. If $A$ has full tensor rank or full rank then $A$ is weak irreducible.
\end{corollary}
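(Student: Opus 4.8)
The plan is to prove the contrapositive and read the conclusion straight off theorem \ref{corollarylemmainequality}, which asserts that a positive semidefinite Hermitian matrix that fails to be weak irreducible has tensor rank at most $\min\{(k-1)^2+1,(m-1)^2+1\}$ and rank at most $\min\{(k-1)m,(m-1)k\}$. It therefore suffices to verify that a matrix of full tensor rank, or of full rank, strictly exceeds the relevant bound, so that being non weak irreducible becomes impossible.

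First I would fix the meaning of \emph{full} in each case. A Hermitian Schmidt (equivalently minimal Hermitian) decomposition $\sum_{i=1}^n\gamma_i\otimes\delta_i$ of $A$ has orthonormal, hence linearly independent, sets $\{\gamma_i\}\subset M_k$ and $\{\delta_i\}\subset M_m$, so its length $n$ --- which is the tensor rank by corollary \ref{corollarytensorrank} --- satisfies $n\leq\min\{k^2,m^2\}$. Thus full tensor rank means $n=\min\{k^2,m^2\}$, while full rank means $A$ is positive definite, i.e. its rank equals $km$.

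For the tensor rank case, taking $k,m\geq 2$ and assuming without loss of generality $k\leq m$, a one-line computation gives $k^2-\bigl((k-1)^2+1\bigr)=2(k-1)>0$, hence
$$\min\{k^2,m^2\}-\min\{(k-1)^2+1,(m-1)^2+1\}=2\bigl(\min\{k,m\}-1\bigr)>0,$$
contradicting item 1 of theorem \ref{corollarylemmainequality}. For the rank case, $\min\{(k-1)m,(m-1)k\}=km-\max\{k,m\}<km$, so full rank contradicts item 2. Either way $A$ must be weak irreducible.

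The only real subtlety is the degenerate case $k=1$ or $m=1$, which I would dispatch separately: on a one-dimensional factor the only Hermitian projections with $V_1V_2=0$ and $V_1+V_2=Id$ are $0$ and $Id$, so one of the blocks $(V_i\otimes W_i)A(V_i\otimes W_i)$ vanishes automatically and every such matrix is weak irreducible regardless of its rank. Beyond bookkeeping these boundary cases and confirming the strictness of the two inequalities, I expect no genuine obstacle.
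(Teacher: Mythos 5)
Your proof is correct and takes essentially the same route as the paper: the paper states this corollary without any written proof, as an immediate contrapositive consequence of theorem \ref{corollarylemmainequality}, which is precisely what you carry out. Your added bookkeeping --- checking $k^2-\bigl((k-1)^2+1\bigr)=2(k-1)>0$, $\min\{(k-1)m,(m-1)k\}=km-\max\{k,m\}<km$, and dispatching the degenerate case $k=1$ or $m=1$ where every matrix is trivially weak irreducible --- simply fills in the details the paper leaves implicit.
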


\section{The Separability Problem}

The separability problem, in finite dimension, is the problem of distinguishing the separable matrices among the positive semidefinite Hermitian matrices or, equivalently, it is the problem of discovering whether a given matrix is separable or not. It is a central problem in the field of Quantum Information Theory. 

We start this section defining the separable matrices and for the sake of completeness we proved that it is sufficient to distinguish the separable matrices among the weak irreducible matrices in order to solve completely the separability problem.
Actually, we proved that to solve the separability problem for SPC or PPT matrices, we must do it only for weak irreducible matrices of the same type. Just notice that a necessary condition for separability is to be PPT (see proposition \ref{reductionfinal} and corollary \ref{corollaryreductionfinal}). The authors of \cite{chen1} have already noticed that the reduction to their irreducible matrices was possible (see \cite{chen1}, corollary 16). 

The main result of this section is that every positive semidefinite Hermitian matrix in $M_k\otimes M_m$ with tensor rank 2 has a minimal separable decomposition in $M_k\otimes M_m$ $($Theorem \ref{theoreminimalseparabilitytensorrank2}$)$. Therefore is separable.
The separable decomposition of a tensor rank 2 matrix might be a known fact since it is simpler to be proved, but we believe that the minimality of the separable decomposition might be new, since the only proof we know is based on the concept of weak irreducible matrix and this concept was defined based on the concept of irreducible states, which was only defined recently in \cite{chen2}.
 
Notice that the minimality is something very important to extend this theorem to $M_{k_1}\otimes\ldots \otimes M_{k_n}$ $($Corollary \ref{extension}$)$.

We will need theorem \ref{theoreminimalseparabilitytensorrank2} in our final section. There we find sharp inequalities that provide separability for weak irreducible SPC/PPT matrices.

\begin{definition}\label{definitionseparability}\textbf{$($Separable Matrices$)$} Let $A\in M_k\otimes M_m$. 
\begin{enumerate}
\item A separable decomposition  of a Hermitian matrix A is a decomposition $A=\sum_{i=1}^n C_i\otimes D_i$ such that   $C_i\in M_k$ and $D_i\in M_m$ are positive semi-definite Hermitian matrices for every $i$. 
\item $A$ is \textbf{separable} if $A$ has a separable decomposition.
\end{enumerate}
\end{definition}

\begin{definition}\textbf{(The Separability Problem)}: Which matrices in $M_k\otimes M_m$ are separable?  Provide a criterion to determine if $A\in M_k\otimes M_m$ is separable or not.
\end{definition}

\subsection{The Reduction to the Weak Irreducible case}
\indent\\

\begin{proposition}\label{reductionfinal} To distinguish the separable matrices among the SPC/PPT matrices, we only need to do it among the matrices which are also weak irreducible. 
\end{proposition}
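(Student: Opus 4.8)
The plan is to show that an SPC/PPT matrix $A$ is separable if and only if every weak irreducible piece appearing in its (iterated) split decomposition is separable. Since corollary \ref{theoremsumofweakirreducible} guarantees that such pieces exist and are SPC/PPT matrices of the same type as $A$, a criterion deciding separability for weak irreducible SPC/PPT matrices would then decide separability for every SPC/PPT matrix. One implication of the equivalence is free: if each weak irreducible piece is separable, then $A$, being a finite sum of separable matrices, is separable (simply concatenate the separable decompositions). So the entire content lies in the reverse implication, namely that separability of $A$ forces separability of each of its pieces.

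The key observation, which I would isolate as a separability-preservation lemma, is that conjugation by a product projection preserves separability. Concretely, if $V\in M_k$ and $W\in M_m$ are Hermitian projections and $A=\sum_i c_i\otimes d_i$ is a separable decomposition (with $c_i,d_i$ positive semidefinite), then
$$(V\otimes W)\,A\,(V\otimes W)=\sum_i (Vc_iV)\otimes(Wd_iW),$$
using $(V\otimes W)(c_i\otimes d_i)(V\otimes W)=(Vc_iV)\otimes(Wd_iW)$. Because $Vc_iV$ and $Wd_iW$ are congruences of positive semidefinite matrices by the Hermitian matrices $V$ and $W$, they are again positive semidefinite, so the right-hand side is a separable decomposition. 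Each summand of the split decomposition in theorems \ref{splitdecompositionSPC} and \ref{splitdecompositionPPT} has exactly the form $(V_i\otimes W_i)A(V_i\otimes W_i)$ with $V_i,W_i$ Hermitian projections (with $W_i=V_i$ in the SPC case), so the lemma applies to each of them directly.

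With this in hand I would run the argument by induction on the tensor rank of $A$, mirroring the proof of corollary \ref{theoremsumofweakirreducible}. If $A$ has tensor rank $1$, or is already weak irreducible, there is nothing to prove. Otherwise write the split decomposition $A=\sum_{i=1}^{s+1}(V_i\otimes W_i)A(V_i\otimes W_i)$; by theorem \ref{irreducibleparts} the summands with $1\le i\le s$ are weak irreducible SPC/PPT matrices, while $(V_{s+1}\otimes W_{s+1})A(V_{s+1}\otimes W_{s+1})$ is an SPC/PPT matrix of strictly smaller tensor rank (the tensor ranks add because the supports are mutually orthogonal and the first summand is nonzero), to which the inductive hypothesis applies. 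For the forward implication, if $A$ is separable then by the preservation lemma each $(V_i\otimes W_i)A(V_i\otimes W_i)$ is separable; in particular the last piece is separable, so by induction all of its weak irreducible descendants are separable, whence every weak irreducible piece of $A$ is separable. The reverse implication is the free direction, so the equivalence holds, giving precisely the claimed reduction.

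I expect the only delicate point to be bookkeeping rather than mathematics: one must ensure that the weak irreducible matrices produced by iterating the split decomposition are genuinely obtained by conjugating $A$ (or a descendant) by a single product projection, so that the preservation lemma can be invoked at every level. This is handled automatically by the induction — at each step we conjugate by just one factor $V_i\otimes W_i$ — and the strict drop in tensor rank guarantees termination, so no separate argument for collapsing composed projections is required. The genuinely new ingredient beyond the earlier sections is the short preservation lemma; everything else is assembled from corollary \ref{theoremsumofweakirreducible}, theorem \ref{irreducibleparts}, and the split decomposition theorems.
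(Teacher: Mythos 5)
Your proposal is correct and follows essentially the same route as the paper: split the matrix via the split decomposition theorems, note that the first $s$ pieces are weak irreducible, and induct on the tensor rank of the remaining piece. The only difference is that the paper dismisses the key step -- that separability of $A$ forces separability of each term $(V_i\otimes W_i)A(V_i\otimes W_i)$ -- as ``obvious,'' whereas you prove it explicitly via the conjugation identity $(V\otimes W)(c\otimes d)(V\otimes W)=(VcV)\otimes(WdW)$, which is a worthwhile clarification but not a different argument.
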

\begin{proof} 

Let $A$ be a PPT matrix. The proof for SPC matrices is similar.

If $A$ is weak irreducible then we must solve the separability problem for a weak irreducible PPT matrix. 

Suppose $A$ is not weak irreducible. By remark \ref{remarkconditions}, the split decomposition of $A$ has more then one term. We saw in theorem \ref{splitdecompositionPPT} that each term is also PPT, but these terms have smaller tensor rank because their supports are perpendicular. It is obvious that a matrix $A$ is separable if and only if each term in its split decomposition is separable (see theorem \ref{splitdecompositionPPT}). Then the result follows by induction on the tensor rank, just notice that matrices with tensor rank 1 satisfy trivially the three conditions to be weak irreducible.

\end{proof}

\begin{corollary}\label{corollaryreductionfinal}
A complete solution for the separability problem is obtained distinguishing the separable matrices among the weak irreducible matrices.
\end{corollary}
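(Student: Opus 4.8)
The plan is to assemble the pieces already in place, the decisive observation being that being PPT is a necessary condition for separability (see \cite{peres}), so that only PPT matrices can possibly be separable. Given an arbitrary positive semidefinite Hermitian matrix $A\in M_k\otimes M_m$, I would first test whether $A$ is PPT, that is, whether $A^{t_1}=(\cdot)^t\otimes Id(A)$ is positive semidefinite. If it is not, then $A$ cannot be separable and the matrix is immediately classified as entangled. Thus the only matrices whose separability remains in question are the PPT matrices, and it suffices to solve the separability problem on that set.

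For a PPT matrix $A$, I would invoke Proposition \ref{reductionfinal}, which already reduces the task of distinguishing separability among PPT matrices to the task of distinguishing it among those which are also weak irreducible. Concretely, if $A$ is not itself weak irreducible, then by Corollary \ref{theoremsumofweakirreducible} together with the split decomposition of Theorem \ref{splitdecompositionPPT}, the matrix $A$ is a sum of weak irreducible PPT matrices whose first and second supports are mutually orthogonal; since the summands are supported on orthogonal tensor blocks, $A$ is separable if and only if each summand is separable (as in the proof of Proposition \ref{reductionfinal}). Each such summand is a weak irreducible matrix, so a method deciding separability for weak irreducible matrices decides it for every PPT matrix, and hence, combined with the PPT test above, for every positive semidefinite Hermitian matrix.

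I do not expect a genuine technical obstacle here, since the corollary is essentially a bookkeeping consequence of Proposition \ref{reductionfinal} and the PPT criterion: the real content was already carried by the split decomposition theorems and by the fact that separability forces the PPT property. The only point deserving explicit care is that the weak irreducible matrices appearing in the statement may be taken to be weak irreducible PPT matrices without loss of generality, precisely because any non-PPT matrix is disposed of at once as entangled and never enters the inductive reduction on tensor rank.
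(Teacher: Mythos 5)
Your proposal is correct and follows the paper's own argument exactly: the PPT property is necessary for separability, so the problem reduces to PPT matrices, and Proposition \ref{reductionfinal} then reduces it further to weak irreducible (PPT) matrices. The extra recapitulation of the split decomposition machinery is harmless but not needed, since Proposition \ref{reductionfinal} already carries that content.
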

\begin{proof}
Notice that the PPT property is necessary for the separability of $A$. Therefore we need to distinguish the separable matrices among the PPT matrices. Use now the previous proposition. 
\end{proof}

\subsection{Minimal Separability of Tensor Rank 2 Matrices}
\indent\\

Now to obtain our sharp inequalities that provide separability in the next section, we need some theorems concerning the separability of positive semidefinite Hermitian matrices with tensor rank 2 (in $M_k\otimes M_m$). We provide these theorems as the final part of this section.

\begin{lemma}\label{lemmaseparable} Let $A\in M_{k}\otimes M_{ m}$ be a positive semidefinite Hermitian matrix with a minimal hermitan decomposition  $A=C_1\otimes D_1+C_2\otimes D_2$ such that $C_1,D_1$ are positive semidefinite and $\Im(C_2)\subset\Im(C_1), \Im(D_2)\subset\Im(D_1)$ then $A$ has a minimal separable decomposition. 
\end{lemma}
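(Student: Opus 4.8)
The plan is to simultaneously diagonalize the two Hermitian pencils $\{C_1,C_2\}$ and $\{D_1,D_2\}$, read off a scalar array of coefficients governing positivity, and then reassemble $A$ into exactly two separable terms by a careful choice of two scalars.

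First I would exploit the hypotheses $C_1\geq 0$, $\Im(C_2)\subset\Im(C_1)$, together with the fact that $C_2$ is Hermitian (the given decomposition is Hermitian), to diagonalize $C_1$ and $C_2$ at once. Restricting to $W=\Im(C_1)$, where $C_1$ is positive definite, the operator $C_1^{-1/2}C_2C_1^{-1/2}$ is Hermitian on $W$; diagonalizing it by a unitary and absorbing $C_1^{1/2}$ produces linearly independent vectors $u_1,\dots,u_r$ spanning $\Im(C_1)$ and reals $a_j$ with $C_1=\sum_j u_j\overline{u_j}^t$ and $C_2=\sum_j a_j u_j\overline{u_j}^t$. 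The identical argument applied to $D_1\geq 0$, $\Im(D_2)\subset\Im(D_1)$ gives $p_1,\dots,p_s$ spanning $\Im(D_1)$ and reals $b_l$ with $D_1=\sum_l p_l\overline{p_l}^t$, $D_2=\sum_l b_l p_l\overline{p_l}^t$. Substituting yields
$$A=\sum_{j,l}(1+a_jb_l)\,(u_j\otimes p_l)\overline{(u_j\otimes p_l)}^t.$$

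Next I would show every coefficient is nonnegative. Since $\{u_j\}$ and $\{p_l\}$ are linearly independent, I pick dual vectors $g_j\in\Im(C_1)$ and $h_l\in\Im(D_1)$ with $\overline{g_i}^t u_j=\delta_{ij}$ and $\overline{h_i}^t p_l=\delta_{il}$. Evaluating the quadratic form of $A$ on $g_j\otimes h_l$ collapses the double sum to the single surviving term, giving $\langle g_j\otimes h_l,\,A(g_j\otimes h_l)\rangle=1+a_jb_l$; positivity of $A$ then forces $1+a_jb_l\geq 0$ for all $j,l$.

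The main obstacle, and the heart of the argument, is that the display above carries up to $rs$ terms, whereas a \emph{minimal} separable decomposition of a tensor rank $2$ matrix must have exactly two terms by corollary \ref{corollarytensorrank}. To collapse it, I set $s_0=\max_l b_l$ and $t_0=\min_l b_l$; linear independence of $\{D_1,D_2\}$ prevents all $b_l$ from coinciding, so $s_0>t_0$. I then define $P_1=C_1+s_0C_2$, $P_2=C_1+t_0C_2$, and $Q_1=\frac{D_2-t_0D_1}{s_0-t_0}$, $Q_2=\frac{s_0D_1-D_2}{s_0-t_0}$, so that $A=P_1\otimes Q_1+P_2\otimes Q_2$ by matching the (independent) coefficients of $C_1$ and $C_2$. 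All four factors are positive semidefinite: since $s_0$ and $t_0$ are attained values $b_{l}$, each coefficient $1+a_js_0$ and $1+a_jt_0$ equals some $1+a_jb_l$ already shown nonnegative, whence $P_1=\sum_j(1+a_js_0)u_j\overline{u_j}^t$ and $P_2=\sum_j(1+a_jt_0)u_j\overline{u_j}^t$ are positive semidefinite, while $Q_1=\sum_l\frac{b_l-t_0}{s_0-t_0}p_l\overline{p_l}^t$ and $Q_2=\sum_l\frac{s_0-b_l}{s_0-t_0}p_l\overline{p_l}^t$ are positive semidefinite by the choice of $s_0,t_0$. Finally, since $\{C_1,C_2\}$ and $\{D_1,D_2\}$ are independent and $s_0\neq t_0$, both pairs $\{P_1,P_2\}$ and $\{Q_1,Q_2\}$ are linearly independent, so $A=P_1\otimes Q_1+P_2\otimes Q_2$ is the desired minimal separable decomposition.
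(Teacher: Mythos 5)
Your proof is correct, but it takes a genuinely different route from the paper's. The paper never diagonalizes anything: it applies lemma \ref{lemmaY} to choose $\lambda$ so that $C_1-\lambda C_2$ is positive semidefinite with a kernel vector $v\in\Im(C_1)$, rewrites $A=(C_1-\lambda C_2)\otimes D_1+C_2\otimes(D_2+\lambda D_1)$, and then uses the $*$-product machinery (corollary \ref{corollarypositive}, multiplying $A$ by $Id\otimes\overline{v}v^t$) to conclude that $D_2+\lambda D_1$ is positive semidefinite up to a positive scalar; a second, symmetric application of lemma \ref{lemmaY} and the $*$-product to the right-hand factors then yields the two separable terms. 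You instead simultaneously diagonalize the two Hermitian pencils $\{C_1,C_2\}$ and $\{D_1,D_2\}$ on their supports (legitimate since $C_1,D_1$ are positive definite there and the image inclusions confine $C_2,D_2$ to those supports), reduce positivity of $A$ to the explicit scalar constraints $1+a_jb_l\geq 0$ via dual bases, and recombine using the extreme values $s_0=\max_l b_l$, $t_0=\min_l b_l$. Your route is more elementary and self-contained: it bypasses both lemma \ref{lemmaY} and the $*$-product, and it makes the positivity structure of $A$ completely transparent as a scalar array, which also makes the minimality of the final two-term decomposition (via $s_0\neq t_0$) immediate. The paper's route is shorter given its existing toolbox and reuses the same two ingredients that drive the rest of the paper. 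Conceptually the two arguments share a core idea: push each pencil to an extremal eigenvalue where positivity becomes tight --- that is exactly what the kernel vector produced by lemma \ref{lemmaY} accomplishes in the paper, and what your choice of $\max_l b_l$ and $\min_l b_l$ accomplishes in your argument.
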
   
\begin{proof}  

Choose $\lambda\in\mathbb{R}$ such that $C_1-\lambda C_2$ is positive semidefinite, with $0\neq v\in\ker(C_1-\lambda C_2)\cap\Im(C_1)$ (lemma \ref{lemmaY}). Notice that $\lambda\neq 0$.

Therefore
$A=(C_1-\lambda C_2)\otimes D_1+ C_2\otimes(D_2+\lambda D_1)$.

Since  $tr((C_1-\lambda C_2)v\overline{v}^t)=0$, we obtain $Id\otimes\overline{v}v^t*A=Id\otimes tr(C_2v\overline{v}^t)(D_2+\lambda D_1)$.
Thus, by corollary \ref{corollarypositive}, $tr(C_2v\overline{v}^t)(D_2+\lambda D_1)$ is positive semidefinite. 

Notice  that $0\neq\frac{tr(C_1v\overline{v}^t)}{\lambda}=tr(C_2v\overline{v}^t)$, since $v\in\Im(C_1)$ and $C_1$ is positive semidefinite.
Now let
 
$\beta_1=D_1$,

$\beta_2=tr(C_2v\overline{v}^t)(D_2+\lambda D_1),$

 $\alpha_1= C_1-\lambda C_2$,
 
  $\alpha_2=\frac{C_2}{tr(C_2v\overline{v}^t)}$.\\

Notice that $\alpha_1,\beta_2$ are positive semidefinite and $\beta_1$ is positive semidefinite such that $\Im(\beta_2)\subset\Im(\beta_1)$ and 
 $A=\alpha_1\otimes\beta_1+\alpha_2\otimes\beta_2.$
 
Now find a real number $\epsilon$ such that $\beta_1-\epsilon\beta_2$ is  positive semidefinite and has $0\neq w\in\ker(\beta_1-\epsilon\beta_2)\cap\Im(\beta_1)$ (lemma \ref{lemmaY}). Notice that $\epsilon\neq 0$.

Therefore
$$A=\alpha_1\otimes(\beta_1-\epsilon\beta_2)+(\alpha_2+\epsilon\alpha_1)\otimes\beta_2.$$ 

Since $tr((\beta_1-\epsilon\beta_2)w\overline{w}^t)=0$ then $A*\overline{w}w^t\otimes Id=tr(\beta_2w\overline{w}^t)(\alpha_2+\epsilon\alpha_1)\otimes Id$.
Thus, by corollary \ref{corollarypositive}, $tr(\beta_2w\overline{w}^t)(\alpha_2+\epsilon\alpha_1)$ is positive semidefinite.
 
Note also  that $0\neq\frac{tr(\beta_1w\overline{w}^t)}{\epsilon}=tr(\beta_2w\overline{w}^t)$, since $\beta_1$ is positive semidefinite and $w\in\Im(\beta_1)$.

 Since $tr(\beta_2w\overline{w}^t)>0$, by the positive semidefiniteness of $\beta_2$, we obtain the following minimal separable decomposition for A:
 
$$A=\alpha_1\otimes(\beta_1-\epsilon\beta_2)+tr(\beta_2w\overline{w}^t)(\alpha_2+\epsilon\alpha_1)\otimes\frac{\beta_2}{tr(\beta_2w\overline{w}^t)}.$$ 
\end{proof}

\begin{theorem} \label{theoremtensorrank2PPT}
If $A\in M_k\otimes M_m$ is a positive semidefinite Hermitian matrix with tensor rank 2 $($see definition \ref{definitiontensorrank}$)$ then $A$ is PPT. 
\end{theorem}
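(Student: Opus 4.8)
The plan is to reduce $A$, through congruences that preserve both positive semidefiniteness and the PPT property, to the canonical form $Id\otimes Id+\gamma\otimes\delta$, and then to settle that case by a one‑line spectral argument. The starting point is Theorem \ref{SVD}: $A$ has a Hermitian Schmidt decomposition, which by Corollary \ref{corollarytensorrank} has exactly two terms, so both supports are two‑dimensional. The key mechanism is the identity $\bigl((C\otimes D)A(C\otimes D)^{*}\bigr)^{t_1}=(\overline{C}\otimes D)\,A^{t_1}\,(\overline{C}\otimes D)^{*}$, valid for all $C\in M_k$, $D\in M_m$ (just expand $A=\sum_i A_i\otimes B_i$ and use $(CA_iC^{*})^{t}=\overline{C}A_i^{t}C^{t}$). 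When $C,D$ are invertible this shows at once that $A\mapsto (C\otimes D)A(C\otimes D)^{*}$ preserves positive semidefiniteness and carries PPT matrices to PPT matrices and back; for unitaries it preserves both outright.

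First I would pass to the ranges of the two marginals $\rho_1=\mathrm{tr}_2(A)$ and $\rho_2=\mathrm{tr}_1(A)$, which are nonzero positive semidefinite matrices, and invoke the standard containment $\Im(A)\subset\Im(\rho_1)\otimes\Im(\rho_2)$: if $\rho_1 v=0$, then summing $\langle v\otimes w,\,A\,(v\otimes w)\rangle\ge 0$ over an orthonormal basis $w$ gives $\langle v,\rho_1 v\rangle=0$, forcing each term to vanish and hence $v\otimes\mathbb{C}^m\subset\ker(A)$, and symmetrically on the second factor. Conjugating by a suitable unitary $U\otimes V$ I may assume $\Im(\rho_1)$ and $\Im(\rho_2)$ are coordinate subspaces, so $A$ is supported on a coordinate corner $M_r\otimes M_s$; since coordinate projections are real, partial transposition commutes with restriction to this corner, and it suffices to treat the restricted matrix, whose marginals now have full rank. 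Congruence by $\rho_1^{-1/2}\otimes\rho_2^{-1/2}$ (invertible on the corner) then normalizes both marginals to the identity.

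After this normalization $\mathrm{tr}_2(A)=Id_r$ and $\mathrm{tr}_1(A)=Id_s$, so $Id_r\in supp_1(A)$. Choosing an orthonormal basis $\{Id_r/\sqrt{r},\,\gamma'\}$ of the two‑dimensional space $supp_1(A)$ and writing $A=\sum_i u_i\otimes G_A(u_i)$ exactly as in the proof of Theorem \ref{SVD}, the first summand is $\tfrac1r Id_r\otimes G_A(Id_r)=\tfrac1r Id_r\otimes \mathrm{tr}_1(A)=\tfrac1r Id_r\otimes Id_s$ and the second is a single product $\gamma'\otimes G_A(\gamma')$. Thus $A=\alpha\,Id\otimes Id+\gamma\otimes\delta$ with $\alpha>0$ (indeed $\alpha rs=\mathrm{tr}(A)>0$ because $\mathrm{tr}(\gamma')=0$), and dividing by $\alpha$ leaves $M=Id\otimes Id+\gamma\otimes\delta$. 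The spectrum of $\gamma\otimes\delta$ is $\{a_ib_j\}$, where $a_i,b_j$ are the real eigenvalues of $\gamma,\delta$; since $\gamma^{t}=\overline{\gamma}$ has the same spectrum as $\gamma$, the matrices $M$ and $M^{t_1}=Id\otimes Id+\gamma^{t}\otimes\delta$ share the spectrum $\{1+a_ib_j\}$, so $M\ge 0$ forces $M^{t_1}\ge 0$; unwinding the congruences yields $A^{t_1}\ge 0$, i.e. $A$ is PPT in the sense of Definition \ref{definitionPPT}.

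The spectral endgame is immediate, so the real content is the reduction. The step I would be most careful about is the compatibility of partial transposition with passing to the marginal supports — this is precisely why I rotate those supports to coordinate subspaces before restricting, rather than restricting blindly — together with checking that every congruence employed genuinely preserves the PPT property through the conjugation identity of the first paragraph. Once those bookkeeping points are secured, the uniform argument covers all cases (equal or distinct Schmidt coefficients alike) without any separate analysis.
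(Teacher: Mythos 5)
Your reduction machinery is fine up to a point: the conjugation identity $\bigl((C\otimes D)A(C\otimes D)^{*}\bigr)^{t_1}=(\overline{C}\otimes D)A^{t_1}(\overline{C}\otimes D)^{*}$ is correct, the kernel inclusion $\ker(\rho_1)\otimes\mathbb{C}^m\subset\ker(A)$ is correct, and the corner restriction commutes with partial transposition as you say. The fatal step is the claim that congruence by $\rho_1^{-1/2}\otimes\rho_2^{-1/2}$ normalizes both marginals to the identity. It does not: under a local congruence the first marginal transforms as $\mathrm{tr}_2\bigl((C\otimes D)A(C\otimes D)^{*}\bigr)=C\,\mathrm{tr}_2\bigl((Id\otimes D^{*}D)A\bigr)C^{*}$, so the filter acting on the second factor distorts the first marginal (only unitary $D$ leaves it alone); you cannot normalize both sides with one congruence. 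Concretely, take $A_0=e_1e_1^t\otimes Id+Id\otimes e_1e_1^t\in M_2\otimes M_2$: it is positive semidefinite of tensor rank $2$ with $\rho_1=\rho_2=\mathrm{diag}(3,1)$, and after your proposed congruence the first marginal becomes $\mathrm{diag}(5/9,1/3)$, not a multiple of $Id$. Worse, the gap is not repairable within your strategy: what you need is the filter normal form of Verstraete et al.\ (invertible local filters making both marginals proportional to $Id$), which is a nontrivial theorem not available in this paper, and which genuinely fails to exist for $A_0$. Indeed, any $(C\otimes D)A_0(C\otimes D)^{*}$ with $C,D$ invertible has one-dimensional kernel spanned by the product vector $C^{-*}e_2\otimes D^{-*}e_2$, whereas every two-qubit state with both marginals proportional to $Id$ is locally unitarily equivalent to a Bell-diagonal state, so its one-dimensional kernel is spanned by a maximally entangled vector. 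Hence no invertible local congruence brings $A_0$ to the form $\alpha\,Id\otimes Id+\gamma\otimes\delta$, and your spectral endgame never gets to run. An asymptotic normal form does not rescue this either, because PPT passes to limits in the wrong direction: you would learn that the limit is PPT, not that $A_0$ is.

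For contrast, the paper sidesteps normalization entirely: it produces a decomposition $A=C_1\otimes D_1+C_2\otimes D_2$ with $C_1,D_1$ positive semidefinite (via positivity of $*$-products with $Id\otimes Id$ and $Id\otimes C_1^t$), perturbs to $A(\epsilon)=(C_1+\epsilon Id)\otimes(D_1+\epsilon Id)+C_2\otimes D_2$, which satisfies the image-containment hypotheses of lemma \ref{lemmaseparable} and is therefore separable, hence PPT, and then lets $\epsilon\to 0^{+}$ using closedness of the positive semidefinite cone. Your example $A_0$ is handled by that argument without difficulty, which is exactly the flexibility your normal-form route lacks.
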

\begin{proof}

Let $A_1\otimes B_1+ A_2\otimes B_2$ be a minimal Hermitian decomposition of $A$. Since $A*Id\otimes Id=tr(B_1)A_1+tr(B_2)A_2\otimes Id$ is positive semidefinite, by lemma \ref{corollarypositive}, then $tr(B_1)A_1+tr(B_2)A_2$ is positive semidefinite. 

Now if $tr(B_1)A_1+tr(B_2)A_2=0$, since $A_1,A_2$ are linear independent  then $tr(B_1)=tr(B_2)=0$. Thus  $tr(A)=tr(A_1)tr(B_1)+tr(A_2)tr(B_2)=0$. Since $A$ is positive semidefinite it implies that $A=0$, which is a contradiction.
Thus, $C_1=\frac{tr(B_1)A_1+tr(B_2)A_2}{|tr(B_1)A_1+tr(B_2)A_2|}\neq 0$ is positive semidefinite.

Let $\{C_1,C_2\}$ be an orthonormal basis of $supp_1(A)$ containing $C_1$.
Then $A=C_1\otimes D_1+C_2\otimes D_2$. Since $Id\otimes C_1^t*A=Id\otimes D_1$ is positive semidefinite, by corolary \ref{corollarypositive}, then $D_1$ is positive semidefinite.

Thus, $A=C_1\otimes D_1+C_2\otimes D_2$ is a Hermitian Schmidt decomposition such that $C_1$ and $D_1$ are positive semidefinite. 

Let $\epsilon>0$ and define $A(\epsilon)=(C_1+\epsilon Id)\otimes (D_1+\epsilon Id)+C_2\otimes D_2$.
Notice that $A(\epsilon)$ has tensor rank at most two. If the tensor rank is one then $A(\epsilon)$ is separable, therefore PPT. If the tensor rank is two then $A(\epsilon)$ has  a minimal Hermitian decomposition with $\Im(C_2)\subset\Im(C_1+\epsilon Id) $ and $\Im(D_2)\subset\Im(D_1+\epsilon Id) $, thus, by lemma \ref{lemmaseparable}, $A(\epsilon)$ is also separable and PPT.

Thus, $(\cdot)^t\otimes Id(A)=\lim_{\epsilon\rightarrow 0+}(\cdot)^t\otimes Id(A(\epsilon))$ is positive semidefinite.

Therefore  $A$ is PPT. 
\end{proof}

\begin{theorem}\label{theoreminimalseparabilitytensorrank2} $($\textbf{Minimal Separability of Tensor Rank 2 Matrices}$)$
If $A\in M_k\otimes M_m$ is a positive semidefinite Hermitian matrix with tensor rank 2 then $A$ has a minimal separable decomposition. Therefore $A$ is separable.
\end{theorem}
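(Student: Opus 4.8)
The plan is to produce an explicit minimal separable decomposition by combining the reduction already used for Theorem \ref{theoremtensorrank2PPT} with Lemma \ref{lemmaseparable}, after a short case analysis driven by the positive semidefiniteness of $A$. First I would recall from the proof of Theorem \ref{theoremtensorrank2PPT} that $A$ admits a Hermitian Schmidt decomposition $A=C_1\otimes D_1+C_2\otimes D_2$ in which $C_1$ and $D_1$ are positive semidefinite and $\{C_1,C_2\}$, $\{D_1,D_2\}$ are linearly independent (this independence is exactly tensor rank $2$).

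Applying Lemma \ref{new3} with $j=1$ gives
$$[\Im(C_1)\otimes\ker(D_1)]\oplus[\ker(C_1)\otimes\Im(D_1)]\subset\ker(A).$$
Evaluating $A$ on the generators of the second summand, and using $C_1x=0$ there, yields $C_2x\otimes D_2y=0$ for all $x\in\ker(C_1)$, $y\in\Im(D_1)$; on the first summand (where $D_1y=0$) it yields $C_2x\otimes D_2y=0$ for all $x\in\Im(C_1)$, $y\in\ker(D_1)$. Since a whole product set maps to $0$, each condition forces one of its factors to vanish identically: from the first, either $\Im(C_2)\subset\Im(C_1)$ or $\Im(D_1)\perp\Im(D_2)$; from the second, either $\Im(C_1)\perp\Im(C_2)$ or $\Im(D_2)\subset\Im(D_1)$.

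Crossing the two dichotomies, the two combinations whose selected conditions both constrain a single factor force $C_2=0$ (resp.\ $D_2=0$) and are excluded by independence, so only two genuine cases remain. In the first, $\Im(C_2)\subset\Im(C_1)$ and $\Im(D_2)\subset\Im(D_1)$, so Lemma \ref{lemmaseparable} applies verbatim and produces the desired minimal separable decomposition. In the remaining case $\Im(C_1)\perp\Im(C_2)$ and $\Im(D_1)\perp\Im(D_2)$; here I would test $A$ against $x\otimes y$ with $x\in\Im(C_2)\subset\ker(C_1)$ and $y\in\Im(D_2)\subset\ker(D_1)$, obtaining $\langle C_2x,x\rangle\langle D_2y,y\rangle\geq0$. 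Because $C_2,D_2\neq0$, this sign constraint forces $C_2$ to be definite on $\Im(C_2)$ and $D_2$ to be definite on $\Im(D_2)$ with matching signs; hence, after possibly replacing $(C_2,D_2)$ by $(-C_2,-D_2)$ (which leaves $C_2\otimes D_2$ unchanged), both factors are positive semidefinite and $A=C_1\otimes D_1+C_2\otimes D_2$ is already a minimal separable decomposition.

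The main obstacle is the orthogonal-images case, where Lemma \ref{lemmaseparable} does not apply and one must instead exploit $A\succeq0$ directly to pin down the signs of the second pair of factors. The delicate point is ruling out that $C_2$ or $D_2$ is indefinite on its own range, which is precisely where the inequality $\langle C_2x,x\rangle\langle D_2y,y\rangle\geq0$ — valid for all relevant $x,y$ simultaneously — is essential. Everything else is a direct application of the already-established lemmas, and minimality is automatic throughout, since $A$ has tensor rank exactly $2$ and therefore no two-term decomposition of it can have linearly dependent factors on either side.
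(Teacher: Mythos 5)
Your proposal is correct in substance, but it takes a genuinely different route from the paper's proof. The paper argues through its weak-irreducibility machinery: for $\lambda_1=\lambda_2$ it combines Theorem \ref{theoremtensorrank2PPT} with Corollary \ref{corollarytheomiracle1}; for $\lambda_1>\lambda_2$ it makes the leading pair positive semidefinite via item a) of Theorem \ref{splitdecompositionPPT}, and then splits into the case where $A$ is weak irreducible (Theorem \ref{conditionsirreducible2} yields the image inclusions, Lemma \ref{lemmaseparable} concludes) and the case where it is not (tensor-rank additivity across the split decomposition forces $A$ to be a sum of two positive semidefinite tensor-rank-one matrices). You arrive at the same two terminal configurations --- nested images, finished by Lemma \ref{lemmaseparable}, and orthogonal images, where $A=C_1\otimes D_1+C_2\otimes D_2$ is already separable --- but you extract the dichotomy directly and elementarily from Lemma \ref{new3}: the product conditions $(C_2x)\otimes(D_2y)=0$ do force, on each side, either an image inclusion or an image orthogonality, the mixed combinations being excluded by linear independence; and your sign argument in the orthogonal case is sound, since a Hermitian matrix whose quadratic form vanishes identically on its own image is zero, so $C_2$ and $D_2$ are each semidefinite, with matching signs by the inequality $\langle C_2x,x\rangle\langle D_2y,y\rangle\geq0$. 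This bypasses the notion of weak irreducibility, Theorem \ref{conditionsirreducible2}, and item b) of Theorem \ref{splitdecompositionPPT} entirely, which is a real simplification.

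One wrinkle must be repaired. Lemma \ref{new3} is stated and proved for a genuine Hermitian \emph{Schmidt} decomposition: orthogonality of the first factors is what gives $\Im(\gamma_j)\otimes\ker(\delta_j)\subset\ker(A)$, and orthogonality of the second factors is what gives $\ker(\gamma_j)\otimes\Im(\delta_j)\subset\ker(A)$. The decomposition you import from the proof of Theorem \ref{theoremtensorrank2PPT} has only the first property: there $\{C_1,C_2\}$ is an orthonormal basis of $supp_1(A)$, but $D_i=G_A(C_i)$, and nothing in that proof shows $D_1\perp D_2$ (the paper's own label ``Hermitian Schmidt decomposition'' at that point is loose; the looseness is harmless there but not for you, since with one-sided orthogonality you obtain only one of your two dichotomies, which does not suffice for the case analysis). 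The fix uses exactly the paper's ingredients: take a true Hermitian Schmidt decomposition $\lambda_1\gamma_1\otimes\delta_1+\lambda_2\gamma_2\otimes\delta_2$; if $\lambda_1=\lambda_2$, Theorem \ref{theoremtensorrank2PPT} and Corollary \ref{corollarytheomiracle1} already produce a minimal separable decomposition; if $\lambda_1>\lambda_2$, Lemma \ref{lemmapositive} (or Theorem \ref{theoremtensorrank2PPT} together with item a) of Theorem \ref{splitdecompositionPPT}) shows $\gamma_1\otimes\delta_1$ is positive semidefinite, hence after a simultaneous sign change both $\gamma_1$ and $\delta_1$ are positive semidefinite, and orthogonality on both sides is preserved because the new factors are proportional to the old ones. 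With that starting point, the rest of your argument goes through verbatim.
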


 \begin{proof}
 Let 
 $A=\lambda_1\gamma_1\otimes\delta_1+ \lambda_2\gamma_2\otimes\delta_2$ be an Hermitian Schmidt decomposition of $A$ with $\lambda_1\geq\lambda_2>0$.
 
 First suppose $\lambda_1=\lambda_2$. By theorem \ref{theoremtensorrank2PPT}, $A$ is PPT then $\frac{1}{\lambda_1}A$ has  a minimal separable decomposition by theorem \ref{corollarytheomiracle1}. 
 
 Now let us suppose $\lambda_1>\lambda_2$.

 Since $A$ is PPT, by item a) of  theorem \ref{splitdecompositionPPT}, we  know that $\gamma_1\otimes\delta_1=\gamma_1'\otimes\delta_1'$ such that $\gamma_1',\delta_1'$ are positive semidefinite. 
 
 If $A$ is weak  irreducible then $\Im(\gamma_2)\subset\Im(\gamma_1')$ and $\Im(\delta_2)\subset\Im(\delta_1')$, by theorem \ref{conditionsirreducible2} and by lemma \ref{lemmaseparable} $A$ has a minimal separable decomposition in $M_k\otimes M_m$.

Now suppose that $A$ is not weak irreducible. 
Then $A=A_1+A_2$ where $A_i=(V_i\otimes W_i) A (V_i\otimes W_i)$ and $V_i,W_i$ are described in definition \ref{definitionweakirreducible} and $A_1,A_2\neq 0$.

As pointed out in the proof of theorem \ref{theoremsumofweakirreducible}, we have $tensor\ rank(A)=tensor\ rank(A_1)+\ tensor\  rank(A_2)$. Therefore $2= tensor\ rank(A_1)+tensor\ rank(A_2)$ and $tensor\ rank(A_i)=1$.

However, both $A_1,A_2$ are positive semidefinite with tensor rank 1. So $A$ has a minimal separable decomposition.
 
 \end{proof}

In the following corollary we can see how important the minimal separable decomposition of a positive semidefinite Hermitian matrix in $M_k\otimes M_m$ is, with tensor rank 2, to extend the same result for $M_{k_1}\otimes\ldots\otimes M_{k_n}$.

\begin{corollary} \label{extension}If $A\in M_{k_1}\otimes\ldots\otimes M_{k_n}$ is a positive semidefinite Hermitian matrix with tensor rank smaller or equal to 2 then $A$ is separable.
\end{corollary}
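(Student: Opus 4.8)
The plan is to induct on the number of factors $n$. I would group the tensor product as $M_{k_1}\otimes(M_{k_2}\otimes\cdots\otimes M_{k_n})$ and reduce to the bipartite statement of Theorem \ref{theoreminimalseparabilitytensorrank2}. The base case $n=1$ is immediate, since a positive semidefinite matrix is trivially separable, and the case $n=2$ is exactly Theorem \ref{theoreminimalseparabilitytensorrank2}.

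For the inductive step, write $m=k_2\cdots k_n$ and view $A$ in $M_{k_1}\otimes M_m$ via $M_{k_2}\otimes\cdots\otimes M_{k_n}\cong M_m$. First I would note that coarsening the partition cannot raise the tensor rank: if $A=u_1+u_2$ with each $u_j=C_j^{(1)}\otimes\cdots\otimes C_j^{(n)}$ of tensor rank one, then $u_j=C_j^{(1)}\otimes E_j$ with $E_j=C_j^{(2)}\otimes\cdots\otimes C_j^{(n)}$ of tensor rank one in $M_{k_2}\otimes\cdots\otimes M_{k_n}$, so $A$ has tensor rank at most $2$ in $M_{k_1}\otimes M_m$. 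Depending on whether this bipartite tensor rank equals $2$ (apply Theorem \ref{theoreminimalseparabilitytensorrank2}) or is at most one (where a separable decomposition with positive semidefinite factors is elementary), I obtain in all cases a minimal separable decomposition $A=\sum_{i=1}^r C_i\otimes D_i$ with $r\le 2$, all $C_i\in M_{k_1}$ and $D_i\in M_m$ positive semidefinite, and $\{C_1,\ldots,C_r\}$ linearly independent.

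The crucial step, and the reason the minimality in Theorem \ref{theoreminimalseparabilitytensorrank2} is indispensable, is to bound the tensor rank of each $D_i$ inside $M_{k_2}\otimes\cdots\otimes M_{k_n}$. Equating the two expressions $\sum_{i=1}^r C_i\otimes D_i=\sum_{j=1}^2 C_j^{(1)}\otimes E_j$ and applying item a) of Theorem \ref{theoremprincipal} to the linearly independent set $\{C_1,\ldots,C_r\}$, I get $(D_1,\ldots,D_r)\subset(E_1,E_2)$. Since each $E_j$ has tensor rank one, every $D_i$ lies in the span of two tensor-rank-one matrices and hence has tensor rank at most $2$; moreover each $D_i$ is a positive semidefinite Hermitian matrix. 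By the induction hypothesis each $D_i$ is separable, say $D_i=\sum_\ell F^{(2)}_{i,\ell}\otimes\cdots\otimes F^{(n)}_{i,\ell}$ with positive semidefinite factors, and substituting yields $A=\sum_{i,\ell} C_i\otimes F^{(2)}_{i,\ell}\otimes\cdots\otimes F^{(n)}_{i,\ell}$, the desired separable decomposition.

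The main obstacle is precisely this tensor-rank bound on the $D_i$: without the linear independence of the $C_i$ furnished by minimality, Theorem \ref{theoremprincipal} would not apply, and the $D_i$ could a priori carry large tensor rank, which would break the induction. Everything else is routine bookkeeping, and this is exactly why the emphasis earlier falls on the \emph{minimal} separable decomposition rather than on mere separability in the tensor rank $2$ case.
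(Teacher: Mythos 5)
Your proof is correct and is essentially the paper's own argument: group the factors as $M_{k_1}\otimes M_{k_2\cdots k_n}$, invoke Theorem \ref{theoreminimalseparabilitytensorrank2} to get a minimal separable decomposition, and use the linear independence of the first factors together with item a) of Theorem \ref{theoremprincipal} to conclude that the second factors lie in the span of two tensor-rank-one matrices, hence have tensor rank at most $2$, closing the induction on $n$. The paper merely presents the bipartite-tensor-rank-$1$ and rank-$2$ cases in separate paragraphs, but the substance, including your point that minimality is precisely what makes Theorem \ref{theoremprincipal} applicable, is identical.
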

\begin{proof} 
If $A$ has tensor rank smaller or equal to 2 in $M_{k_1}\otimes\ldots\otimes M_{k_n}$, we can write $A=A_1\otimes A_2\otimes\ldots\otimes A_n+B_1\otimes B_2\otimes \ldots\otimes B_n.$  Thus, $A$ has tensor rank smaller or equal to 2 in $M_{k_1}\otimes M_{k_2...k_n}$

Now, if $A$ has  tensor rank 1 in $M_{k_1}\otimes M_{k_2\ldots k_n}$, let $A=C_1\otimes E_1$ such that $C_1\in M_{k_1}$ and $E_1\in M_{k_2...k_n}$ are positive semidefinite Hermitian matrices. By theorem \ref{theoremprincipal}, item a), $E_1$ is a linear combination of of $A_2\otimes\ldots\otimes A_n$ and $B_2\otimes \ldots\otimes B_n$. Notice that $E_1$ has tensor rank smaller or equal to 2 in $M_{k_2}\otimes\ldots\otimes M_{k_n}$, then by induction on $n$ the result follows.

Now, if $A$ in $M_{k_1}\otimes M_{k_2\ldots k_n}$ has  tensor rank 2, let $C_1\otimes E_1+D_1\otimes E_2$ be  a minimal separable decomposition of $A$ in $M_{k_1}\otimes M_{k_2\ldots k_n}$, by theorem \ref{theoreminimalseparabilitytensorrank2}. Thus, $C_1$ and $D_1$ are linear independent positive semidefinite Hermitian matrices.

By theorem \ref{theoremprincipal}, item a), $E_1$ and $E_2$ are linear combinations of $A_2\otimes\ldots\otimes A_n$ and $B_2\otimes \ldots\otimes B_n$. Thus, $E_1$ and $E_2$
are positive semidefinite hermitan matrices with tensor rank at most 2 in $M_{k_2}\otimes\ldots\otimes M_{k_n}$. Therefore by induction on $n$ the result follows.

\end{proof}

\section {Sharp Inequalities for Separability}

In this section we provide sharp inequalities that ensure separability  for SPC/PPTmatrices. We also proved that in some sense these inequalities are sharp.

\subsection{An Inequality for separability of SPC matrices.}
\indent\\

Since we have reduced the separability problem for SPC matrices to the set of weak irreducibe matrices (proposition \ref{reductionfinal}), we only need to provide a sharp inequality for SPC matrices which are weak irreducible. The idea is to use this inequality in each term of the split decomposition in order to obtain an inequality for an arbitrary SPC matrix.

Recall that we know the format of the Hermitian Schmidt decomposition for a weak irreducible SPC matrices (theorem \ref{conditionsirreducible2}). 

\begin{lemma}\label{lemmahalf} If $\gamma_1,\gamma_i\in M_k$ are orthonormal Hermitian matrices such that $\gamma_1$ is positive semidefinite, $\Im(\gamma_i)\subset\Im(\gamma_1)$ and $\mu$ is the minimal positive eigenvalue of $\gamma_1\otimes \gamma_1$ then $\frac{1}{2\mu}(\gamma_1\otimes \gamma_1)+\gamma_i\otimes\gamma_i$ is separable.
\end{lemma}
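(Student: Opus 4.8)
The plan is to recognize the matrix $M:=\frac{1}{2\mu}(\gamma_1\otimes\gamma_1)+\gamma_i\otimes\gamma_i$ as a positive semidefinite matrix of tensor rank $2$, so that Theorem \ref{theoreminimalseparabilitytensorrank2} applies and yields separability at once. Since $\gamma_1,\gamma_i$ are orthonormal they are in particular linearly independent, hence the two left-hand factors $\tfrac{1}{2\mu}\gamma_1,\gamma_i$ and the two right-hand factors $\gamma_1,\gamma_i$ are each linearly independent sets; by Corollary \ref{corollarytensorrank} the tensor rank of $M$ is exactly $2$. Thus everything reduces to showing that $M$ is positive semidefinite, after which Theorem \ref{theoreminimalseparabilitytensorrank2} concludes.

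To prove $M\geq 0$ I would first confine the problem to the subspace $W:=\Im(\gamma_1)\otimes\Im(\gamma_1)$. Both summands are Hermitian; $\gamma_1\otimes\gamma_1$ has image $W$ and kernel $W^{\perp}$, while the hypothesis $\Im(\gamma_i)\subset\Im(\gamma_1)$ gives $\Im(\gamma_i\otimes\gamma_i)=\Im(\gamma_i)\otimes\Im(\gamma_i)\subset W$ and $\ker(\gamma_i\otimes\gamma_i)\supset W^{\perp}$. Consequently $M$ maps $W$ into $W$, annihilates $W^{\perp}$, and is block diagonal with respect to the splitting $\mathbb{C}^k\otimes\mathbb{C}^k=W\oplus W^{\perp}$; hence $M\geq 0$ if and only if its compression to $W$ is positive semidefinite.

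On $W$ the first term is bounded below by $\tfrac12\,Id$: the nonzero eigenvalues of $\gamma_1\otimes\gamma_1$ are the products of the positive eigenvalues of $\gamma_1$, whose minimum is $\mu$, so $\gamma_1\otimes\gamma_1\geq\mu\,Id$ on $W$ and therefore $\frac{1}{2\mu}(\gamma_1\otimes\gamma_1)\geq\tfrac12\,Id$ on $W$. The crux is to show $\gamma_i\otimes\gamma_i\geq-\tfrac12\,Id$ on all of $\mathbb{C}^k\otimes\mathbb{C}^k$. Its eigenvalues are the products $b_pb_q$ of eigenvalues $b_p$ of $\gamma_i$; a negative such product forces $b_p$ and $b_q$ to have opposite signs and thus to come from distinct eigenvectors, so that $b_p^2+b_q^2\leq tr(\gamma_i^2)=1$, and AM--GM yields $b_pb_q=-|b_p||b_q|\geq-\tfrac{b_p^2+b_q^2}{2}\geq-\tfrac12$, while nonnegative products are trivially $\geq-\tfrac12$. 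Hence $\gamma_i\otimes\gamma_i\geq-\tfrac12\,Id$, and adding the two estimates gives $M\geq 0$ on $W$, completing the verification.

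The main obstacle I anticipate is precisely this eigenvalue estimate for $\gamma_i\otimes\gamma_i$: one must observe that a negative eigenvalue product comes from two distinct eigenvalues of $\gamma_i$, which therefore contribute two separate squared terms to $tr(\gamma_i^2)=1$, so that their product is controlled by $\tfrac12$ via AM--GM. Everything else—the tensor-rank count and the reduction to $W$—is routine once the normalization $tr(\gamma_i^2)=1$ and the identification of $\mu$ as the minimal positive eigenvalue of $\gamma_1\otimes\gamma_1$ are used correctly.
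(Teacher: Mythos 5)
Your proposal is correct and follows essentially the same route as the paper: both reduce the claim to positive semidefiniteness via Theorem \ref{theoreminimalseparabilitytensorrank2}, use $\Im(\gamma_i\otimes\gamma_i)\subset\Im(\gamma_1\otimes\gamma_1)$ to work on the image of $\gamma_1\otimes\gamma_1$ where the first term dominates $\tfrac12\,Id$, and then show the minimal eigenvalue of $\gamma_i\otimes\gamma_i$ is at least $-\tfrac12$ from the normalization $tr(\gamma_i^2)=1$. Your AM--GM estimate on a negative product $b_pb_q$ is just a more elementary phrasing of the paper's step of minimizing the quadratic form $a_1a_2$ on the unit sphere, whose minimum is $-\tfrac12$.
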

\begin{proof}
By theorem \ref{theoreminimalseparabilitytensorrank2}, we only need to prove that $\frac{1}{2\mu}(\gamma_1\otimes \gamma_1)+\gamma_i\otimes\gamma_i$ is positive semidefinite. Notice that the minimal eigenvalue of $\frac{1}{2\mu}(\gamma_1\otimes \gamma_1)$ is $\frac{1}{2}$. Since $\Im(\gamma_i)\subset\Im(\gamma_1)$ then $\Im(\gamma_i\otimes\gamma_i)\subset\Im(\gamma_1\otimes\gamma_1)$. Thus, to prove that $\frac{1}{2\mu}(\gamma_1\otimes \gamma_1)+\gamma_i\otimes\gamma_i$ is positive semidefinite is sufficient to prove that  the minimal eigenvalue of $\gamma_i\otimes\gamma_i$ is greater or equal to $-\frac{1}{2}$.

If $\gamma_i$ is positive semidefinite or negative semidefinite then $\gamma_i\otimes\gamma_i$ is positive semidefinite and the minimal eigenvalue is greater or equal to 0. Thus, suppose that $\gamma_i$ has positive and negative eigenvalues.

Let $a_1,\ldots, a_k$ be the eigenvalues of $\gamma_i$. Since $\gamma_i$ is normalized $a_1^2+\ldots+a_k^2=1$.  We know that these are real numbers and we know that the minimal eigenvalue of $\gamma_i\otimes\gamma_i$ is the product of the maximal eigenvalue (which is positive) by the minimal (which is negative). Suppose it is $a_1a_2$. 

Thus, we want to minimize the quadratic form $f:\mathbb{R}^k\rightarrow\mathbb{R}$, $f(a_1,\ldots,a_k)=a_1a_2$, subject to the restriction $a_1^2+\ldots+a_k^2=1$. We know that this minimal value is the minimal eigenvalue of the real symmetric matrix associated to the quadratic form which is $-\frac{1}{2}$.

\end{proof}

\begin{theorem}\label{inequality1} $($\textbf{The Inequality for SPC matrices}$)$ Let $A\in M_{k}\otimes M_k$ be a weak irreducible SPC matrix. Let $\sum_{i=1}^n\lambda_i\gamma_i\otimes\gamma_i$ be a Hermitian Schmidt decomposition of A with $\lambda_1\geq\ldots\geq\lambda_n>0$.
Let $\mu$ be the least positive eigenvalue of $\gamma_1\otimes\gamma_1$. If
$$\dfrac{\lambda_1\mu}{\lambda_2+...+\lambda_n}\geq \frac{1}{2}$$ 
then $A$ is separable.

\end{theorem}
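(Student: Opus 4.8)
The plan is to use the description of weak irreducible SPC matrices to put $A$ in a convenient form and then transfer separability term by term using lemma \ref{lemmahalf}. Since $A$ is a weak irreducible SPC matrix, theorem \ref{conditionsirreducible2} guarantees that its Hermitian Schmidt decomposition satisfies the three conditions of theorem \ref{conditionsirreducible}; in particular $\lambda_1>\lambda_2\geq\ldots\geq\lambda_n>0$ and $\Im(\gamma_i)\subset\Im(\gamma_1)$ for every $i$. Moreover, applying item a) of theorem \ref{splitdecompositionSPC} to the leading term (here $s=1$) shows that $\gamma_1\otimes\gamma_1=\gamma_1'\otimes\gamma_1'$ with $\gamma_1'$ positive semidefinite, so after replacing $\gamma_1$ by $-\gamma_1$ if necessary (which leaves $\gamma_1\otimes\gamma_1$ unchanged) I may assume $\gamma_1$ itself is positive semidefinite. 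These are exactly the hypotheses needed to invoke lemma \ref{lemmahalf} for each pair $(\gamma_1,\gamma_i)$.

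Next, for each $i$ with $2\leq i\leq n$, lemma \ref{lemmahalf} tells me that $\frac{1}{2\mu}(\gamma_1\otimes\gamma_1)+\gamma_i\otimes\gamma_i$ is separable, hence so is its nonnegative multiple $\lambda_i\left[\frac{1}{2\mu}(\gamma_1\otimes\gamma_1)+\gamma_i\otimes\gamma_i\right]$. Summing over $i=2,\ldots,n$ and using that a sum of separable matrices is separable, I obtain that
$$S=\frac{\lambda_2+\ldots+\lambda_n}{2\mu}(\gamma_1\otimes\gamma_1)+\sum_{i=2}^n\lambda_i\gamma_i\otimes\gamma_i$$
is separable.

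Finally, I compare $S$ with $A=\lambda_1\gamma_1\otimes\gamma_1+\sum_{i=2}^n\lambda_i\gamma_i\otimes\gamma_i$. Subtracting, all off-diagonal terms cancel and I am left with
$$A-S=\left(\lambda_1-\frac{\lambda_2+\ldots+\lambda_n}{2\mu}\right)\gamma_1\otimes\gamma_1.$$
The hypothesis $\frac{\lambda_1\mu}{\lambda_2+\ldots+\lambda_n}\geq\frac{1}{2}$ is precisely the statement that the scalar coefficient $\lambda_1-\frac{\lambda_2+\ldots+\lambda_n}{2\mu}$ is nonnegative. Since $\gamma_1$ is positive semidefinite, $\gamma_1\otimes\gamma_1$ is trivially separable, so $A-S$ is a nonnegative multiple of a separable matrix and hence separable. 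Writing $A=S+(A-S)$ exhibits $A$ as a sum of two separable matrices, so $A$ is separable.

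The argument is essentially bookkeeping once lemma \ref{lemmahalf} is in hand; the one point requiring care is that the leading term carries exactly enough ``mass'' of $\gamma_1\otimes\gamma_1$ to pay the cost $\frac{1}{2\mu}$ incurred by each of the $n-1$ remaining terms, and the stated inequality is precisely the threshold at which this accounting closes. I expect the only subtlety to be the preliminary normalization ensuring $\gamma_1$ is positive semidefinite, which is what licenses the use of lemma \ref{lemmahalf} and the separability of the leftover term.
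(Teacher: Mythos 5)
Your proof is correct and follows essentially the same route as the paper's: both invoke theorem \ref{conditionsirreducible2} for the structure of the decomposition and then apply lemma \ref{lemmahalf} to each term, your splitting $A=S+(A-S)$ being algebraically identical to the paper's one-line decomposition $A=(\lambda_1\mu-\sum_{i=2}^n\frac{\lambda_i}{2})(\frac{1}{\mu}\gamma_1\otimes\gamma_1)+\sum_{i=2}^n\lambda_i(\frac{1}{2\mu}\gamma_1\otimes\gamma_1+\gamma_i\otimes\gamma_i)$. If anything, you are slightly more careful than the paper, which only cites lemma \ref{lemmapositive} for positive semidefiniteness of $\gamma_1\otimes\gamma_1$ and leaves implicit the sign normalization making $\gamma_1$ itself positive semidefinite, a hypothesis lemma \ref{lemmahalf} genuinely needs and which you justify explicitly.
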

\begin{proof}
Since $A$ is weak irreducible, by theorem \ref{conditionsirreducible2}, $\lambda_1>\lambda_i>0$ for $i\neq 1$ and $\Im(\gamma_i)\subset\Im(\gamma_1)$ for every $i$. Then $\Im(\gamma_i\otimes\gamma_i)\subset\Im(\gamma_1\otimes\gamma_1)$.

By lemma \ref{lemmapositive}, $\gamma_1\otimes\gamma_1$ is positive semidefinite.
Now if $\lambda_1\mu\geq\frac{1}{2}(\lambda_2+...+\lambda_n)$ then 
$A=
(\lambda_1 \mu-\sum_{i=2}^n\frac{\lambda_i}{2})(\frac{1}{\mu}\gamma_1\otimes\gamma_1)+\sum_{i=2}^n\lambda_i(\frac{1}{2\mu}\gamma_1\otimes\gamma_1+\gamma_i\otimes\gamma_i).$

All the matrices inside parentheses are separable by lemma \ref{lemmahalf}.

\end{proof}

\subsection{An Inequality for separability of PPT matrices.}
\indent\\

Again, we only need to provide a sharp inequality for weak irreducible PPT matrices. Then we can use this inequality in each term of the split decomposition in order to obtain an inequality for an arbitrary PPT matrix. 

Recall that a necessary condition for separability  of any matrix is to be PPT, then we actually obtained a sharp inequality for separability. 

Again, recall that we know the format of the Hermitian Schmidt decomposition of the weak irreducible PPT matrices (theorem \ref{conditionsirreducible2}).

\begin{theorem}\label{inequality2} $($\textbf{The Inequality for PPT matrices}$)$ Let $A\in M_{k}\otimes M_m$ be a Weak Irreducible PPT matrix. Let $\sum_{i=1}^n\lambda_i\gamma_i\otimes\delta_i$ be a Hermitian Schmidt decomposition of A with $\lambda_1\geq\ldots\geq\lambda_n>0$.
Let $\mu$ be the least positive eigenvalue of $\gamma_1\otimes\delta_1$. If
$$\dfrac{\lambda_1\mu}{\lambda_2+...+\lambda_n}\geq 1$$ 
then $A$ is separable.
\end{theorem}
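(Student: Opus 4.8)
The plan is to mirror the proof of theorem \ref{inequality1}, replacing the coefficient $\frac{1}{2\mu}$ by $\frac{1}{\mu}$ and the eigenvalue bound $-\frac{1}{2}$ by $-1$; this change of threshold is exactly what turns the hypothesis $\frac{\lambda_1\mu}{\lambda_2+\cdots+\lambda_n}\geq\frac12$ into $\frac{\lambda_1\mu}{\lambda_2+\cdots+\lambda_n}\geq 1$. First I would invoke theorem \ref{conditionsirreducible2}: since $A$ is a weak irreducible PPT matrix, $\lambda_1>\lambda_i$ for $i\neq 1$ and $\Im(\gamma_i)\subset\Im(\gamma_1)$, $\Im(\delta_i)\subset\Im(\delta_1)$ for every $i$, whence $\Im(\gamma_i\otimes\delta_i)\subset\Im(\gamma_1\otimes\delta_1)$. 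Because $\lambda_1>\lambda_2$, lemma \ref{lemmapositive} (with $s=1$) shows $\gamma_1\otimes\delta_1$ is positive semidefinite, so the least positive eigenvalue $\mu$ is well defined, and item a) of theorem \ref{splitdecompositionPPT} lets me write $\gamma_1\otimes\delta_1=\gamma_1'\otimes\delta_1'$ with $\gamma_1',\delta_1'$ positive semidefinite, so $\gamma_1\otimes\delta_1$ is itself a separable term.

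The key step is a PPT analog of lemma \ref{lemmahalf}: for each $i\geq 2$, the matrix $\frac{1}{\mu}\gamma_1\otimes\delta_1+\gamma_i\otimes\delta_i$ is separable. By theorem \ref{theoreminimalseparabilitytensorrank2} it suffices to prove it is positive semidefinite, since its tensor rank is at most $2$. The image containment above gives $\ker(\gamma_1\otimes\delta_1)\subset\ker(\gamma_i\otimes\delta_i)$ (both matrices are Hermitian, so $\ker=\Im^{\perp}$), and $\gamma_i\otimes\delta_i$ maps $\Im(\gamma_1\otimes\delta_1)$ into itself; hence the associated quadratic form decouples and positivity need only be checked on $\Im(\gamma_1\otimes\delta_1)$, where $\frac{1}{\mu}\gamma_1\otimes\delta_1$ has all eigenvalues $\geq 1$.

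It then remains to bound the least eigenvalue of $\gamma_i\otimes\delta_i$ below by $-1$, and this is the main obstacle. Since $\gamma_i\in M_k$ and $\delta_i\in M_m$ are normalized, their eigenvalues $a_1,\ldots,a_k$ and $b_1,\ldots,b_m$ satisfy $\sum_p a_p^2=\sum_q b_q^2=1$, so $|a_p|\leq 1$ and $|b_q|\leq 1$; the eigenvalues of $\gamma_i\otimes\delta_i$ are the products $a_pb_q\in[-1,1]$. Thus for $x\in\Im(\gamma_1\otimes\delta_1)$ we get $\langle x,(\frac{1}{\mu}\gamma_1\otimes\delta_1+\gamma_i\otimes\delta_i)x\rangle\geq(1-1)\|x\|^2=0$, proving positive semidefiniteness. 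Note the bound here is $-1$, not $-\frac{1}{2}$ as in the SPC case, precisely because $\gamma_i$ and $\delta_i$ are independent normalized Hermitian matrices rather than a single one.

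Finally I would assemble the separable decomposition
$$A=\Bigl(\lambda_1\mu-\sum_{i=2}^n\lambda_i\Bigr)\Bigl(\frac{1}{\mu}\gamma_1\otimes\delta_1\Bigr)+\sum_{i=2}^n\lambda_i\Bigl(\frac{1}{\mu}\gamma_1\otimes\delta_1+\gamma_i\otimes\delta_i\Bigr),$$
whose coefficient of $\gamma_1\otimes\delta_1$ totals $\lambda_1$ and whose coefficient of each $\gamma_i\otimes\delta_i$ equals $\lambda_i$, so it reproduces the Hermitian Schmidt decomposition of $A$. The hypothesis $\frac{\lambda_1\mu}{\lambda_2+\cdots+\lambda_n}\geq 1$ makes the leading coefficient $\lambda_1\mu-\sum_{i=2}^n\lambda_i$ nonnegative, so the first term is a nonnegative multiple of the separable matrix $\gamma_1\otimes\delta_1=\gamma_1'\otimes\delta_1'$, while each remaining term is separable by the key step. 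Since a sum of separable matrices is separable, $A$ is separable.
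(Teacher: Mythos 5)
Your proposal is correct and follows essentially the same route as the paper: it invokes theorem \ref{conditionsirreducible2} for the image containments, lemma \ref{lemmapositive} for positive semidefiniteness of $\gamma_1\otimes\delta_1$, bounds the least eigenvalue of $\gamma_i\otimes\delta_i$ by $-1$ via $tr(\gamma_i^2)=tr(\delta_i^2)=1$, applies theorem \ref{theoreminimalseparabilitytensorrank2} to each tensor-rank-2 piece, and assembles the identical convex decomposition of $A$. Your explicit kernel/image decoupling argument merely spells out a step the paper leaves implicit, so there is nothing substantive to add.
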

\begin{proof}
Since $A$ is weak irreducible, by theorem \ref{conditionsirreducible2}, $\lambda_1>\lambda_i>0$ for $i\neq 1$ and $\Im(\gamma_i)\subset\Im(\gamma_1)$ and $\Im(\delta_i)\subset\Im(\delta_1)$ for every $i$. Then $\Im(\gamma_i\otimes\delta_i)\subset\Im(\gamma_1\otimes\delta_1)$.

 By lemma \ref{lemmapositive}, $\gamma_1\otimes\delta_1$ is positive semidefinite. Notice that the least positive eigenvalue of  $\frac{1}{\mu}\gamma_1\otimes\delta_1$ is $1$ and since $tr(\gamma_i^2)=tr(\delta_i^2)=1$ the least eigenvalue of $\gamma_i\otimes\delta_i$ is greater or equal to -1.  Thus, $\frac{1}{\mu}\gamma_1\otimes\delta_1+\gamma_i\otimes\delta_i$ is positive semidefinite and by theorem \ref{theoreminimalseparabilitytensorrank2} is separable.
Now if $\lambda_1\mu\geq \lambda_2+...+\lambda_n$ then 
$A=
(\lambda_1 \mu-\sum_{i=2}^n\lambda_i)(\frac{1}{\mu}\gamma_1\otimes\delta_1)+\sum_{i=2}^n\lambda_i(\frac{1}{\mu}\gamma_1\otimes\delta_1+\gamma_i\otimes\delta_i).$

Notice that all the matrices inside parentheses are separable.

\end{proof}

\subsection{The Inequalities are Sharp.}
\indent\\

Let us prove that the first inequality, in theorem \ref{inequality1} is sharp, comparing it to an inequality proved in \cite{leinaas}. Let $\gamma_1, \gamma_2, \gamma_3, \gamma_4$ be the orthonormal Pauli's basis of $M_2$, where $\gamma_1=\frac{1}{\sqrt{2}} Id$.

The authors of \cite{leinaas} showed that a matrix
$$ \gamma_1\otimes\gamma_1+d_2\gamma_2\otimes\gamma_2+d_3\gamma_3\otimes\gamma_3+d_4\gamma_4\otimes\gamma_4$$
is separable if and only if $|d_2|+|d_3|+|d_4|\leq 1.$

Thus, a matrix $A=\lambda_1\gamma_1\otimes\gamma_1+\lambda_2\gamma_2\otimes\gamma_2+\lambda_3\gamma_3\otimes\gamma_3+\lambda_4\gamma_4\otimes\gamma_4$ such that $\lambda_i>0$, for every $i$, is separable  if and only if $|\frac{\lambda_2}{\lambda_1}|+|\frac{\lambda_3}{\lambda_1}|+|\frac{\lambda_4}{\lambda_1}|\leq 1.$ This is equivalent to $\dfrac{\lambda_1\mu}{\lambda_2+\lambda_3+\lambda_4}\geq \frac{1}{2}$, where $\mu=\frac{1}{2}$ is the least positive eigenvalue of $\gamma_1\otimes\gamma_1$.

Thus, we obtained the same inequality which is a necessary and sufficient condition for separability for this family of matrices.\\

Now let us prove that we can not replace $1$ in the inequality of theorem \ref{inequality2},  $\dfrac{\lambda_1\mu}{\lambda_2+...+\lambda_n}\geq 1$, by any $l$ such that $0<l<1$ and still obtain a sufficient condition for separability. In this sense this inequality is sharp.\\\\
Consider the matrix $A(n)\in M_{n+1\times n+1}\otimes M_{n+1\times n+1}$ defined by\\ $$A(n)=\lambda_1\gamma_1\otimes\gamma_1+\lambda_2\gamma_2\otimes(-\gamma_2)$$
such that $\lambda_1,\lambda_2>0$ and\\\\
$\gamma_1= 
$$\left[\begin{array}{ccllrr}
\frac{1}{\sqrt{n+1}}&0&\ldots&0\\
 0  & \frac{1}{\sqrt{n+1}} &\ldots&0\\
 \vdots & \vdots &\ddots&\vdots\\
 0&0&\ldots&\frac{1}{\sqrt{n+1}}\\
 \end{array}\right]$$
$,
$\gamma_2=
$$ \left[\begin{array}{ccllrr}
\frac{n}{\sqrt{n^2+n}}&0&\ldots&0\\
 0  & \frac{-1}{\sqrt{n^2+n}} &\ldots&0\\
 \vdots & \vdots &\ddots&\vdots\\
 0&0&\ldots&\frac{-1}{\sqrt{n^2+n}}\\
 \end{array}\right].$$\\\\
$

Notice that $tr(\gamma_1\gamma_2)=0$ and $tr(\gamma_1\gamma_1)=tr(\gamma_2\gamma_2)=1$. 
 
Since $A(n)$ has tensor rank two, a necessary and sufficient condition for its separability is positive semidefiniteness $($See theorem \ref{theoreminimalseparabilitytensorrank2}$)$, but this is equivalent to $\dfrac{\lambda_1}{\lambda_2}\dfrac{1}{n+1}\geq\dfrac{n^2}{n^2+n}$. 

The least eigenvalue of $\gamma_1\otimes \gamma_1$ is $\mu=\dfrac{1}{n+1}$. Therefore a necessary and sufficient condition for separability of $A(n)$ is $\dfrac{\lambda_1 \mu}{\lambda_2}\geq\dfrac{n^2}{n^2+n}$.
Thus for every $l$ $(0<l<1)$ we obtain a matrix $A(n)$, such that, a necessary and sufficient condition for its separability is $\dfrac{\lambda_1 \mu}{\lambda_2}\geq\dfrac{n^2}{n^2+n}>l$. 

Thus, we can not replace 1, in the inequality $\dfrac{\lambda_1 \mu}{\lambda_2+...+\lambda_n}\geq 1$, by any $l$ satisfing $0<l<1$ and still obtain a sufficient condition for separability.\\

\textbf{Acknowledgement}
\noindent  D. Cariello was supported by CNPq-Brazil Grant 245277/2012-9. 

\begin{bibdiv}
\begin{biblist}

\bib{chen1}{article}{
 
   author={Chen, Lin},
   author={{\Dbar}okovi{\'c}, Dragomir {\v{Z}}.},
   title={Distillability and PPT entanglement of low-rank quantum states},
   journal={J. Phys. A},
   volume={44},
   date={2011},
   number={28},
   pages={285303, 26},
   issn={1751-8113},
   review={\MR{2812348 (2012f:81029)}},
   doi={10.1088/1751-8113/44/28/285303},
}

\bib{chen}{article}{
   author={Chen, Lin},
   author={{\Dbar}okovi{\'c}, Dragomir {\v{Z}}.},
   title={Corrigendum: Distillability and PPT entanglement of low-rank
   quantum states [MR2812348]},
   journal={J. Phys. A},
   volume={45},
   date={2012},
   number={5},
   pages={059501, 2},
   issn={1751-8113},
   review={\MR{2878035}},
   doi={10.1088/1751-8113/45/5/059501},
}

\bib{chen2}{article}{
  title = {Qubit-qudit states with positive partial transpose},
  author={Chen, Lin},
   author={{\Dbar}okovi{\'c}, Dragomir {\v{Z}}.},
  journal = {Phys. Rev. A},
  volume = {86},
  issue = {6},
  pages = {062332},
  numpages = {10},
  year = {2012},
  month = {Dec},
  doi = {10.1103/PhysRevA.86.062332},
  url = {http://link.aps.org/doi/10.1103/PhysRevA.86.062332},
  publisher = {American Physical Society}
}

\bib{choi}{article}{
   author={Choi, Man Duen},
   title={Completely positive linear maps on complex matrices},
   journal={Linear Algebra and Appl.},
   volume={10},
   date={1975},
   pages={285--290},
   review={\MR{0376726 (51 \#12901)}},
}

\bib{shao}{article}{
   author={Fei, Shao-Ming},
   author={Jing, Naihuan},
   author={Sun, Bao-Zhi},
   title={Hermitian tensor product approximation of complex matrices and
   separability},
   journal={Rep. Math. Phys.},
   volume={57},
   date={2006},
   number={2},
   pages={271--288},
   issn={0034-4877},
   review={\MR{2227010 (2007f:47061)}},
   doi={10.1016/S0034-4877(06)80021-2},
}

\bib{gurvits}{article}{
   author={Gurvits, Leonid},
   title={Classical deterministic complexity of Edmond's problem and quantum
   entanglement},
   conference={
      title={Proceedings of the Thirty-Fifth Annual ACM Symposium on Theory
      of Computing},
   },
   book={
      publisher={ACM},
      place={New York},
   },
   date={2003},
   pages={10--19 (electronic)},
   review={\MR{2121068 (2005k:68092)}},
   doi={10.1145/780542.780545},
}

\bib{herbut}{article}{
   author={Herbut, F.},
   title={Hermitian Schmidt decomposition and twin observables of bipartite
   mixed states},
   journal={J. Phys. A},
   volume={35},
   date={2002},
   number={7},
   pages={1691--1708},
   issn={0305-4470},
   review={\MR{1891620 (2003a:81017)}},
   doi={10.1088/0305-4470/35/7/314},
}

\bib{Phorodecki}{article}{
   author={Horodecki, Pawe{\l}},
   title={Separability criterion and inseparable mixed states with positive
   partial transposition},
   journal={Phys. Lett. A},
   volume={232},
   date={1997},
   number={5},
   pages={333--339},
   issn={0375-9601},
   review={\MR{1467418 (98g:81018)}},
   doi={10.1016/S0375-9601(97)00416-7},
}

\bib{horodeckifamily}{article}{
   author={Horodecki, Micha{\l}},
   author={Horodecki, Pawe{\l}},
   author={Horodecki, Ryszard},
   title={Separability of mixed states: necessary and sufficient conditions},
   journal={Phys. Lett. A},
   volume={223},
   date={1996},
   number={1-2},
   pages={1--8},
   issn={0375-9601},
   review={\MR{1421501 (97k:81009)}},
   doi={10.1016/S0375-9601(96)00706-2},
}

\bib{kraus}{article}{
   author={Kraus, B.},
   author={Cirac, J. I.},
   author={Karnas, S.},
   author={Lewenstein, M.},
   title={Separability in $2\times N$ composite quantum systems},
   journal={Phys. Rev. A (3)},
   volume={61},
   date={2000},
   number={6},
   pages={062302, 10},
   issn={1050-2947},
   review={\MR{1767463 (2001c:81018)}},
   doi={10.1103/PhysRevA.61.062302},
}

\bib{Jamiol􏰘kowski}{article}{
   author={Jamio{\l}kowski, A.},
   title={Linear transformations which preserve trace and positive
   semidefiniteness of operators},
   journal={Rep. Mathematical Phys.},
   volume={3},
   date={1972},
   number={4},
   pages={275--278},
   issn={0034-4877},
   review={\MR{0342537 (49 \#7283)}},
}

\bib{leinaas}{article}{
   author={Leinaas, Jon Magne},
   author={Myrheim, Jan},
   author={Ovrum, Eirik},
   title={Geometrical aspects of entanglement},
   journal={Phys. Rev. A (3)},
   volume={74},
   date={2006},
   number={1},
   pages={012313, 13},
   issn={1050-2947},
   review={\MR{2255390 (2007f:81040)}},
   doi={10.1103/PhysRevA.74.012313},
}

\bib{marvin}{article}{
   author={Marcus, Marvin},
   author={Moyls, B. N.},
   title={Transformations on tensor product spaces},
   journal={Pacific J. Math.},
   volume={9},
   date={1959},
   pages={1215--1221},
   issn={0030-8730},
   review={\MR{0108503 (21 \#7219)}},
}

\bib{peres}{article}{
   author={Peres, Asher},
   title={Separability criterion for density matrices},
   journal={Phys. Rev. Lett.},
   volume={77},
   date={1996},
   number={8},
   pages={1413--1415},
   issn={0031-9007},
   review={\MR{1401726 (97d:82004)}},
   doi={10.1103/PhysRevLett.77.1413},
}

\bib{simon}{book}{
   author={Reed, Michael},
   author={Simon, Barry},
   title={Methods of modern mathematical physics. I},
   edition={2},
   note={Functional analysis},
   publisher={Academic Press Inc. [Harcourt Brace Jovanovich Publishers]},
   place={New York},
   date={1980},
   pages={xv+400},
   isbn={0-12-585050-6},
   review={\MR{751959 (85e:46002)}},
}

\end{biblist}
\end{bibdiv}

\medskip

\noindent\lbrack Daniel Cariello] Faculdade de
Matem\'{a}tica, Universidade Federal de Uberl\^{a}ndia, 38.400-902,
Uberl\^{a}ndia, Brasil\\
Facultad de Ciencias Matem\'{a}ticas, Plaza de Ciencias 3, Universidad Complutense de Madrid, Madrid, 28040, Spain.\\
e-mails: dcariello@famat.ufu.br, dcariell@ucm.es.

\end{document}